\newtheorem{rk}{Remark}
\newtheorem{thm}{Theorem}
\newtheorem{lm}{Lemma}
\newtheorem{cor}{Corollary}
\newtheorem{de}{Definition}
\begin{document}
%
% paper title
% Titles are generally capitalized except for words such as a, an, and, as,
% at, but, by, for, in, nor, of, on, or, the, to and up, which are usually
% not capitalized unless they are the first or last word of the title.
% Linebreaks \\ can be used within to get better formatting as desired.
% Do not put math or special symbols in the title.
\title{Distributed interaction between computer virus and patch: A modeling study}
%
%
% author names and IEEE memberships
% note positions of commas and nonbreaking spaces ( ~ ) LaTeX will not break
% a structure at a ~ so this keeps an author's name from being broken across
% two lines.
% use \thanks{} to gain access to the first footnote area
% a separate \thanks must be used for each paragraph as LaTeX2e's \thanks
% was not built to handle multiple paragraphs
%

\author{Lu-Xing Yang,~\IEEEmembership{Member,~IEEE,}
        Xiaofan Yang,~\IEEEmembership{Member,~IEEE,}
        Qingyi Zhu, and
        Chenquan Gan
        
       % <-this % stops a space
\thanks{L.-X. Yang is with the School of Information Technology, Deakin University, VIC 3125, Australia e-mail: ylx910920@gmail.com.}
\thanks{X. Yang is with the School of Software Engineering, Chongqing University, 400044, P.R. China e-mail: xfyang1964@gmail.com.}
\thanks{Q. Zhu and C. Gan are with School of Cyber Security and Information Law, Chongqing University of Posts and Telecommunications,
Chongqing 400065, P.R. China
	e-mails: zhuqy@cqupt.edu.cn, gcq2010cqu@163.com}
% <-this % stops a space
% <-this % stops a space
}

\maketitle

% As a general rule, do not put math, special symbols or citations
% in the abstract or keywords.
\begin{abstract}
	
	The decentralized patch distribution mechanism holds significant promise as an alternative to its centralized counterpart. For the purpose of accurately evaluating the performance of the decentralized patch distribution mechanism and based on the exact SIPS model that accurately captures the average dynamics of the interaction between viruses and patches, a new virus-patch interacting model, which is known as the generic SIPS model, is proposed. This model subsumes the linear SIPS model. The dynamics of the generic SIPS model is studied comprehensively. In particular, a set of criteria for the final extinction or/and long-term survival of viruses or/and patches are presented. Some conditions for the linear SIPS model to accurately capture the average dynamics of the virus-patch interaction are empirically found. As a consequence, the linear SIPS model can be adopted as a standard model for assessing the performance of the distributed patch distribution mechanism, provided the proper conditions are satisfied.  

\end{abstract}

% Note that keywords are not normally used for peerreview papers.
\begin{IEEEkeywords}
	networks, epidemiology, stochastic processes, nonlinear systems, stability
\end{IEEEkeywords}

% For peer review papers, you can put extra information on the cover
% page as needed:
% \ifCLASSOPTIONpeerreview
% \begin{center} \bfseries EDICS Category: 3-BBND \end{center}
% \fi
%
% For peerreview papers, this IEEEtran command inserts a page break and
% creates the second title. It will be ignored for other modes.
\IEEEpeerreviewmaketitle

\section{Introduction}
% The very first letter is a 2 line initial drop letter followed
% by the rest of the first word in caps.
%
% form to use if the first word consists of a single letter:
% \IEEEPARstart{A}{demo} file is ....
%
% form to use if you need the single drop letter followed by
% normal text (unknown if ever used by the IEEE):
% \IEEEPARstart{A}{}demo file is ....
%
% Some journals put the first two words in caps:
% \IEEEPARstart{T}{his demo} file is ....
%
% Here we have the typical use of a "T" for an initial drop letter
% and "HIS" in caps to complete the first word.
It has long been a hotspot of research in the cybersecurity field to improve the defensive capability of computer networks against digital viruses \cite{Obaidat2007}. Some fast-spreading viruses, such as worms, could infect a significant fraction of hosts in the Internet within only a matter of minutes \cite{ZouC2007, WenS2013, WangY2014, WenS2014}. With the proliferation of mobile devices such as smart phones and tablet PCs, the threat posed by electronic viruses is becoming increasingly serious \cite{WangP2009, GaoC2013, PengS2014}. Consequently, a rapid response mechanism to the sprawl of computer infections is in demand.

As the natural enemy of digital viruses, virus patches are developed to detect and clean up the viruses stayed in each single infected host. To put a new patch to use, it must be copied and rapidly distributed to all hosts in the network. The conventional centralized patch distribution mechanism, which requires all clients to download patches from the server, suffers from a serious bottleneck at the server end \cite{Grimes2001, Kara2001}. Furthermore, the bottleneck cannot be well relieved by adding more servers/caches \cite{Grimes2001, Kara2001}. Additionally, the centralized patch distribution scheme is vulnerable to the denial-of-service (DoS) attack, which could be launched by virus writers in order to delay patch distribution. 

To circumvent the bottleneck brought about by the centralized patch distribution mechanism, in the early 21st century, Grimes \cite{Grimes2001} and Kara \cite{Kara2001} suggested the possibility of distributing patches in a decentralized way. Later, Toyoizumi and Kara \cite{Toyoizumi2002} made a preliminary analysis of the performance of the decentralized patch distribution approach. Gupta and DuVarney \cite{Gupta2004} empirically found that the decentralized distribution mechanism is effective, without suffering from bottlenecks or causing network congestion. Thereby they declared that the decentralized patch distribution mechanism holds significant promise as an alternative to its centralized counterpart. Liljenstam and Nicol \cite{Liljenstam2004} noticed that, before the decentralized patch distribution scheme is applicable to real scenarios, some legal, ethical and technical issues must be well addressed. Based on a detailed analysis, Weaver and Ellis \cite{Weaver2006} convincingly declared that the decentralized distributed scheme must be supported by the patch-management systems deployed on the hosts rather than exploiting vulnerabilities of the hosts. These pioneer researches promoted an extensive modeling study of the decentralized patch distribution approach \cite{Tamimi2006a, Tanachaiwiwat2009, WangFW2013, YangLX2013, Mishra2014, YangLX2015a}. However, the models proposed in these references have two common defects: (a) the influence of the structure of the underlying network is ignored partly or even completely, (b) the linearly growing infecting/patching rates overestimate the actual rates. As a result, the resulting assessments of the distributed patch distribution mechanism are highly doubtful. 

By applying the  continuous-time Markov chain technique to the basic assumptions about the Susceptible-Infected-Susceptible (SIS) virus, Van Mieghem et. al. \cite{Mieghem2009} formulated an epidemic model known as the exact SIS model, because it accurately captures the average dynamics of the SIS virus. In particular, the model takes into full consideration the impact of the structure of the network on the virus prevalence. Due to extremely high dimensionality, the exact SIS model is not mathematically tractable. To overcome this defect, Van Mieghem et. al. \cite{Mieghem2009} suggested a node-level SIS model with linear infecting rate (the node-level linear SIS model) as an approximation of the exact SIS model. This model is much easier to analyze than the exact SIS model, at the cost of accuracy \cite{Mieghem2011}. Subsequently, a number of node-level epidemic models have been suggested \cite{Guo2012, Sahneh2012a, Sahneh2013, Shuai2013, Sahneh2014, YangLX2015b, Santos2015, Selley2015, LiuJ2016, YangLX2016a}. For the purpose of more accurately assessing the decentralized patch distribution approach, recently Yang and Yang \cite{YangLX2016b} proposed a node-level virus-patch interacting model. 

All of the above node-level epidemic models are established based on the assumption that the infecting rates are all linear in the virus occupation probabilities of the nodes in the population. However, the real infecting rates would flatten out rather than growing linearly. As a result, there might be a significant gap between the predicted dynamics and the actual one. To enhance the accuracy of these epidemic models, much effort has been taken. Van Mieghem et. al. \cite{Mieghem2009} empirically found that when the effective infection rate is lower than the epidemic threshold, the node-level SIS model fits well with the exact SIS model. Cator and Van Mieghem \cite{Cator2012} introduced a much more complex approximation of the exact SIS model. Van Mieghem and van de Bovenkamp \cite{Mieghem2015} theoretically estimated the accuracy of the node-level SIS model. 

A feasible approach to the enhancement of the accuracy of node-level epidemic models with linear infecting rates is to replace the linear infection rates with nonlinear ones. Inspired by the active cyber defense proposed by Xu \cite{XuSH2014a}, a number of node-level epidemic models with nonlinear infecting/curing rates, which can be viewed as variants of the node-level SIS model and can be used to assess the performance of the decentralized patch distribution mechanism, have recently been proposed \cite{LuWL2013, XuSH2015, ZhengR2015}. Nevertheless, the results on node-level epidemic models with nonlinear infecting rates are still rare.

For the purpose of more accurately assessing the performance of the decentralized patch dissemination mechanism, in this paper we propose and study a generic virus-patch interacting model. Our main contributions are sketched as follows.

\begin{enumerate}
	
	\item[(a)] By applying the continuous-time Markov chain technique to a basic set of assumptions, a high-dimensional dynamic model, which is known as the exact SIPS model bacause it accurately captures the average interacting dynamics between viruses and patches, is formulated. This model forms the foundation of this work. An equivalent form of the exact SIPS model is derived. Based on the added independence assumption, the exact SIPS model degenerates to a node-level epidemic model with linear infecting/patching rates, which is known as the linear SIPS model. The linear infecting/patching rates in the model more or less overestimate the real infecting/patching rates.
	
	\item[(b)] To more accurately predict the virus-patch interaction, a node-level epidemic model, which is known as the generic SIPS model because the infecting/patching rates satisfy a generic set of conditions, is proposed. The dynamics of the model is studied comprehensively. In particular, a criterion for the final extinction of both viruses and patches, a criterion for the long-term survival of viruses and the final extinction of patches, a criterion for the final extinction of viruses and the long-term survival of patches, and a criterion for the long-term survival of both viruses and patches are presented, respectively. 
	
	\item[(c)]  By means of extensive experiments, some conditions for the linear SIPS model to satisfactorily capture the dynamics of the virus-patch interactions are found. As a consequence, the linear SIPS model can be adopted as a standard model for assessing the performance of the distributed patch distribution mechanism, provided the conditions are satisfied.
	
\end{enumerate}

The subsequent materials of this work are organized as follows. Section 2 formulates the exact SIPS model, derives its equivalent form, describes the linear SIPS model, and presents the generic SIPS model. Section 3 reveals the dynamical properties of the generic SIPS model. Some conditions for the linear SIPS model to satisfactorily capture the dynamics of the virus-patch interactions are found in Section 4. Finally, Section 5 outlines this work and suggests some interesting topics of research.

\section{A generic virus-patch interacting model}

This section aims to establish a generic continuous-time dynamic model capturing the interaction between digital viruses and virus patches.

\subsection{Basic assumptions}

Consider a population of $N$ computing nodes labelled $1, 2, ..., N$. As with the traditional SIPS models \cite{YangLX2015a, YangLX2016b}, at any given time every node in the population is assumed to be in one of three possible states:  \emph{susceptible}, \emph{infected}, and \emph{patched}. A susceptible node is not infected with any virus but is vulnerable to the newest virus, because it hasn't yet acquired the newest patch. An infected node is infected with a virus. A patched node is not infected with any virus and is immune to the newest virus (as well as all the older viruses), because it has acquired the newest patch. Let $X_i(t)$ = 0, 1, and 2 denote that at time $t$ node $i$ is susceptible, infected, and patched, respectively. The state of the population at time $t$ is represented by the vector
$\mathbf{X}(t) = (X_1(t), X_2(t), \cdots, X_N(t))^T$.

Let us impose a basic set of assumptions on the rates of state transitions of nodes as follows, where $1 \leq i, j \leq N$. 

\begin{enumerate}
	\item[(H$_1$)] Due to the infection of viruses, at any time a susceptible node $i$ is infected by an infected node $j$ at rate $\beta_{ij} \geq 0$, which is proportional to the probability with which a virus propagation channel from node $j$ to node $i$ is established, the rate at which node $j$ tries to deliver a virus copy to node $i$ when the virus propagation channel is present, and the probability with which node $i$ accepts a virus copy when it comes. Let $\mathbf{M}_{\beta}= \left[\beta_{ij}\right]_{N \times N}$.
	\item[(H$_2$)] Due to the decentralized distribution of the newest patch, at any time a susceptible node $i$ is patched by a patched node $j$ at rate $\delta_{ij}^{(1)} \geq 0$, which is proportional to the probability that a patch distribution channel from node $j$ to node $i$ is established, the rate at which node $j$ tries to deliver a copy of the newest patch to node $i$ when the patch distribution channel is present, and the probability that node $i$ accepts a copy of the newest patch when it comes. Let $\mathbf{M}_{\delta_1}= \left[\delta_{ij}^{(1)}\right]_{N \times N}$.
	\item[(H$_3$)] Due to the decentralized distribution of the newest patch, at any time an infected node $i$ is patched by a patched node $j$ at rate $\delta_{ij}^{(2)} \geq 0$, which is proportional to the probability that a patch distribution channel from node $j$ to node $i$ is established, the rate at which node $j$ tries to deliver a copy of the newest patch to node $i$ when the patch distribution channel is present, and the probability that node $i$ accepts a copy of the newest patch when it comes. Let $\mathbf{M}_{\delta_2}= \left[\delta_{ij}^{(2)}\right]_{N \times N}$. In what follows, it is always assumed that $\delta_{ij}^{(1)} > 0$ if and only if $\delta_{ij}^{(2)} > 0$.
	\item[(H$_4$)] Due to the system reinstallation, at any time an infected node $i$ becomes susceptible at rate $\gamma_i>0$, which is proportional to the probability that node $i$ is aware of a virus and the average rate at which the system in node $i$ is reinstalled when a virus is perceived. Let  $\mathbf{D}_{\gamma} = diag\left(\gamma_i\right)$.
	\item[(H$_5$)] Due to the patch failure caused by a new virus, a patched node $i$ becomes susceptible at rate $\alpha_i>0$, which is proportional to the rate at which a new virus emerges. Let $\mathbf{D}_{\alpha} = diag\left(\alpha_i\right)$.
\end{enumerate}

Let $V = \{1, 2, \cdots, N\}$. Define the \emph{virus-propagating network}, $G_v = (V, E_v)$, as follows: for any $i,j \in V$, $(j, i) \in E_v$ if and only if $\beta_{ij}> 0$. Define the \emph{patch-distributing network}, $G_p = (V, E_p)$, as follows: for any $i,j \in V$, $(j, i) \in E_p$ if and only if $\delta_{ij}^{(1)} > 0$ (equivalently, $\delta_{ij}^{(2)} > 0$). Then, viruses propagate through $G_v$, while patches are distributed through $G_p$. For technical reasons, we need the following additional assumption.

\begin{enumerate}
	\item[(H$_6$)] $G_v$ and $G_p$ are both strongly connected.
\end{enumerate}

We assume that all the forthcoming virus-patch interacting models comply with these six assumptions.

\begin{figure}[!t]
	\centering
	\includegraphics[width=2.5in]{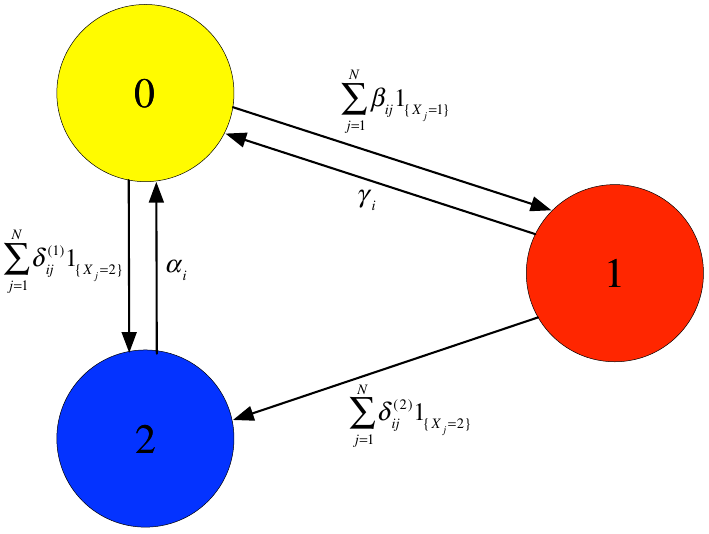}
	\caption{The diagram of state transitions under the exact SIPS model.}
\end{figure}

\subsection{The exact SIPS model}

For fundamental knowledge on continuous-time Markov chain, see Ref. \cite{Stewart2009}. Another way of representing the state of the population at time $t$ is by the decimal number $i(t)=\sum_{k=1}^{N}X_k(t)3^{k-1}$. In this context, there are totally $3^N$ possible population states: $0, 1, \cdots, 3^{N} - 1$. According to the basic assumptions, the infinitesimal generator $\mathbf{Q}=\left[q_{ij}\right]_{3^N \times 3^N}$ for the virus-patch interacting process is given as
\[
q_{ij}=\left\{
\begin{aligned}
\gamma_m    \quad\quad\quad&\quad \text{if} \quad i=j+3^{m-1},\\
& \quad m=1,2,\cdots,N, x_{m}=1 ;\\
\alpha_m  \quad\quad\quad& \quad \text{if} \quad i=j+2\cdot3^{m-1},\\
&  \quad m=1,2,\cdots,N, x_{m}=2; \\
\sum_{k=1}^{N}\beta_{mk}1_{\{x_{k}=1\}}   & \quad \text{if} \quad i=j-3^{m-1},\\
 & \quad m=1,2,\cdots,N, x_{m}=0;  \\
\sum_{k=1}^{N}\delta_{mk}^{(1)}1_{\{x_{k}=2\}}  & \quad \text{if} \quad i=j-2\cdot3^{m-1},\\
 & \quad m=1,2,\cdots,N, x_{m}=0; \\
\sum_{k=1}^{N}\delta_{mk}^{(2)}1_{\{x_{k}=2\}}  & \quad \text{if} \quad i=j-3^{m-1}, \\
 & \quad m=1,2,\cdots,N, x_{m}=1; \\
-\sum_{k=0,k\neq i}^{N-1}q_{ik}  & \quad \text{if}  \quad i=j;\\
 0  \quad \quad &\quad \text{otherwise}.
\end{aligned}
\right.
\]
where $i=\sum_{k=1}^{N}x_{k}3^{k-1}$, $1_A$ stands for the indicator function of set $A$. Let $s_i(t)$ denote the probability that at time $t$ the population is in state $i = \sum_{k=1}^{N}x_{k}3^{k-1}$. That is,
$s_{i}(t)=\Pr\left\{X_{1}(t)=x_{1},\cdots,X_{N}(t)=x_N\right\}$. Then $\mathbf{s}(t)=\left(s_{0}(t),\cdots,s_{3^{N}-1}(t)\right)^T$ obeys
\begin{equation}
	\frac{d\mathbf{s}^T(t)}{dt}=\mathbf{s}^T(t)\mathbf{Q}.
\end{equation}
This continuous-time Markov chain model accurately captures the average dynamics of the interaction between viruses and patches. Therefore, we refer to the model as the \emph{exact SIPS model}. Fig. 1 shows the diagram of state transitions under this model. Although the exact SIPS model is a linear differential system with the known solution $\mathbf{s}^{T}(t)=\mathbf{s}^{T}(0)e^{\mathbf{Q}t}$, its dimensionality grows exponentially with the size of the population, leading to its mathematical intractability.

\subsection{An equivalent form of the exact SIPS model}

Let $I_i(t) = \Pr\{X_i(t) = 1\}$, $P_i(t) = \Pr\{X_i(t) = 2\}$.

\begin{lm}
	The exact SIPS model is equivalent to the model
	
	\begin{equation}
		\left\{
		\begin{aligned}
			\frac{dI_i(t)}{dt} &= \sum_{j = 1}^N \beta_{ij} \Pr\{X_i(t) = 0, X_j(t) = 1\}\\
			&\quad -\sum_{j = 1}^N \delta_{ij}^{(2)} \Pr\{X_i(t) = 1, X_j(t) = 2\} - \gamma_i I_i(t), \\
			\frac{dP_i(t)}{dt} &= \sum_{j = 1}^N \delta_{ij}^{(1)} \Pr\{X_i(t) = 0, X_j(t) = 2\}\\
			&\quad +\sum_{j = 1}^N \delta_{ij}^{(2)} \Pr\{X_i(t) = 1, X_j(t) = 2\} - \alpha_i P_i(t),\\
			& \quad \quad i = 1, 2, \cdots, N.
		\end{aligned}
		\right.    
	\end{equation}
\end{lm}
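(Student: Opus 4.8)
The plan is to obtain system~(2) directly from the master equation~(1) by summing the latter over the appropriate collections of population states; equivalently, by differentiating in time each single-node marginal $I_i(t)=\Pr\{X_i(t)=1\}=\mathbb{E}\bigl[1_{\{X_i(t)=1\}}\bigr]$ and $P_i(t)=\Pr\{X_i(t)=2\}=\mathbb{E}\bigl[1_{\{X_i(t)=2\}}\bigr]$. The tool I would use is the standard continuous-time Markov chain identity (see Ref.~\cite{Stewart2009}): for any function $f$ of the population state,
\[
\frac{d}{dt}\,\mathbb{E}\bigl[f(\mathbf{X}(t))\bigr]=\sum_{x}s_x(t)\,(\mathcal{Q}f)(x),\qquad (\mathcal{Q}f)(x):=\sum_{y\neq x}q_{xy}\bigl[f(y)-f(x)\bigr],
\]
which follows at once from $\dot s_x=\sum_y s_y q_{yx}$ together with $q_{xx}=-\sum_{y\neq x}q_{xy}$. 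Thus it suffices to compute $(\mathcal{Q}f)(x)$ pointwise for the two indicator functions $f=1_{\{x_i=1\}}$ and $f=1_{\{x_i=2\}}$ and then take expectations.

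For $f=1_{\{x_i=1\}}$ I would first observe that, among all the single-node transitions encoded in $\mathbf{Q}$, only those acting on node $i$ can alter $f$. When $x_i=0$, the infection $0\to1$ (rate $\sum_k\beta_{ik}1_{\{x_k=1\}}$) raises $f$ from $0$ to $1$, while the patching $0\to2$ leaves $f$ unchanged; when $x_i=1$, the reinstallation $1\to0$ (rate $\gamma_i$) and the patching $1\to2$ (rate $\sum_k\delta_{ik}^{(2)}1_{\{x_k=2\}}$) each lower $f$ from $1$ to $0$; transitions at any node $j\neq i$ do not change $f$ and so drop out. Collecting the surviving terms with their signs gives
\[
(\mathcal{Q}f)(x)=1_{\{x_i=0\}}\sum_k\beta_{ik}1_{\{x_k=1\}}-1_{\{x_i=1\}}\Bigl(\gamma_i+\sum_k\delta_{ik}^{(2)}1_{\{x_k=2\}}\Bigr),
\]
and taking expectations, using $\mathbb{E}\bigl[1_{\{X_i=0\}}1_{\{X_k=1\}}\bigr]=\Pr\{X_i=0,X_k=1\}$ and likewise for the other product, reproduces the first equation of~(2). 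An entirely parallel computation with $f=1_{\{x_i=2\}}$---whose value is raised by the two patching moves $0\to2$ (rate $\sum_k\delta_{ik}^{(1)}1_{\{x_k=2\}}$) and $1\to2$ (rate $\sum_k\delta_{ik}^{(2)}1_{\{x_k=2\}}$) and lowered by the patch failure $2\to0$ (rate $\alpha_i$)---yields the second equation.

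The computation is routine, and the only real obstacle is disciplined bookkeeping: I must attribute each off-diagonal entry of $\mathbf{Q}$ to the correct node and transition, confirm that every transition at a node $j\neq i$ cancels in $f(y)-f(x)$, and keep the two patching channels strictly separate, since $\delta^{(1)}$ drives $0\to2$ whereas $\delta^{(2)}$ drives $1\to2$; getting this right is exactly what places $\Pr\{X_i=0,X_j=2\}$ and $\Pr\{X_i=1,X_j=2\}$ in their correct positions. Finally I would clarify the sense of ``equivalent'': summing~(1) over the state sets $\{x:x_i=1\}$ and $\{x:x_i=2\}$ discards no information about the single-node marginals, so~(2) is the exact evolution law for those marginals (no approximation is involved), the joint probabilities on its right-hand side being genuine linear functionals of $\mathbf{s}(t)$ rather than closure approximations. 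This is what distinguishes~(2) from the mean-field model introduced later and justifies calling it an equivalent form.
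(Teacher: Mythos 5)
Your proposal is correct, and it reaches system~(2) by a genuinely different (if closely related) route than the paper. You work with the generator: you compute $(\mathcal{Q}f)(x)=\sum_{y\neq x}q_{xy}[f(y)-f(x)]$ pointwise for the indicator functions $f=1_{\{x_i=1\}}$ and $f=1_{\{x_i=2\}}$, note that transitions at nodes $j\neq i$ cancel, and then take expectations against $\mathbf{s}(t)$ using the master equation $\dot{\mathbf{s}}^T=\mathbf{s}^T\mathbf{Q}$. The paper instead conditions on the state at time $t$: it applies the total probability formula to $I_i(t+\Delta t)$, expands each conditional transition probability such as $\Pr\{X_i(t+\Delta t)=1\mid X_i(t)=0\}$ to first order in $\Delta t$, and passes to the limit $\Delta t\to 0$. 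Your identification of the five per-node transition channels and their rates agrees exactly with the entries of $\mathbf{Q}$, so the bookkeeping comes out the same; what your route buys is the elimination of two technical nuisances in the paper's argument, namely the $o(\Delta t)$ accounting and the intermediate division by the conditional probabilities $1-I_i(t)-P_i(t)$ and $I_i(t)$ (which the paper implicitly assumes are nonzero, even though these factors cancel in the end). Your closing remark on the sense of ``equivalent''---that (2) is the exact, approximation-free evolution law for the marginals, with the joint probabilities being linear functionals of $\mathbf{s}(t)$ rather than closures---is consistent with the paper's intent and with its subsequent observation that (2) is not closed; like the paper's proof, yours establishes the derivation of (2) from (1), which is all the lemma is used for.
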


%The proof of this lemma is left to Appendix A. 
\begin{IEEEproof}
	Given a sufficiently small time interval $\Delta t >0$, it follows from the total probability formula that, for $1 \leq i \leq N$,
	\[
	\begin{split}
	I_i(t+\Delta t)&= \left(1 - I_i(t) - P_i(t)\right) \Pr\{X_{i}(t+\Delta t)=1 \mid X_{i}(t)=0\} \\ \nonumber
	& + I_i(t) \Pr\{X_{i}(t+\Delta t)=1 \mid X_{i}(t)=1\} \\
	& + P_i(t) \Pr\{X_{i}(t+\Delta t)=1 \mid X_{i}(t)=2\}. \quad\quad (*)
	\end{split}
	\]
	By the conditional total probability formula and in view of model (1), we get that, for $1 \leq i \leq N$,
	\[
	\begin{split}
	%\nonumber
	&\Pr\{X_{i}(t+\Delta t)=1 \mid X_{i}(t)=0\} \\
	=&  \sum_{\mathbf{x} \in \{0, 1, 2\}^N, x_i = 0} \Pr\{X_{i}(t+\Delta t)=1 \mid X_{i}(t)=0, \mathbf{X}(t) = \mathbf{x}\}\\
	&\cdot \Pr\{\mathbf{X}(t) = \mathbf{x} \mid X_{i}(t)=0\} \\ \nonumber
	%=& \sum_{\mathbf{x} \in \{0, 1, 2\}^N, x_i = 0} \Pr\{X_{i}(t+\Delta t)=1 \mid X_{i}(t)=0, \mathbf{X}(t) = \mathbf{x}\} \\
	% & \cdot \frac{\Pr\{\mathbf{X}(t) = \mathbf{x}\}}{\Pr\{X_i(t)=0\}} \\ \nonumber
	=& \frac{\Delta t}{1 - I_i(t) - P_i(t)} \sum_{\mathbf{x} \in \{0, 1, 2\}^N, x_i = 0} \sum_{j = 1}^N \beta_{ij} 1_{\{x_j=1\}}\Pr\{\mathbf{X}(t) = \mathbf{x}\} \\
	&+ o(\Delta t) \\ \nonumber
	%=& \frac{\Delta t}{1 - I_i(t) - P_i(t)} \cdot \sum_{j = 1}^N \sum_{\mathbf{x} \in \{0, 1, 2\}^N, x_i = 0} \beta_{ij} 1_{\{x_j=1\}} \\
	% & \cdot \Pr\{\mathbf{X}(t) = \mathbf{x}\} + o(\Delta t) \\
	\nonumber
	% =& \frac{\Delta t}{1 - I_i(t) - P_i(t)} \cdot \sum_{j = 1}^N \beta_{ij} \sum_{\mathbf{x} \in \{0, 1, 2\}^N} 1_{\{x_i = 0, x_j=1\}} \\
	% & \cdot \Pr\{\mathbf{X}(t) = \mathbf{x}\} + o(\Delta t) \\ \nonumber
	=& \frac{\Delta t}{1 - I_i(t) - P_i(t)}\sum_{j = 1}^N \beta_{ij} \Pr\{X_i(t) = 0, X_j(t) = 1\} + o(\Delta t). 
	\end{split}
	\]
	
	\noindent Similarly, we have, for $1 \leq i \leq N$,
	\[
	\begin{split}
	&\Pr\{X_{i}(t+\Delta t)=2 \mid X_{i}(t)=1\}\\
	=&\frac{\Delta t}{I_i(t)} \cdot \sum_{j = 1}^N \delta_{ij}^{(2)} \Pr\{X_i(t) = 1, X_j(t) = 2\} + o(\Delta t),
	\end{split}
	\]
	\[
	\begin{split}
	&\Pr\{X_{i}(t+\Delta t)=0 \mid X_{i}(t)=1\} = \gamma_i \Delta t+o(\Delta t).
	\end{split}
	\]
	
	\noindent It follows that, for $1 \leq i \leq N$,
	\[
	\begin{split}
	&\Pr\{X_{i}(t+\Delta t)=1 \mid X_{i}(t)=1\}\\
	%=& 1-\Pr\{X_{i}(t+\Delta t)=2 \mid X_{i}(t)=1\}\\
	% &-\Pr\{X_{i}(t+\Delta t)=0 \mid X_{i}(t)=1\}\\
	=&1-\frac{\Delta t}{I_i(t)} \cdot \sum_{j = 1}^N \delta_{ij}^{(2)} \Pr\{X_i(t) = 1, X_j(t) = 2\} -\gamma_i \Delta t 
	+ o(\Delta t).
	\end{split}
	\]
	
	\noindent Besides, we have, for $1 \leq i \leq N$,
	\[
	\Pr\{X_{i}(t+\Delta t)=1 \mid X_{i}(t)=2\} = o(\Delta t).
	\]
	
	\noindent Substituting these equations into Eqs. (*), rearranging the terms, dividing both sides by $\Delta t$, and letting $\Delta t \rightarrow 0$, we get that, for $1 \leq i \leq N$,
	\[
	\begin{split}
	\frac{dI_i(t)}{dt}& = \sum_{j = 1}^N \beta_{ij} \Pr\{X_i(t) = 0, X_j(t) = 1\}\\
	&\quad-\sum_{j = 1}^N \delta_{ij}^{(2)} \Pr\{X_i(t) = 1, X_j(t) = 2\} - \gamma_i I_i(t).
	\end{split}
	\]
	
	\noindent Similarly, we can derive the last $N$ equations in Lemma 1.
\end{IEEEproof}

This equivalent model is not mathematically treatable, because it is not closed. If one attempted to close the equivalent model by adding joint probability terms, the resulting model would be of dimensionality $3^N$, which is still mathematically intractable.

\subsection{The linear SIPS model}

To simplify the exact SIPS model, let us make an added set of independence assumptions as follows, where $1 \leq i, j \leq N, i \neq j$. 

\begin{enumerate}
	\item[(H$_7$)] $\Pr\{X_i(t) = 0, X_j(t) = 1\} = (1 - I_i(t) - P_i(t))I_j(t).$
	\item[(H$_8$)] $\Pr\{X_i(t) = 0, X_j(t) = 2\} = (1 - I_i(t) - P_i(t))P_j(t).$
	\item[(H$_9$)] $\Pr\{X_i(t) = 1, X_j(t) = 2\} = I_i(t) P_j(t)$.
\end{enumerate}

\noindent Based on model (2) and assumption (H$_7$), we obtain the following model. 
\begin{equation}
	\left\{
	\begin{aligned}
		\frac{dI_i(t)}{dt} &= (1-I_i(t)-P_i(t))\sum_{j = 1}^N \beta_{ij} I_j(t)-I_i(t)\sum_{j = 1}^N \delta_{ij}^{(2)} P_j(t)\\
		&\quad- \gamma_i I_i(t), \\
		\frac{dP_i(t)}{dt} &= (1-I_i(t)-P_i(t))\sum_{j = 1}^N \delta_{ij}^{(1)} P_j(t)+I_i(t)\sum_{j = 1}^N \delta_{ij}^{(2)} P_j(t)\\
		&\quad- \alpha_i P_i(t), \\
		&\quad i = 1, 2, \cdots, N.
	\end{aligned}
	\right.    
\end{equation}

\noindent We refer to this model as the \emph{linear SIPS model}, because (a) the infecting rates, $\sum_{j = 1}^N \beta_{ij} I_j(t)$, are linear in $I_1(t), \cdots, I_N(t)$, and (b) the patching rates, $\sum_{j = 1}^N \delta_{ij}^{(1)} P_j(t)$ and $\sum_{j = 1}^N \delta_{ij}^{(2)} P_j(t)$, are linear in $P_1(t), \cdots, P_N(t)$. Fig. 2 shows the diagram of state transitions under the linear SIPS model.

\begin{figure}[!t]
	\centering
	\includegraphics[width=2.5in]{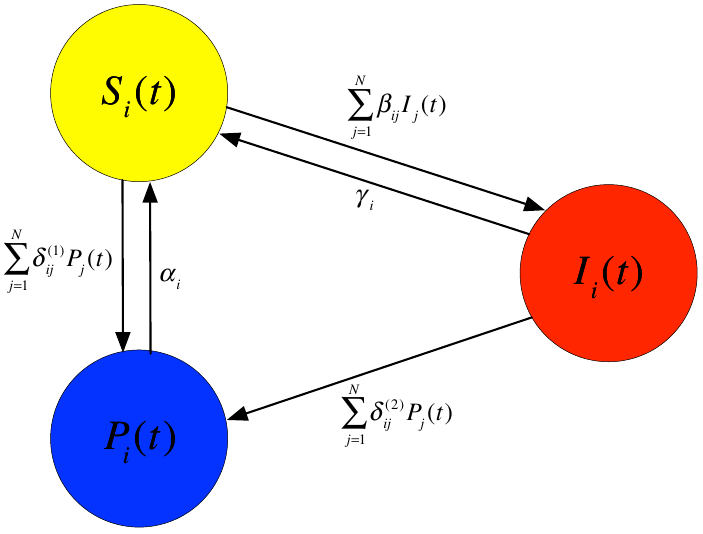}
	\caption{The diagram of state transitions under the linear SIPS model.}
\end{figure}

The linear SIPS model is mathematically tractable. As there is a difference between the linear infecting/patching rates and the actual infecting/patching rates, the dynamics of the model might deviate from the actual average dynamics of virus-patch interaction.

\subsection{The generic SIPS model}

Now, let us consider a more general virus-patch interacting model as follows.
\begin{equation}
	\left\{
	\begin{aligned}
		\frac{dI_i(t)}{dt}&=\left(1-I_i(t)-P_i(t)\right)f_i\left(I_1(t), \cdots, I_N(t)\right)\\
		&\quad -I_i(t)h_i\left(P_1(t), \cdots, P_N(t)\right)-\gamma_iI_i(t),\\
		\frac{dP_i(t)}{dt}&=\left(1-I_i(t)-P_i(t)\right)g_i\left(P_1(t), \cdots, P_N(t)\right) \\
		&\quad+I_i(t)h_i\left(P_1(t), \cdots, P_N(t)\right)-\alpha_iP_i(t), \\
		&\quad \quad i=1,2,\cdots, N.
	\end{aligned}
	\right.
\end{equation}
Here, the infecting/patching rates are assumed to satisfy the following generic conditions, where $1 \leq i, j, k \leq N$.

\begin{enumerate}
	
	\item[(C$_1$)] (Proximity) A susceptible node can and can only be infected by those infected nodes that are its out-neighbors in the virus-propagating network. That is, $f_i$ is dependent upon $x_j$ if and only if $\beta_{ij}> 0$. Likewise, an unpatched node can and can only be patched by those patched nodes that are its out-neighbors in the patch-distributing network. That is, $g_i$ is dependent upon $x_j$ if and only if $\delta_{ij}^{(1)}> 0$, and $h_i$ is dependent upon $x_j$ if and only if $\delta_{ij}^{(2)}> 0$.  
	
	\item[(C$_2$)] (Nullity) A virus cannot propagate unless there is already an infected node in the population. That is, $f_i(0, \cdots, 0)=0$. Likewise, a patch cannot be distributed unless there is already a patched node in the population. That is, $g_i(0, \cdots, 0)=h_i(0, \cdots, 0)=0$.
	
	\item[(C$_3$)] (Smoothness) The infecting/patching rates are sufficiently smooth. Technically speaking, $f_i$, $g_i$ and $h_i$ are twice continuously differentible.
	
	\item[(C$_4$)] (Monotonicity) The infecting/patching rates are strictly increasing. That is, $\frac{\partial f_i(x_1, \cdots, x_N)}{\partial x_j} > 0$ if $f_i$ is dependent upon $x_j$, 
	$\frac{\partial g_i(x_1, \cdots, x_N)}{\partial x_j} > 0$ if $g_i$ is dependent upon $x_j$, and $\frac{\partial h_i(x_1, \cdots, x_N)}{\partial x_j} > 0$ if $h_i$ is dependent upon $x_j$. 
	
	\item[(C$_5$)] (Concavity) The infecting/patching rates flatten out and tend to saturation. Specifically, $\frac{\partial^2 f_i(x_1, \cdots, x_N)}{\partial x_j\partial x_k}\leq 0$, $\frac{\partial^2 g_i(x_1, \cdots, x_N)}{\partial x_j\partial x_k}\leq 0$, and  $\frac{\partial^2 h_i(x_1, \cdots, x_N)}{\partial x_j\partial x_k}\leq 0$.
	
\end{enumerate}

\noindent We refer to model (4) as the \emph{generic SIPS model}. Obviously, this model subsumes the linear SIPS model as well as many SIPS models with nonlinear infecting/patching rates. Fig. 3 shows the diagram of state transitions under this model. 

\begin{figure}[!t]
	\centering
	\includegraphics[width=2.5in]{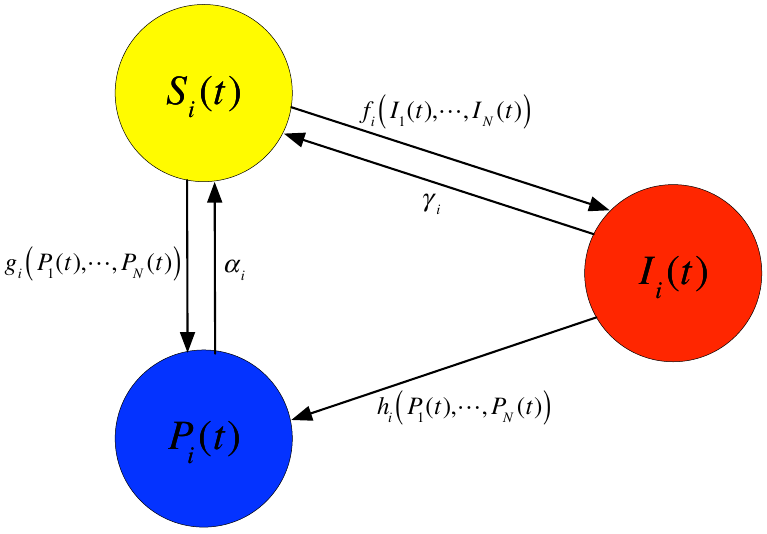}
	\caption{The diagram of state transitions under the generic SIPS model.}
\end{figure}

Let
\[
\Omega=\left\{(x_1,\cdots,x_{2N})\in \mathbb{R}_+^{2N}\mid x_i+x_{N+i}\leq 1, 1 \leq i \leq N\right\}.
\]
The initial state of model (4) lies in $\Omega$. It is easily shown that $\Omega$ is positively invariant for model (4). 

Introduce the following notations.
\[
\begin{split}
&\mathbf{I}(t) = (I_1(t), \cdots, I_N(t))^T, \mathbf{P}(t)= (P_1(t), \cdots, P_N(t))^T, \\
&diag\mathbf{I}(t)= diag\left(I_i(t)\right), diag\mathbf{P}(t) = diag\left(P_i(t)\right), \\
&\mathbf{f}(\mathbf{I}(t)) = \left(f_1(\mathbf{I}(t)), \cdots, f_N(\mathbf{I}(t))\right)^T, \\
&\mathbf{g}(\mathbf{P}(t))= \left(g_1(\mathbf{P}(t)), \cdots, g_N(\mathbf{P}(t))\right)^T,\\
&\mathbf{h}(\mathbf{P}(t))= \left(h_1(\mathbf{P}(t)), \cdots, h_N(\mathbf{P}(t))\right)^T.
\end{split}
\]
Then the generic SIPS model can be written as
\begin{equation}
	\left\{
	\begin{aligned}
		\frac{d\mathbf{I}(t)}{dt}&=(\mathbf{E}_N-diag\mathbf{I}(t)-diag\mathbf{P}(t))\mathbf{f}(\mathbf{I}(t))\\
		&\quad- diag\mathbf{I}(t)\mathbf{h}(\mathbf{P}(t))-\mathbf{D}_{\gamma}\mathbf{I}(t),\\
		\frac{d\mathbf{P}(t)}{dt}&=(\mathbf{E}_N-diag\mathbf{I}(t)-diag\mathbf{P}(t))\mathbf{g}(\mathbf{P}(t))\\
		&\quad+diag\mathbf{I}(t)\mathbf{h}(\mathbf{P}(t))-\mathbf{D}_{\alpha}\mathbf{P}(t),
	\end{aligned}
	\right.
\end{equation} 
where $\mathbf{E}_N$ stands for the identity matrix of order $N$. 

\begin{rk}
	When patches are viewed as beneficial viruses, the SIPS models can be regarded as a variant of the bi-virus competing model \cite{Santos2015}. The major difference between them lies in that for the former a node infected with a virus can be 'infected' with the newest patch, whereas for the latter a node infected with one virus cannot be infected with the other one.
\end{rk}

\section{Dynamics of the generic SIPS model}

Consider the generic SIPS model (4). Let $I(t)$ and $P(t)$ denote the fraction at time $t$ of infected nodes and patched nodes in the population, respectively. That is,
\begin{equation*}
	I(t) = \frac{1}{N}\sum_{i=1}^N I_i(t), \quad P(t) = \frac{1}{N}\sum_{i=1}^N P_i(t).
\end{equation*}
The main aim of this work is to determine the development tendency of $I(t)$ and $P(t)$ over time. For that purpose, we need some preliminary knowledge, which are listed below.

\subsection{Preliminaries}

\subsubsection{Matrix theory}

In what follows, we consider only real square matrices. Given a matrix $\mathbf{A}$, let $s(\mathbf{A})$ denote the maximum real part of an eigenvalue of $\mathbf{A}$, and let $\rho(\mathbf{A})$ denote the spectral radius of $\mathbf{A}$, i.e., the maximum modulus of an eigenvalue of $\mathbf{A}$. $\mathbf{A}$ is \emph{Metzler} if its off-diagonal entries are all nonnegative. 

\begin{lm}(Reyleigh Formula, see a corollary of Theorem 4.2.2 in \cite{Horn2013}) 
	Let $\mathbf{A}$ be a real symmetric matrix of order $n$, $\lambda_{\max}(\mathbf{A})$ the maximum eigenvalue of $\mathbf{A}$. Then,
	\begin{equation*} 
		\lambda_{\max}(\mathbf{A}) = \max_{\mathbf{x} \in \mathbb{R}^n, \text{ } \mathbf{x} \neq \mathbf{0}} \frac{\mathbf{x}^T\mathbf{A}\mathbf{x}}{\mathbf{x}^T\mathbf{x}}. 
	\end{equation*}
	Moreover, $\lambda_{\max}(\mathbf{A}) = \frac{\mathbf{x}^T\mathbf{A}\mathbf{x}}{\mathbf{x}^T\mathbf{x}}$ if and only if $\mathbf{x}$ is an eigenvector belonging to $\lambda_{\max}(\mathbf{A})$.
\end{lm}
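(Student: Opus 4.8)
The plan is to reduce the quotient to a diagonal form by invoking the spectral theorem for real symmetric matrices, which guarantees an orthogonal diagonalization $\mathbf{A} = \mathbf{Q}\mathbf{D}\mathbf{Q}^T$ with $\mathbf{Q}^T\mathbf{Q} = \mathbf{E}_n$ and $\mathbf{D} = \mathrm{diag}(\lambda_1, \dots, \lambda_n)$. Ordering the (necessarily real) eigenvalues as $\lambda_1 \geq \lambda_2 \geq \cdots \geq \lambda_n$, we have $\lambda_1 = \lambda_{\max}(\mathbf{A})$, and the columns $\mathbf{q}_1, \dots, \mathbf{q}_n$ of $\mathbf{Q}$ form an orthonormal basis of $\mathbb{R}^n$ consisting of eigenvectors, with $\mathbf{A}\mathbf{q}_i = \lambda_i \mathbf{q}_i$.

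Next I would perform the orthogonal change of variables $\mathbf{y} = \mathbf{Q}^T\mathbf{x}$, which is a bijection of $\mathbb{R}^n$ sending nonzero vectors to nonzero vectors. Because $\mathbf{Q}$ is orthogonal it is norm-preserving, so $\mathbf{y}^T\mathbf{y} = \mathbf{x}^T\mathbf{Q}\mathbf{Q}^T\mathbf{x} = \mathbf{x}^T\mathbf{x}$, while $\mathbf{x}^T\mathbf{A}\mathbf{x} = \mathbf{x}^T\mathbf{Q}\mathbf{D}\mathbf{Q}^T\mathbf{x} = \mathbf{y}^T\mathbf{D}\mathbf{y} = \sum_{i=1}^n \lambda_i y_i^2$. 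Hence for every $\mathbf{x} \neq \mathbf{0}$ the Rayleigh quotient equals $\left(\sum_{i=1}^n \lambda_i y_i^2\right)/\left(\sum_{i=1}^n y_i^2\right)$, that is, a convex combination of the $\lambda_i$ with weights $y_i^2/\sum_k y_k^2$. Since each $\lambda_i \leq \lambda_{\max}$, this convex combination is at most $\lambda_{\max}$, which proves $\mathbf{x}^T\mathbf{A}\mathbf{x}/\mathbf{x}^T\mathbf{x} \leq \lambda_{\max}(\mathbf{A})$ for all nonzero $\mathbf{x}$. Taking $\mathbf{x} = \mathbf{q}_1$ (equivalently $\mathbf{y} = \mathbf{e}_1$) makes the quotient exactly $\lambda_{\max}$, so the supremum is attained and the maximum formula follows.

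Finally, for the ``moreover'' part I would examine when equality holds in the convex-combination bound. Writing the defect as $\lambda_{\max} - \mathbf{x}^T\mathbf{A}\mathbf{x}/\mathbf{x}^T\mathbf{x} = \left(\sum_{i=1}^n (\lambda_{\max} - \lambda_i) y_i^2\right)/\left(\sum_{i=1}^n y_i^2\right)$ and noting every summand in the numerator is nonnegative, equality holds if and only if $y_i = 0$ whenever $\lambda_i < \lambda_{\max}$. This says precisely that $\mathbf{y}$, hence $\mathbf{x} = \mathbf{Q}\mathbf{y}$, lies in the span of the eigenvectors associated with $\lambda_{\max}$, i.e. $\mathbf{x}$ is an eigenvector belonging to $\lambda_{\max}$. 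The only point requiring care --- which I would flag as the main subtlety rather than a genuine obstacle --- is the possible degeneracy of the top eigenvalue: when $\lambda_{\max}$ has multiplicity greater than one, the equality set is the entire $\lambda_{\max}$-eigenspace and not merely the line through $\mathbf{q}_1$, so the characterization must be stated in terms of membership in that eigenspace rather than proportionality to a single eigenvector.
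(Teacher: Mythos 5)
Your proof is correct: the orthogonal diagonalization, the convex-combination bound on $\sum_i \lambda_i y_i^2 / \sum_i y_i^2$, and the equality analysis (equality exactly when $\mathbf{y}$ is supported on the indices with $\lambda_i = \lambda_{\max}$, i.e.\ $\mathbf{x}$ lies in the $\lambda_{\max}$-eigenspace) together establish both the maximum formula and the ``if and only if'' claim, including the degenerate-multiplicity case you rightly flag. Note that the paper itself gives no proof of this lemma --- it is quoted as a preliminary fact from Horn and Johnson (a corollary of Theorem 4.2.2), where the argument is essentially the same spectral-theorem computation you wrote --- so your proposal simply supplies the standard proof the paper delegates to its reference.
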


\begin{lm}(Corollary 8.1.19 in \cite{Horn2013})
	Let $\mathbf{A}$, $\mathbf{B}$ be a pair of nonnegative matrices of the same order. If $\mathbf{A} \leq \mathbf{B}$, then $\rho(\mathbf{A}) \leq \rho(\mathbf{B})$.
\end{lm}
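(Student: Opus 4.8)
The plan is to avoid invoking the full Perron--Frobenius machinery for reducible matrices and instead to derive the inequality from \emph{Gelfand's spectral radius formula}, $\rho(\mathbf{M}) = \lim_{k \to \infty} \|\mathbf{M}^k\|^{1/k}$, which holds for any norm on the space of matrices of a fixed order. Since $\mathbf{A}$ and $\mathbf{B}$ are nonnegative, I would work with the entrywise order and the convenient max-entry functional $\|\mathbf{M}\| = \max_{i,j} |m_{ij}|$, which is monotone with respect to this order: if $\mathbf{0} \leq \mathbf{C} \leq \mathbf{D}$ entrywise, then $\|\mathbf{C}\| \leq \|\mathbf{D}\|$.

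The key step is to show that the hypothesis $\mathbf{0} \leq \mathbf{A} \leq \mathbf{B}$ propagates to all powers, i.e. $\mathbf{0} \leq \mathbf{A}^k \leq \mathbf{B}^k$ entrywise for every $k \geq 1$. This I would establish by induction on $k$: the base case is the hypothesis, and for the inductive step I would write $(\mathbf{A}^{k+1})_{ij} = \sum_{m} a_{im}(\mathbf{A}^k)_{mj}$ and compare it termwise with $(\mathbf{B}^{k+1})_{ij} = \sum_m b_{im}(\mathbf{B}^k)_{mj}$. Because $0 \leq a_{im} \leq b_{im}$ and $0 \leq (\mathbf{A}^k)_{mj} \leq (\mathbf{B}^k)_{mj}$ (the latter by the induction hypothesis), each summand satisfies $a_{im}(\mathbf{A}^k)_{mj} \leq b_{im}(\mathbf{B}^k)_{mj}$, and summing preserves the inequality. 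Nonnegativity of every matrix in sight is exactly what makes this termwise comparison legitimate, so it is essential that both $\mathbf{A}$ and $\mathbf{B}$ be nonnegative rather than merely that $\mathbf{A} \leq \mathbf{B}$.

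With $\mathbf{0} \leq \mathbf{A}^k \leq \mathbf{B}^k$ in hand, monotonicity of the norm gives $\|\mathbf{A}^k\| \leq \|\mathbf{B}^k\|$, hence $\|\mathbf{A}^k\|^{1/k} \leq \|\mathbf{B}^k\|^{1/k}$ for each $k$; passing to the limit and applying Gelfand's formula to each side yields $\rho(\mathbf{A}) \leq \rho(\mathbf{B})$, which is the claim. An alternative route would invoke the Collatz--Wielandt characterization $\rho(\mathbf{A}) = \max_{\mathbf{x} \geq \mathbf{0},\, \mathbf{x} \neq \mathbf{0}} \min_{i:\, x_i > 0} (\mathbf{A}\mathbf{x})_i / x_i$ together with $\mathbf{A}\mathbf{x} \leq \mathbf{B}\mathbf{x}$ for $\mathbf{x} \geq \mathbf{0}$, or else approximate $\mathbf{A}$ and $\mathbf{B}$ by the strictly positive matrices $\mathbf{A} + \varepsilon \mathbf{J}$ and $\mathbf{B} + \varepsilon \mathbf{J}$ (with $\mathbf{J}$ the all-ones matrix), apply Perron--Frobenius to these irreducible perturbations, and let $\varepsilon \to 0$ using continuity of the spectral radius in the matrix entries.

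The main obstacle I anticipate is not the algebra but the temptation to reach immediately for a Perron eigenvector: the lemma makes no irreducibility assumption, so $\rho(\mathbf{A})$ need not be a simple eigenvalue and need not possess a \emph{positive} eigenvector, only a nonnegative one. The Gelfand-formula argument is attractive precisely because it never touches eigenvectors and treats reducible and irreducible matrices uniformly; the only point requiring care is the justification that $\rho(\mathbf{M}) = \lim_{k} \|\mathbf{M}^k\|^{1/k}$ holds for the (non-submultiplicative) max-entry norm, which follows from the equivalence of all norms on a finite-dimensional space.
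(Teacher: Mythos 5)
Your proof is correct. Note that the paper never proves this lemma at all — it imports it verbatim as Corollary 8.1.19 of Horn and Johnson's \emph{Matrix Analysis} — and your argument (entrywise monotonicity of powers $\mathbf{0} \leq \mathbf{A}^k \leq \mathbf{B}^k$ by induction, then the root formula $\rho(\mathbf{M}) = \lim_{k\to\infty}\|\mathbf{M}^k\|^{1/k}$ applied to a monotone norm, with the norm-equivalence caveat properly handled) is essentially the standard proof given in that reference, so there is nothing to reconcile.
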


\begin{lm}(Corollary 8.1.30 in \cite{Horn2013})
	Let $\mathbf{A}$ be a nonnegative matrix. If $\mathbf{A}$ has a positive eigenvector $\mathbf{x}$, then (a) $\rho(\mathbf{A})$ is an eigenvalue of $\mathbf{A}$, and (b) $\mathbf{x}$ belongs to $\rho(\mathbf{A})$. 
\end{lm}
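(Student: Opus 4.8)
The plan is to prove statements (a) and (b) simultaneously by pinning down the eigenvalue attached to the given positive eigenvector and showing that it dominates the modulus of every other eigenvalue. Write $\mathbf{A}\mathbf{x} = \lambda\mathbf{x}$ with $\mathbf{x} > 0$. First I would observe that $\lambda$ is real and nonnegative: since $\mathbf{A}$ is real, $\mathbf{A}\mathbf{x}$ is real, forcing $\lambda$ real; and since $\mathbf{A} \geq 0$ with $\mathbf{x} > 0$ entrywise, each coordinate $(\mathbf{A}\mathbf{x})_i = \lambda x_i$ is nonnegative, so dividing by the positive $x_i$ gives $\lambda \geq 0$. Because $\lambda$ is one of the eigenvalues of $\mathbf{A}$, the inequality $\lambda = |\lambda| \leq \rho(\mathbf{A})$ is immediate from the definition of the spectral radius. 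It therefore remains to establish the reverse inequality $\rho(\mathbf{A}) \leq \lambda$, which will deliver $\lambda = \rho(\mathbf{A})$ and hence both conclusions at once.

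For the reverse inequality I would bring in an arbitrary eigenvalue $\mu$ of $\mathbf{A}$ with eigenvector $\mathbf{y} \neq \mathbf{0}$, so that $\mathbf{A}\mathbf{y} = \mu\mathbf{y}$, and pass to entrywise absolute values. Using the triangle inequality together with $a_{ij} \geq 0$, one obtains, coordinate by coordinate,
\[
|\mu|\,|y_i| = \Bigl|\sum_{j} a_{ij} y_j\Bigr| \leq \sum_{j} a_{ij}|y_j|,
\]
i.e. the componentwise majorization $|\mu|\,|\mathbf{y}| \leq \mathbf{A}|\mathbf{y}|$, where $|\mathbf{y}|$ denotes the vector of moduli. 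The key device is then to compare $|\mathbf{y}|$ against the known positive eigenvector $\mathbf{x}$ through the scalar $t = \max_i |y_i|/x_i$, which is well defined because every $x_i > 0$ and is strictly positive because $\mathbf{y} \neq \mathbf{0}$. Choosing an index $i_0$ that attains the maximum gives $|y_{i_0}| = t\,x_{i_0}$ and $|y_j| \leq t\,x_j$ for all $j$.

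Evaluating the majorization at the coordinate $i_0$ and substituting the bounds $|y_j| \leq t x_j$ is the crux:
\[
|\mu|\,t\,x_{i_0} = |\mu|\,|y_{i_0}| \leq \sum_{j} a_{i_0 j}|y_j| \leq t\sum_{j} a_{i_0 j} x_j = t\,(\mathbf{A}\mathbf{x})_{i_0} = t\,\lambda\,x_{i_0}.
\]
Dividing through by the positive quantity $t\,x_{i_0}$ yields $|\mu| \leq \lambda$; since $\mu$ was an arbitrary eigenvalue, taking the maximum over the spectrum gives $\rho(\mathbf{A}) \leq \lambda$, completing the sandwich $\rho(\mathbf{A}) = \lambda$. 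Conclusion (a) follows because $\rho(\mathbf{A})$, being equal to $\lambda$, is an eigenvalue, and conclusion (b) follows because $\mathbf{x}$ is an eigenvector for $\lambda = \rho(\mathbf{A})$. I expect the main obstacle to be isolating the right comparison scalar: the argument hinges entirely on using the extremal ratio $t = \max_i |y_i|/x_i$, so that the strict positivity of $\mathbf{x}$ converts the vector majorization into the scalar bound $|\mu| \leq \lambda$. This is the standard Perron--Frobenius majorization technique, and a naive entrywise comparison without selecting the maximizing coordinate $i_0$ would fail to close the inequality.
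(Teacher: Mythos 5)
Your proof is correct, but note that the paper does not prove this statement at all: it is quoted verbatim as a preliminary result (Corollary 8.1.30 in Horn and Johnson's \emph{Matrix Analysis}) and used as a black box, so there is no in-paper argument to compare against. Your argument is the standard Perron--Frobenius majorization proof of that corollary and is sound in every step: the positivity of $\mathbf{x}$ forces $\lambda$ real and nonnegative, the entrywise bound $|\mu|\,|\mathbf{y}| \leq \mathbf{A}|\mathbf{y}|$ holds for any eigenpair $(\mu,\mathbf{y})$, and the extremal-ratio comparison $t = \max_i |y_i|/x_i$ correctly closes the inequality $|\mu| \leq \lambda$ without needing irreducibility of $\mathbf{A}$, which is exactly the level of generality the lemma claims. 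For reference, Horn and Johnson obtain the same conclusion slightly differently: they first prove the two-sided bound $\min_i \frac{(\mathbf{A}\mathbf{x})_i}{x_i} \leq \rho(\mathbf{A}) \leq \max_i \frac{(\mathbf{A}\mathbf{x})_i}{x_i}$ for positive $\mathbf{x}$ (via the diagonal similarity $\mathbf{D}^{-1}\mathbf{A}\mathbf{D}$ with $\mathbf{D} = diag(x_i)$ and the row-sum bound on the spectral radius), after which the corollary is immediate since all ratios equal $\lambda$; your direct majorization is essentially the proof of that row-sum bound specialized to the situation at hand, so the two routes have the same mathematical content, with yours being self-contained and the textbook's being modular.
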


%\begin{lm} \emph{(Theorem 8.4.4 in \cite{Horn2013})}
%	Let  $\mathbf{A}$ be an irreducible nonnegative matrix. Then (a) $\rho(\mathbf{A}) > 0$, (b) $\rho(\mathbf{A})$ is a simple eigenvalue of $\mathbf{A}$, and (c) up to scalar multiple, $\mathbf{A}$ has a unique positive eigenvector belonging to $\rho(\mathbf{A})$.
%\end{lm}

%This lemma is referred to as the \emph{Perron-Frobenius Theorem}.

\begin{lm}(Lemma 2.3 in \cite{Varga2000}) 
	Let $\mathbf{A}$ be an irreducible Metzler matrix. Then (a) $s(\mathbf{A})$ is a simple eigenvalue of $\mathbf{A}$, and (b) up to scalar multiple, $\mathbf{A}$ has a unique positive eigenvector $\mathbf{x}$ belonging to $s(\mathbf{A})$. 
\end{lm}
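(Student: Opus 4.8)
The plan is to derive this statement from the classical Perron--Frobenius theory for nonnegative matrices by means of a spectral shift that converts the Metzler matrix into a nonnegative one. First I would fix a scalar $c$ large enough that every diagonal entry of $\mathbf{A} + c\mathbf{E}$ is nonnegative, where $\mathbf{E}$ denotes the identity matrix of the same order as $\mathbf{A}$; concretely, any $c \geq -\min_i a_{ii}$ suffices. Since the off-diagonal entries of $\mathbf{A}$ are already nonnegative by the Metzler property, the matrix $\mathbf{B} := \mathbf{A} + c\mathbf{E}$ is then entrywise nonnegative. Moreover, adding a multiple of the identity alters no off-diagonal entry, and reducibility is witnessed by a permutation bringing the matrix to block upper-triangular form — a property determined solely by the off-diagonal zero pattern; hence $\mathbf{B}$ inherits the irreducibility of $\mathbf{A}$.

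Next I would invoke the Perron--Frobenius theorem for irreducible nonnegative matrices to record the spectral data of $\mathbf{B}$: the spectral radius $\rho(\mathbf{B})$ is a positive, algebraically simple eigenvalue of $\mathbf{B}$, and the eigenvector belonging to it may be chosen strictly positive and is unique up to scalar multiple. The point connecting $\rho(\mathbf{B})$ to the dominant real part is that every eigenvalue $\mu$ of the nonnegative matrix $\mathbf{B}$ obeys $\operatorname{Re}(\mu) \leq |\mu| \leq \rho(\mathbf{B})$, while $\rho(\mathbf{B})$ is itself a real eigenvalue; consequently $\rho(\mathbf{B})$ is exactly the eigenvalue of largest real part, so that $s(\mathbf{B}) = \rho(\mathbf{B})$. (The positivity assertion can be cross-checked against Lemma 4: once a positive eigenvector is known to exist, it necessarily belongs to $\rho(\mathbf{B})$.)

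Finally I would transfer everything back to $\mathbf{A}$ through the shift. Because $\det(\lambda\mathbf{E} - \mathbf{A}) = \det((\lambda+c)\mathbf{E} - \mathbf{B})$, the eigenvalues of $\mathbf{A}$ are precisely those of $\mathbf{B}$ translated by $-c$, with identical eigenvectors and identical algebraic multiplicities, and $\mathbf{A}\mathbf{x}=\lambda\mathbf{x}$ if and only if $\mathbf{B}\mathbf{x}=(\lambda+c)\mathbf{x}$. Translation by a real constant preserves the ordering of real parts, so $s(\mathbf{A}) = s(\mathbf{B}) - c = \rho(\mathbf{B}) - c$; being the image of a simple eigenvalue under the shift, it is a simple eigenvalue of $\mathbf{A}$, which settles (a). Its eigenspace coincides with the $\rho(\mathbf{B})$-eigenspace of $\mathbf{B}$, which by simplicity is one-dimensional and is spanned by a strictly positive vector; this yields the unique (up to scalar multiple) positive eigenvector belonging to $s(\mathbf{A})$, proving (b).

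The only genuinely delicate step I anticipate is the identity $s(\mathbf{B}) = \rho(\mathbf{B})$, i.e. verifying that the largest real part is attained exactly at the spectral radius. For an irreducible but non-primitive $\mathbf{B}$ there may be several eigenvalues of modulus $\rho(\mathbf{B})$, but these are non-real rotations of $\rho(\mathbf{B})$ along the circle $|\mu| = \rho(\mathbf{B})$ and hence have strictly smaller real part, so the simple real eigenvalue $\rho(\mathbf{B})$ alone realizes $s(\mathbf{B})$. This is precisely where the full force of Perron--Frobenius, rather than merely the crude bound $\operatorname{Re}(\mu) \leq \rho(\mathbf{B})$, is required.
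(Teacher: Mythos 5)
Your proof is correct. The paper itself gives no proof of this lemma --- it is quoted as a known result (Lemma 2.3 of Varga's \emph{Matrix Iterative Analysis}) --- so there is no internal argument to compare against; your reduction via the shift $\mathbf{B} = \mathbf{A} + c\mathbf{E}$ is exactly the standard route by which statements about irreducible Metzler (essentially nonnegative) matrices are deduced from the Perron--Frobenius theorem, and each step (nonnegativity of $\mathbf{B}$, preservation of irreducibility since only the off-diagonal zero pattern matters, translation of the spectrum and of eigenvectors, preservation of algebraic simplicity) is sound. One small remark: your final paragraph overstates the delicacy of $s(\mathbf{B}) = \rho(\mathbf{B})$. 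There is no need to analyze the peripheral spectrum of a non-primitive matrix: if $\operatorname{Re}(\mu) = \rho(\mathbf{B})$ for some eigenvalue $\mu$, then $\rho(\mathbf{B}) = \operatorname{Re}(\mu) \leq |\mu| \leq \rho(\mathbf{B})$ forces $\operatorname{Re}(\mu) = |\mu|$, which makes $\mu$ real and nonnegative, hence $\mu = \rho(\mathbf{B})$; so the ``crude bound'' combined with the Perron--Frobenius fact that $\rho(\mathbf{B})$ is itself an eigenvalue already settles the identification, and the full structure theory of the peripheral eigenvalues is not needed.
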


A matrix $\mathbf{A}$ is \emph{Hurwitz stable} or simply \emph{Hurwitz} if its eigenvalues all have negative real parts, i.e., $s(\mathbf{A}) < 0$.

\begin{lm}(Chapter 2 in \cite{Horn1991}) 
	A matrix $\mathbf{A}$ is Hurwitz if and only if there is a positive definite matrix $\mathbf{P}$ such that $ \mathbf{A}^T\mathbf{P}+\mathbf{P}\mathbf{A}$ is negative definite.
\end{lm}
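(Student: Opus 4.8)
The plan is to prove the two implications of this classical Lyapunov criterion separately, since the tools differ. Throughout write $\mathbf{Q} = -(\mathbf{A}^T\mathbf{P}+\mathbf{P}\mathbf{A})$, so that the condition "$\mathbf{A}^T\mathbf{P}+\mathbf{P}\mathbf{A}$ is negative definite" becomes "$\mathbf{Q}$ is positive definite". Recall that by the notation introduced above, $\mathbf{A}$ is Hurwitz precisely when $s(\mathbf{A})<0$.

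For sufficiency (existence of such a $\mathbf{P}$ forces $\mathbf{A}$ to be Hurwitz), I would argue directly on eigenvalues. Let $\lambda$ be an arbitrary eigenvalue of $\mathbf{A}$ with a possibly complex eigenvector $\mathbf{v}\neq\mathbf{0}$, so $\mathbf{A}\mathbf{v}=\lambda\mathbf{v}$. Because $\mathbf{A}$ is real, taking conjugate transposes gives $\mathbf{v}^{*}\mathbf{A}^T=\overline{\lambda}\,\mathbf{v}^{*}$. Sandwiching the Lyapunov expression then yields the identity $\mathbf{v}^{*}(\mathbf{A}^T\mathbf{P}+\mathbf{P}\mathbf{A})\mathbf{v}=(\lambda+\overline{\lambda})\,\mathbf{v}^{*}\mathbf{P}\mathbf{v}=2\,\mathrm{Re}(\lambda)\,\mathbf{v}^{*}\mathbf{P}\mathbf{v}$. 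The left-hand side equals $-\mathbf{v}^{*}\mathbf{Q}\mathbf{v}<0$ by positive definiteness of $\mathbf{Q}$, while $\mathbf{v}^{*}\mathbf{P}\mathbf{v}>0$ by positive definiteness of $\mathbf{P}$; dividing gives $\mathrm{Re}(\lambda)<0$. Since $\lambda$ was arbitrary, $s(\mathbf{A})<0$.

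For necessity (a Hurwitz $\mathbf{A}$ produces a suitable $\mathbf{P}$), I would fix any positive definite $\mathbf{Q}$ (for instance $\mathbf{Q}=\mathbf{E}_N$) and construct the candidate explicitly as
\[
\mathbf{P} = \int_{0}^{\infty} e^{\mathbf{A}^T t}\,\mathbf{Q}\,e^{\mathbf{A}t}\,dt .
\]
Three verifications then finish the argument. First, $\mathbf{P}$ is symmetric since $\mathbf{Q}$ is. Second, for $\mathbf{x}\neq\mathbf{0}$ the substitution $\mathbf{y}(t)=e^{\mathbf{A}t}\mathbf{x}$ gives $\mathbf{x}^T\mathbf{P}\mathbf{x}=\int_{0}^{\infty}\mathbf{y}(t)^T\mathbf{Q}\,\mathbf{y}(t)\,dt$, and because $e^{\mathbf{A}t}$ is nonsingular the integrand is strictly positive for every $t$, whence $\mathbf{P}$ is positive definite. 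Third, using $\tfrac{d}{dt}\!\left(e^{\mathbf{A}^T t}\mathbf{Q}e^{\mathbf{A}t}\right)=\mathbf{A}^T e^{\mathbf{A}^T t}\mathbf{Q}e^{\mathbf{A}t}+e^{\mathbf{A}^T t}\mathbf{Q}e^{\mathbf{A}t}\mathbf{A}$ and integrating by the fundamental theorem of calculus gives $\mathbf{A}^T\mathbf{P}+\mathbf{P}\mathbf{A}=\bigl[e^{\mathbf{A}^T t}\mathbf{Q}e^{\mathbf{A}t}\bigr]_{0}^{\infty}=-\mathbf{Q}$, so $\mathbf{A}^T\mathbf{P}+\mathbf{P}\mathbf{A}$ is negative definite.

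The main obstacle is the convergence of the improper integral defining $\mathbf{P}$, which is exactly the point where Hurwitz stability is used. I would justify it through the Jordan canonical form of $\mathbf{A}$: the entries of $e^{\mathbf{A}t}$ are finite sums of terms of the form (polynomial in $t$)$\,\cdot\,e^{\lambda t}$ with $\lambda$ ranging over the eigenvalues of $\mathbf{A}$, and since $|e^{\lambda t}|=e^{\mathrm{Re}(\lambda)t}$ with every $\mathrm{Re}(\lambda)\le s(\mathbf{A})<0$, one obtains a bound $\|e^{\mathbf{A}t}\|\le M e^{-\omega t}$ for some constants $M>0$ and $0<\omega<-s(\mathbf{A})$ (the polynomial factors being absorbed into a slightly smaller exponent). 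This exponential estimate makes the integrand in $\mathbf{P}$ dominated by $M^2\|\mathbf{Q}\|\,e^{-2\omega t}$, guaranteeing convergence and simultaneously ensuring that the boundary term at $t=\infty$ in the third verification vanishes. Once this decay bound is in hand, the remaining steps are routine.
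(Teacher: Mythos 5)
The paper does not prove this lemma at all---it is imported as a known result from Chapter 2 of Horn and Johnson's \emph{Topics in Matrix Analysis}---and your argument is precisely the classical proof given there: the eigenvector--quadratic-form sandwich $\mathbf{v}^{*}(\mathbf{A}^T\mathbf{P}+\mathbf{P}\mathbf{A})\mathbf{v}=2\,\mathrm{Re}(\lambda)\,\mathbf{v}^{*}\mathbf{P}\mathbf{v}$ for sufficiency, and the integral construction $\mathbf{P}=\int_{0}^{\infty}e^{\mathbf{A}^T t}\mathbf{Q}\,e^{\mathbf{A}t}\,dt$ with the Jordan-form exponential decay bound for necessity. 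Your proof is correct and complete (the only implicit convention is that ``positive definite'' entails symmetric, which is what makes $\mathbf{v}^{*}\mathbf{P}\mathbf{v}$ real and positive for complex $\mathbf{v}$, and is the convention used in the cited reference), so it faithfully fills in the proof the paper delegates to the literature.
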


A matrix $\mathbf{A}$ is \emph{diagonally stable} if there is a positive definite diagonal matrix $\mathbf{D}$ such that $\mathbf{A}^T\mathbf{D}+\mathbf{D}\mathbf{A}$ is negative definite. Obviously, a diagonally stable matrix is Hurwitz.

\begin{lm}(Section 2 in \cite{Narendra2010}) 
	A Metzler matrix is diagonally stable if it is Hurwitz.
\end{lm}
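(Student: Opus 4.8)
Since a diagonally stable matrix is automatically Hurwitz (as already noted just before the statement), only the forward implication needs an argument, and my plan is to exhibit the required positive definite diagonal matrix explicitly rather than through an abstract Lyapunov-equation existence theorem. The construction I would use is $\mathbf{D} = \mathrm{diag}(y_i/x_i)$, built from two positive vectors $\mathbf{x}, \mathbf{y}$ satisfying $\mathbf{A}\mathbf{x} < 0$ and $\mathbf{A}^T\mathbf{y} < 0$ (all inequalities understood entrywise). Granting such vectors, the proof splits into two short verifications: that $\mathbf{M} := \mathbf{A}^T\mathbf{D} + \mathbf{D}\mathbf{A}$ sends $\mathbf{x}$ to a strictly negative vector, and that any symmetric Metzler matrix with this \emph{negative supersolution} property is negative definite.

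The first task, and the step I expect to be the main obstacle, is producing $\mathbf{x}$ and $\mathbf{y}$ when $\mathbf{A}$ is reducible, since Lemma 5 only supplies a positive eigenvector in the irreducible case. For an irreducible $\mathbf{A}$ I would simply take $\mathbf{x}$ to be the positive eigenvector of Lemma 5 belonging to $s(\mathbf{A})$; because $s(\mathbf{A}) < 0$ and $\mathbf{x} > 0$ we get $\mathbf{A}\mathbf{x} = s(\mathbf{A})\mathbf{x} < 0$, and the same reasoning applied to the (also irreducible, Metzler and Hurwitz) matrix $\mathbf{A}^T$ yields $\mathbf{y}$. For the general case I would write $\mathbf{A} = \mathbf{B} - c\mathbf{E}$ with $\mathbf{B} \geq 0$, $\mathbf{E}$ the identity matrix and $c$ large, so that $s(\mathbf{A}) = \rho(\mathbf{B}) - c < 0$ gives $\rho(\mathbf{B}) < c$. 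Replacing $\mathbf{B}$ by the strictly positive matrix $\mathbf{B}_\epsilon = \mathbf{B} + \epsilon\mathbf{J}$ (with $\mathbf{J}$ the all-ones matrix) keeps $\rho(\mathbf{B}_\epsilon) < c$ for small $\epsilon$ by continuity of the spectral radius in the entries, and its Perron eigenvector $\mathbf{x} > 0$ (Lemma 5) then satisfies $\mathbf{A}\mathbf{x} \leq (\mathbf{B}_\epsilon - c\mathbf{E})\mathbf{x} = (\rho(\mathbf{B}_\epsilon) - c)\mathbf{x} < 0$, using $\mathbf{B} \leq \mathbf{B}_\epsilon$ and $\mathbf{x} > 0$; the vector $\mathbf{y}$ comes from the identical construction applied to $\mathbf{A}^T$.

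With $\mathbf{x}$ and $\mathbf{y}$ fixed, a direct entrywise computation gives $M_{ij} = a_{ji}d_j + d_i a_{ij}$, whence $\mathbf{M}$ is symmetric and, since $\mathbf{A}$ is Metzler and the $d_i$ are positive, again Metzler. Using $d_j x_j = y_j$ one finds $(\mathbf{M}\mathbf{x})_i = (\mathbf{A}^T\mathbf{y})_i + d_i(\mathbf{A}\mathbf{x})_i < 0$, so $\mathbf{M}\mathbf{x} < 0$ with $\mathbf{x} > 0$. To conclude I would conjugate by $\mathbf{X} = \mathrm{diag}(x_i)$: the matrix $\mathbf{X}\mathbf{M}\mathbf{X}$ is congruent to $\mathbf{M}$ (hence shares its inertia), is still symmetric and Metzler, and satisfies $(\mathbf{X}\mathbf{M}\mathbf{X})\mathbf{1} = \mathbf{X}(\mathbf{M}\mathbf{x}) < 0$; thus every row of $\mathbf{X}\mathbf{M}\mathbf{X}$ has nonnegative off-diagonal entries but a negative sum, which forces its diagonal entry to be negative and to dominate the rest of the row in absolute value. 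Then $-\mathbf{X}\mathbf{M}\mathbf{X}$ is strictly diagonally dominant with positive diagonal, hence positive definite by Gershgorin's theorem, so $\mathbf{M}$ is negative definite and $\mathbf{D}$ certifies the diagonal stability of $\mathbf{A}$.
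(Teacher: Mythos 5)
Your proof is correct, but note that the paper itself offers no proof of this lemma: it is quoted, with citation, from Section 2 of Narendra and Shorten's paper, so there is no internal argument to compare against. What you have produced is a complete, self-contained proof along classical M-matrix lines (essentially the Barker--Berman--Plemmons characterization): for a Hurwitz Metzler $\mathbf{A}$ one finds $\mathbf{x}, \mathbf{y} > \mathbf{0}$ with $\mathbf{A}\mathbf{x} < \mathbf{0}$ and $\mathbf{A}^T\mathbf{y} < \mathbf{0}$, and then $\mathbf{D} = \mathrm{diag}(y_i/x_i)$ is a diagonal Lyapunov certificate. Your treatment of the reducible case via the perturbation $\mathbf{B}_\epsilon = \mathbf{B} + \epsilon\mathbf{J}$ is the right move and is genuinely needed: the paper's Lemma 5 supplies positive eigenvectors only for irreducible Metzler matrices, whereas the lemma is stated for arbitrary Metzler matrices. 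The remaining verifications are sound: $\mathbf{M} = \mathbf{A}^T\mathbf{D} + \mathbf{D}\mathbf{A}$ is symmetric and Metzler with $\mathbf{M}\mathbf{x} < \mathbf{0}$, the congruence $\mathbf{X}\mathbf{M}\mathbf{X}$ preserves inertia by Sylvester's law, and strict diagonal dominance plus Gershgorin gives negative definiteness. Incidentally, your route also differs from the cited source's own proof, which proceeds by induction on the dimension using Schur complements; your construction is more explicit and rests only on Perron--Frobenius-type facts, at the modest cost of invoking a few standard results not among the paper's listed preliminaries (that $\rho(\mathbf{B})$ is an eigenvalue of any nonnegative matrix $\mathbf{B}$, which you need to pass from $s(\mathbf{A}) < 0$ to $\rho(\mathbf{B}) < c$, continuity of the spectral radius in the entries, Gershgorin's theorem, and Sylvester's law of inertia). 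None of these constitutes a gap.
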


\begin{lm}(Lemma A.1 in \cite{Khanafer2014a}) 
	Let $\mathbf{A}$ be an irreducible Metzler matrix. If $s(\mathbf{A}) = 0$, then there is a positive definite diagonal matrix $\mathbf{D}$ such that $\mathbf{A}^T\mathbf{D}+\mathbf{D}\mathbf{A}$ is negative semi-definite.
\end{lm}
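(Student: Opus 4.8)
The plan is to construct the diagonal matrix $\mathbf{D}$ explicitly from the Perron eigenvectors of $\mathbf{A}$ and $\mathbf{A}^T$, and then to show that the symmetric matrix $\mathbf{A}^T\mathbf{D}+\mathbf{D}\mathbf{A}$ carries a positive eigenvector belonging to its \emph{largest} eigenvalue, which pins that eigenvalue to $0$ and hence makes the matrix negative semi-definite.

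First I would extract the two relevant positive eigenvectors. Since $\mathbf{A}$ is an irreducible Metzler matrix with $s(\mathbf{A})=0$, Lemma~5 (the Varga lemma) gives that $0$ is a simple eigenvalue with a positive right eigenvector $\mathbf{u}>\mathbf{0}$, i.e. $\mathbf{A}\mathbf{u}=\mathbf{0}$. The transpose $\mathbf{A}^T$ is again irreducible (its digraph is the reversal of that of $\mathbf{A}$) and Metzler, and shares the spectrum of $\mathbf{A}$, so $s(\mathbf{A}^T)=0$; applying Lemma~5 to $\mathbf{A}^T$ produces a positive vector $\mathbf{v}>\mathbf{0}$ with $\mathbf{A}^T\mathbf{v}=\mathbf{0}$. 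I then define $\mathbf{D}=\mathrm{diag}(v_i/u_i)$, which is positive definite diagonal because $u_i,v_i>0$, and which satisfies the key identity $\mathbf{D}\mathbf{u}=\mathbf{v}$.

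Next I would analyze $\mathbf{M}:=\mathbf{A}^T\mathbf{D}+\mathbf{D}\mathbf{A}$. It is symmetric by construction, and its off-diagonal entries $A_{ji}d_j+d_iA_{ij}$ are nonnegative (because $\mathbf{A}$ is Metzler and each $d_i>0$), so $\mathbf{M}$ is a symmetric Metzler matrix. Using $\mathbf{A}\mathbf{u}=\mathbf{0}$, $\mathbf{A}^T\mathbf{v}=\mathbf{0}$ and $\mathbf{D}\mathbf{u}=\mathbf{v}$, a one-line computation yields $\mathbf{M}\mathbf{u}=\mathbf{A}^T(\mathbf{D}\mathbf{u})+\mathbf{D}(\mathbf{A}\mathbf{u})=\mathbf{A}^T\mathbf{v}=\mathbf{0}$, so $\mathbf{u}>\mathbf{0}$ is a positive eigenvector of $\mathbf{M}$ belonging to the eigenvalue $0$.

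The crux is to promote this to the statement that $0$ is the \emph{maximal} eigenvalue of $\mathbf{M}$. I would do this by a shift: choose $c>0$ large enough that $\mathbf{N}:=\mathbf{M}+c\mathbf{E}_N$ is entrywise nonnegative, which is possible precisely because $\mathbf{M}$ is Metzler. Then $\mathbf{N}\mathbf{u}=c\mathbf{u}$ exhibits a positive eigenvector of the nonnegative matrix $\mathbf{N}$, so by Lemma~4 (Corollary 8.1.30 of \cite{Horn2013}) we get $c=\rho(\mathbf{N})$ with $\mathbf{u}$ belonging to $\rho(\mathbf{N})$. Since $c=\rho(\mathbf{N})$ is itself an eigenvalue of $\mathbf{N}$ and dominates the modulus of every eigenvalue, it is the largest eigenvalue of the (symmetric) matrix $\mathbf{N}$, whence $\lambda_{\max}(\mathbf{M})=\lambda_{\max}(\mathbf{N})-c=0$. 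As $\mathbf{M}$ is symmetric, all of its eigenvalues are therefore $\leq 0$, i.e. $\mathbf{A}^T\mathbf{D}+\mathbf{D}\mathbf{A}$ is negative semi-definite, which finishes the argument. The one genuine obstacle is exactly this last step: a positive eigenvector of a symmetric matrix need not in general belong to the largest eigenvalue, and it is the Metzler structure — via the shift-to-nonnegative device together with Lemma~4 — that forces it to do so; alternatively one could reach the same conclusion through the Rayleigh characterization of Lemma~2, but the shift argument keeps the reasoning shortest.
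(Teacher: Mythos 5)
Your proof is correct, and there is nothing in the paper to check it against: the statement is the paper's Lemma 8, which is simply imported from Lemma A.1 of \cite{Khanafer2014a} and never proved internally. Your argument is self-contained and uses only tools the paper itself quotes: Lemma 5 applied to $\mathbf{A}$ and to $\mathbf{A}^T$ (which is again irreducible and Metzler, with the same spectrum) gives positive right and left null vectors $\mathbf{u}$ and $\mathbf{v}$; the choice $\mathbf{D}=\mathrm{diag}(v_i/u_i)$ makes $\mathbf{M}=\mathbf{A}^T\mathbf{D}+\mathbf{D}\mathbf{A}$ a symmetric Metzler matrix with $\mathbf{M}\mathbf{u}=\mathbf{0}$; and the shift $\mathbf{N}=\mathbf{M}+c\mathbf{E}_N$ together with Lemma 4 correctly forces $c=\rho(\mathbf{N})$, hence $\lambda_{\max}(\mathbf{M})=0$. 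You are also right that this last step genuinely needs the Metzler structure: a positive eigenvector of a general symmetric matrix need not belong to $\lambda_{\max}$, so the shift-to-nonnegative device is doing real work. The construction of $\mathbf{D}$ is the standard one; where your route differs from the classical finish (and from the argument in the cited source) is the final step. The classical proof avoids Perron--Frobenius at that stage and instead certifies negative semi-definiteness directly: using $\mathbf{M}\mathbf{u}=\mathbf{0}$ to eliminate the diagonal entries $M_{ii}=-\sum_{j\neq i}M_{ij}u_j/u_i$, one obtains for every $\mathbf{x}$
\[
\mathbf{x}^T\mathbf{M}\mathbf{x}=-\sum_{i<j}M_{ij}\,u_iu_j\left(\frac{x_i}{u_i}-\frac{x_j}{u_j}\right)^{2}\leq 0,
\]
which exhibits $-\mathbf{M}$ as a weighted graph Laplacian. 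Your version buys brevity and stays entirely within the paper's stated Lemmas 4 and 5; the sum-of-squares version buys an explicit certificate and, as a by-product, identifies the kernel of $\mathbf{M}$ as the span of $\mathbf{u}$ (by irreducibility), which is precisely the extra information the paper exploits in Case 2 of its Theorem 2 when it argues that $\frac{dV_2}{dt}=0$ forces $\mathbf{u}(t)=\mathbf{0}$.
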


\subsubsection{Dynamical system theory}

For fundamental theory on differential dynamical systems, see Ref. \cite{Khalil2002}.

\begin{lm} (Chaplygin Lemma, see Theorem 31.4 in \cite{Szarski1965}) Consider a smooth $n$-dimensional system of differential equations 
	\[
	\frac{d\mathbf{x}(t)}{dt} = \mathbf{f}(\mathbf(\mathbf{x}(t)), \quad t \geq 0
	\]
	and the corresponding system of differential inequalities 
	\[
	\frac{d\mathbf{y}(t)}{dt} \leq \mathbf{f}(\mathbf(\mathbf{y}(t)), \quad t \geq 0
	\]
	with $\mathbf{x}(0) = \mathbf{y}(0)$. Suppose
	\[
	\begin{split}
	& f_i(x_1+a_1, \cdots, x_{i-1}+a_{i-1}, x_i, x_{i+1} + a_{i+1}, \cdots, x_n + a_n) \\
	& \geq f_i(x_1, \cdots, x_n), \quad i = 1, \cdots, n, \quad a_1, \cdots, a_n \geq 0.\\
	\end{split}
	\]
	Then $\mathbf{y}(t) \leq \mathbf{x}(t), t \geq 0$.
\end{lm}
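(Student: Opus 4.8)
The plan is to establish this comparison principle by the classical perturbation-and-contradiction device, exploiting the quasimonotonicity (Kamke--M\"uller) hypothesis displayed in the statement. First I would introduce, for each small $\varepsilon > 0$, the perturbed system $\dot{\mathbf{x}}_\varepsilon = \mathbf{f}(\mathbf{x}_\varepsilon) + \varepsilon \mathbf{1}$ with $\mathbf{x}_\varepsilon(0) = \mathbf{x}(0)$, where $\mathbf{1}$ denotes the all-ones vector. Since $\mathbf{f}$ is smooth and hence locally Lipschitz, the solution $\mathbf{x}_\varepsilon$ exists, is unique, and depends continuously on $\varepsilon$; in particular $\mathbf{x}_\varepsilon(t) \to \mathbf{x}(t)$ uniformly on every compact subinterval as $\varepsilon \to 0$. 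It therefore suffices to prove the strict bound $\mathbf{y}(t) < \mathbf{x}_\varepsilon(t)$ componentwise for every $\varepsilon > 0$ and then let $\varepsilon \to 0$.

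Set $\mathbf{w}(t) = \mathbf{x}_\varepsilon(t) - \mathbf{y}(t)$. At $t = 0$ we have $\mathbf{w}(0) = \mathbf{0}$, while $\dot{\mathbf{w}}(0) = \mathbf{f}(\mathbf{x}(0)) + \varepsilon \mathbf{1} - \dot{\mathbf{y}}(0) \geq \varepsilon \mathbf{1} > \mathbf{0}$, because $\dot{\mathbf{y}}(0) \leq \mathbf{f}(\mathbf{y}(0)) = \mathbf{f}(\mathbf{x}(0))$. Hence $\mathbf{w}(t) > \mathbf{0}$ on a small interval $(0, \tau)$. I would then set $t^* = \sup\{t > 0 : \mathbf{w}(s) > \mathbf{0} \text{ for all } s \in (0, t]\}$ and show that $\mathbf{w}$ stays strictly positive up to the right end of the common interval of existence of $\mathbf{x}_\varepsilon$ and $\mathbf{y}$, i.e.\ that $t^*$ cannot be an interior point.

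The core step is the analysis at a putative interior touching time $t^*$. By continuity $\mathbf{w}(t^*) \geq \mathbf{0}$ with some coordinate vanishing, say $w_i(t^*) = 0$ while $w_j(t^*) \geq 0$ for $j \neq i$. Writing $a_j = w_j(t^*) \geq 0$, the Kamke hypothesis applied at the point $\mathbf{y}(t^*)$ yields $f_i(\mathbf{x}_\varepsilon(t^*)) \geq f_i(\mathbf{y}(t^*))$, since $\mathbf{x}_\varepsilon(t^*)$ agrees with $\mathbf{y}(t^*)$ in the $i$th coordinate and exceeds it by $a_j \geq 0$ in the remaining ones. Combined with $\dot{y}_i(t^*) \leq f_i(\mathbf{y}(t^*))$, this gives $\dot{w}_i(t^*) = f_i(\mathbf{x}_\varepsilon(t^*)) + \varepsilon - \dot{y}_i(t^*) \geq \varepsilon > 0$. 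But $w_i(s) > 0$ for $s < t^*$ together with $w_i(t^*) = 0$ forces the left derivative $\dot{w}_i(t^*) \leq 0$, a contradiction. Thus $\mathbf{y}(t) < \mathbf{x}_\varepsilon(t)$ holds throughout, and passing to the limit $\varepsilon \to 0$ delivers $\mathbf{y}(t) \leq \mathbf{x}(t)$.

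I expect the main obstacle to be precisely this touching-time analysis: the hypotheses supply only non-strict differential inequalities together with identical initial data, so a naive comparison of derivatives at a coincidence point is inconclusive. The perturbation by $\varepsilon \mathbf{1}$ is exactly what upgrades the weak inequalities into the strict margin $\dot{w}_i(t^*) \geq \varepsilon$ needed to contradict $\dot{w}_i(t^*) \leq 0$, while the quasimonotonicity condition is what guarantees that the off-diagonal coupling $f_i(\mathbf{x}_\varepsilon) - f_i(\mathbf{y})$ does not destroy this margin at the instant the $i$th coordinates coincide. Some care is also required to confirm that the strict inequality persists over the \emph{full} common interval of existence of $\mathbf{x}_\varepsilon$ and $\mathbf{y}$ before the limit $\varepsilon \to 0$ is taken.
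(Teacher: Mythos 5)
Your proposal is correct, but note that the paper itself contains no proof of this statement: it is quoted as a known result (Lemma 9, the Chaplygin Lemma) with a citation to Theorem 31.4 of Szarski's \emph{Differential Inequalities}, and is then only \emph{used} (in the proofs of Theorems 2 and 3) to bound solutions of the SIPS model by solutions of comparison systems. So there is no in-paper argument to compare against; what you have supplied is precisely the classical proof of the Kamke--M\"uller comparison principle that Szarski's theorem encapsulates. Your structure is the standard one: perturb the majorant system to $\dot{\mathbf{x}}_\varepsilon = \mathbf{f}(\mathbf{x}_\varepsilon) + \varepsilon\mathbf{1}$, prove the \emph{strict} inequality $\mathbf{y} < \mathbf{x}_\varepsilon$ by a first-touching-time contradiction in which the quasimonotonicity hypothesis controls the off-diagonal coordinates at the touching instant and the $\varepsilon$-margin defeats the sign of the left derivative, and then pass to the limit $\varepsilon \to 0$ using continuous dependence (valid since smoothness gives local Lipschitz continuity). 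The two technical points you flag --- persistence of the strict inequality over the whole compact time window, and existence of $\mathbf{x}_\varepsilon$ on that window for small $\varepsilon$ --- are handled exactly as you suggest: fix $T$ in the interval of existence of $\mathbf{x}$, note $\mathbf{x}_\varepsilon$ exists on $[0,T]$ for $\varepsilon$ small by continuous dependence, run the touching-time argument on $[0,T]$, and let $\varepsilon \to 0$ with $T$ arbitrary. One cosmetic remark: at $t=0$ and at the touching time you should interpret $\dot{\mathbf{y}}$ as a one-sided (Dini) derivative if $\mathbf{y}$ is only assumed differentiable in that weak sense, but this changes nothing in the argument.
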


\begin{lm} (Strauss-Yorke Theorem, see Corollary 3.3 in \cite{Strauss1967}) Consider a differential dynamical system
	\[
	\frac{d\mathbf{x}(t)}{dt} = \mathbf{f}(\mathbf(\mathbf{x}(t)) + \mathbf{g}(t, \mathbf{x}(t)), \quad t \geq 0,
	\]
	with $\mathbf{g}(t, \mathbf{x}(t)) \rightarrow \mathbf{0}$ when $t \rightarrow \infty$. Let
	\[
	\frac{d\mathbf{y}(t)}{dt} = \mathbf{f}(\mathbf(\mathbf{y}(t)), \quad t \geq 0
	\]
	denote the limit system of this system. If the origin is a global attractor for the limit system, and every solution to the original system is bounded on $[0, \infty)$, then the origin is also a global attractor for the original system.
\end{lm}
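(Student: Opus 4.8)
The plan is to treat the perturbed equation as an \emph{asymptotically autonomous} system and to analyze the $\omega$-limit set of a bounded forward orbit. Fix an arbitrary initial state $\mathbf{x}(0)$ and let $\mathbf{x}(t)$ be the corresponding solution of the original system; by hypothesis its forward orbit is bounded, hence precompact. First I would record the standard consequences of precompactness: the set $\omega = \{\mathbf{p} : \mathbf{x}(t_n) \to \mathbf{p} \text{ for some } t_n \to \infty\}$ is nonempty, compact, and connected, and $\mathrm{dist}(\mathbf{x}(t), \omega) \to 0$ as $t \to \infty$. This reduces the whole statement to proving $\omega = \{\mathbf{0}\}$: once that is established, $\mathbf{x}(t) \to \mathbf{0}$, and since $\mathbf{x}(0)$ was arbitrary among bounded solutions the origin is a global attractor for the original system.

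The crucial step is to show that $\omega$ is invariant under the limit (autonomous) flow generated by $\frac{d\mathbf{y}}{dt} = \mathbf{f}(\mathbf{y})$. Given $\mathbf{p} \in \omega$ with $\mathbf{x}(t_n) \to \mathbf{p}$, I would consider the time-translated solutions $\mathbf{x}_n(s) := \mathbf{x}(t_n + s)$, which satisfy $\frac{d\mathbf{x}_n}{ds} = \mathbf{f}(\mathbf{x}_n(s)) + \mathbf{g}(t_n + s, \mathbf{x}_n(s))$. Because the orbit is bounded and $\mathbf{f}$ is smooth (hence locally Lipschitz), the family $\{\mathbf{x}_n\}$ is uniformly bounded and equicontinuous on every compact $s$-interval, so by Arzel\`a--Ascoli a subsequence converges uniformly on compacta to some $\mathbf{y}(s)$ with $\mathbf{y}(0) = \mathbf{p}$. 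Since $\mathbf{g}(t, \cdot) \to \mathbf{0}$ as $t \to \infty$, passing to the limit in the integrated form of the equation shows that $\mathbf{y}$ solves the limit system $\frac{d\mathbf{y}}{dt} = \mathbf{f}(\mathbf{y})$, while $\mathbf{y}(s) \in \omega$ throughout. Running the same argument on intervals $[-T, 0]$ (legitimate for $n$ large, since $t_n - T \geq 0$) and diagonalizing as $T \to \infty$ produces a complete orbit of the limit flow through $\mathbf{p}$ lying in $\omega$; thus $\omega$ is invariant under the limit flow.

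Finally I would show that the only compact invariant set of the limit flow is $\{\mathbf{0}\}$. Because the origin is a global attractor, every complete orbit contained in the compact set $\omega$ converges to $\mathbf{0}$ in forward time, so $\mathbf{0} \in \omega$. To exclude any $\mathbf{p} \neq \mathbf{0}$, I would use a converse Lyapunov function $V$ for the origin (with $V > 0$ off the origin and $\dot V < 0$ off the origin): along any nonconstant complete orbit in $\omega$, $V$ is strictly decreasing, its limit as $s \to +\infty$ equals $V(\mathbf{0}) = 0$, and its limit as $s \to -\infty$ is attained on an invariant subset where $\dot V = 0$, hence also equals $V(\mathbf{0}) = 0$; a strictly decreasing function with equal endpoint limits must be constant, forcing the orbit to be $\mathbf{0}$. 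Therefore $\omega = \{\mathbf{0}\}$, which finishes the proof. The main obstacle is the invariance step, where the nonautonomous perturbation must be controlled uniformly along the diverging times $t_n$ --- precisely the role of $\mathbf{g}(t, \mathbf{x}) \to \mathbf{0}$. A secondary but genuine subtlety is that mere global attraction (without the accompanying stability that makes the origin the unique compact invariant set) is in general insufficient for this class of results, so the argument implicitly relies on the origin being \emph{asymptotically} stable for the limit system.
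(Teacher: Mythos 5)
Your proposal targets a statement the paper never actually proves: Lemma 10 is quoted verbatim from Corollary 3.3 of Strauss and Yorke \cite{Strauss1967}, so the only meaningful comparison is with the original argument in that reference. Your route is genuinely different from it. Strauss and Yorke obtain the result from stability-preservation theorems proved by direct estimates: uniformity of the attraction for the limit system plus continuous dependence (Gronwall-type tracking of the perturbed solution against limit-system solutions over successive long time blocks). You instead run the Markus-type theory of asymptotically autonomous systems: precompactness of the bounded orbit, invariance of its $\omega$-limit set under the limit flow (via Arzel\`a--Ascoli applied to the time translates and passage to the limit in the integrated equation), and then collapse of that limit set to the origin. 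Your invariance step is correct, and a nice feature is that it needs $\mathbf{g}$ to vanish only \emph{along the solution}, which is exactly the hypothesis as stated, since $\mathbf{g}(t_n+s,\mathbf{x}_n(s))=\mathbf{g}(t_n+s,\mathbf{x}(t_n+s))$ and $t_n+s\rightarrow\infty$ uniformly on compact $s$-intervals.

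The one substantive issue is the final step, and you have diagnosed it yourself: the converse Lyapunov function you invoke exists only when the origin is Lyapunov stable in addition to being globally attracting; a $V$ with $V>0$ and $\dot V<0$ off the origin cannot exist for an attracting but unstable equilibrium, because such a $V$ would itself force stability. This is not really a repairable gap in your argument but a defect of the lemma as paraphrased in the paper: if ``global attractor'' is read as mere pointwise attraction of every orbit, the statement is false in general --- for a limit system of Vinograd type (all orbits tend to the origin, yet the origin is unstable), the $\omega$-limit set of an asymptotically autonomous perturbation is only guaranteed to be invariant and chain recurrent for the limit flow and need not reduce to $\{\mathbf{0}\}$. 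Strauss and Yorke's actual hypotheses amount to (uniform) global \emph{asymptotic} stability, and that stronger reading is also how the present paper uses Lemma 10: in Theorems 2--5 the authors first establish asymptotic stability of the relevant equilibrium of the limit system by Lyapunov/LaSalle arguments and only then invoke the lemma. So your proof is a correct proof of the correct (stability-inclusive) statement, by a different and arguably more transparent route; the only improvement needed is to promote the stability assumption from an implicit reliance to an explicit hypothesis.
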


\subsubsection{Fixed point theory}

%For fundamental knowledge on fixed point theory, see Ref. \cite{Agarwal2001}.

\begin{lm} (Brouwer Fixed Point Theorem, see Theorem 4.10 in \cite{Agarwal2001}) Let $C \subset \mathbb{R}^n$ be nonempty, bounded, closed, and convex. Let $f: C \rightarrow C$ be a continuous function. Then $f$ has a fixed point.
\end{lm}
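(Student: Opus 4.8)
The plan is to reduce the statement to the classical \emph{no-retraction theorem} for a Euclidean ball and then invoke it. First I would reduce the general compact convex domain to a closed ball. Since $C$ is bounded, fix $R > 0$ with $C \subset \bar{B}_R = \{x \in \mathbb{R}^n : \|x\| \le R\}$. Because $C$ is nonempty, closed, and convex, the metric (nearest-point) projection $\pi_C : \mathbb{R}^n \to C$ is well defined, and convexity forces it to be $1$-Lipschitz, hence continuous. The composite $F := f \circ \pi_C : \bar{B}_R \to C \subset \bar{B}_R$ is then a continuous self-map of the ball. If Brouwer's theorem holds for $\bar{B}_R$, we obtain $x^\ast = F(x^\ast) = f(\pi_C(x^\ast))$; since $f$ takes values in $C$ we get $x^\ast \in C$, whence $\pi_C(x^\ast) = x^\ast$ and therefore $x^\ast = f(x^\ast)$. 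This disposes of the reduction, and after rescaling it suffices to treat the closed unit ball $\bar{B} = \bar{B}_1$ with boundary sphere $S^{n-1}$ (the case $n = 0$ being trivial).

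Second I would convert the fixed-point statement for $\bar{B}$ into a statement about retractions. Suppose, for contradiction, that a continuous $g : \bar{B} \to \bar{B}$ has \emph{no} fixed point, so $g(x) \ne x$ for every $x$. Then for each $x$ the ray starting at $g(x)$ and passing through $x$ meets $S^{n-1}$ in a unique point $r(x) = x + \lambda(x)(x - g(x))$ with $\lambda(x) \ge 0$ determined by $\|r(x)\| = 1$. Here $\lambda(x)$ is the unique nonnegative root of a quadratic whose leading coefficient $\|x - g(x)\|^2$ is bounded away from $0$ by compactness (as $x \mapsto x - g(x)$ is continuous and nowhere zero) and whose constant term $\|x\|^2 - 1 \le 0$; the quadratic formula then shows $\lambda$, and hence $r$, depends continuously on $x$. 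Moreover, if $\|x\| = 1$ then $\lambda(x) = 0$ and $r(x) = x$, so $r : \bar{B} \to S^{n-1}$ is a continuous retraction of the ball onto its boundary. The existence of such a retraction is exactly what the no-retraction theorem forbids.

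Third, and this is where the genuine content lies, I would establish that no continuous retraction $r : \bar{B} \to S^{n-1}$ exists. The cleanest route is homological: if $i : S^{n-1} \hookrightarrow \bar{B}$ is the inclusion, then $r \circ i = \mathrm{id}_{S^{n-1}}$, so applying the functor $H_{n-1}(-;\mathbb{Z})$ gives maps $H_{n-1}(S^{n-1}) \to H_{n-1}(\bar{B}) \to H_{n-1}(S^{n-1})$ whose composite is the identity; but $H_{n-1}(\bar{B}) = 0$ since $\bar{B}$ is contractible, while $H_{n-1}(S^{n-1}) \cong \mathbb{Z}$ for $n \ge 2$, so the identity on $\mathbb{Z}$ would factor through $0$, a contradiction (the cases $n = 0, 1$ follow directly from connectedness and the intermediate value theorem). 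The main obstacle is precisely this step: it rests on the nontriviality of $H_{n-1}(S^{n-1})$, i.e. on the fact that the sphere is not contractible, which is the essential topological input and cannot be extracted from the elementary manipulations used in the reduction. If one prefers to avoid algebraic topology, I would instead prove the no-retraction statement analytically in the spirit of Milnor --- smoothing $r$ by a Stone--Weierstrass approximation and deriving a contradiction from the fact that $t \mapsto \mathrm{vol}\big((\mathrm{id} + t(r - \mathrm{id}))(\bar{B})\big)$ is simultaneously a polynomial in $t$ and the volume of a region collapsing onto $S^{n-1}$ --- or combinatorially via Sperner's lemma on a simplex in place of the ball.
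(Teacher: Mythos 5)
The paper does not prove this lemma at all: it is quoted as a known result (Theorem 4.10 of the cited monograph of Agarwal, Meehan and O'Regan) and used purely as a black box in the proof of Theorem 1, where the monotone map $\mathbf{T}$ is shown to send the compact convex set $K=\prod_{i=1}^N[\varepsilon v_i,1]$ into itself. Your proposal, by contrast, actually proves Brouwer's theorem, and does so correctly along the classical route: (i) reduction from a nonempty bounded closed convex $C$ to a closed ball via the $1$-Lipschitz nearest-point projection $\pi_C$, with the observation that any fixed point of $f\circ\pi_C$ must already lie in $C$; (ii) conversion of a hypothetical fixed-point-free self-map $g$ of the ball into a retraction $r$ of the ball onto its boundary sphere, where continuity of $\lambda(x)$ follows from the quadratic formula since the leading coefficient $\|x-g(x)\|^2$ is bounded below on the compact ball --- the one point worth spelling out is that on the boundary the linear coefficient $2\langle x,\,x-g(x)\rangle = 2\bigl(1-\langle x,g(x)\rangle\bigr)$ is nonnegative by Cauchy--Schwarz, so the unique nonnegative root is indeed $0$ and $r$ fixes the sphere pointwise; and (iii) the no-retraction theorem via the functor $H_{n-1}$, which is where the genuine topological content sits, with the low-dimensional cases handled by connectedness. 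The trade-off is clear: the paper's citation is the appropriate choice in context, since Brouwer's theorem is classical and any proof of it requires machinery (homology, Milnor's volume argument, or Sperner's lemma) entirely disjoint from the matrix-theoretic and dynamical-systems tools the paper otherwise employs; your argument makes the lemma self-contained, at the cost of importing algebraic topology. As a proof of the statement itself it is complete and correct.
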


\subsection{Equilibria}

An equilibrium for a differential dynamical system is a state of the system that is unvaried over time. %Starting from an equilibrium, the system will always stay at the equilibrium. 
The first step to understanding a differential dynamical system is to examine all of its equilibria. The generic SIPS model might admit four different types of equilibria, which are defined as follows.

\begin{de}
	Let $\mathbf{E} = (I_1, \cdots, I_N, P_1, \cdots, P_N)^T$ be an equilibrium of the generic SIPS model. Let $\mathbf{I} = (I_1, \cdots, I_N)^T$, $\mathbf{P} = (P_1, \cdots, P_N)^T$.
	
	\begin{enumerate}
		
		\item[(a)] $\mathbf{E}$ is \emph{susceptible} if  $\mathbf{I} = \mathbf{P}=\mathbf{0}$, which stands for the steady state of the population for which all nodes are susceptible almost surely.
		
		\item[(b)] $\mathbf{E}$ is \emph{infected} if $\mathbf{I} \neq \mathbf{0}$ and $\mathbf{P} = \mathbf{0}$, which stands for a steady state of the population for which some nodes are infected with positive probability and no node is patched almost surely.
		
		\item[(c)] $\mathbf{E}$ is \emph{patched} if $\mathbf{I} = \mathbf{0}$ and $\mathbf{P} \neq \mathbf{0}$, which stands for a steady state of the population for which some nodes are patched with positive probability and no node is infected almost surely.
		
		\item[(d)] $\mathbf{E}$ is \emph{mixed} if $\mathbf{I} \neq \mathbf{0}$ and $\mathbf{P} \neq \mathbf{0}$, which stands for a steady state of the population for which some nodes are infected with positive probability and some nodes are patched with positive probability.
		
	\end{enumerate}
	
\end{de}

Obviously, the generic SIPS model always admits the susceptible equilibrium, denoted $\mathbf{E}_s=(0,\cdots,0)^T$. Due to the complexity of the model, we are unable to figure out its equilibria other than the susceptible one. For the purpose of examining all possible equilibria of the model, we need to define a pair of Metzler matrices as follows.
\begin{equation}
	\mathbf{Q}_1=\frac{\partial \mathbf{\mathbf{f}(\mathbf{0})}}{\partial \mathbf{x}}-\mathbf{D}_{\gamma},\quad \mathbf{Q}_2=\frac{\partial{\mathbf{g}(\mathbf{0})}}{\partial\mathbf{x}}-\mathbf{D}_{\alpha},
\end{equation}
where $\frac{\partial \mathbf{\mathbf{f}(\mathbf{0})}}{\partial \mathbf{x}} = \left(\frac{\partial f_i(\mathbf{0})}{\partial x_j}\right)_{N \times N}$ and $\frac{\partial {\mathbf{g}(\mathbf{0})}}{\partial \mathbf{x}} = \left(\frac{\partial g_i(\mathbf{0})}{\partial x_j}\right)_{N \times N}$ stand for the Jacobian matrix of $\mathbf{f}$ and $\mathbf{g}$ evaluated at the origin, respectively. As $G_v$ is strongly connected, $\mathbf{Q}_1$ is irreducible. As $G_p$ is strongly connected, $\mathbf{Q}_2$ is irreducible.

For the linear SIPS model, we have

\[
\mathbf{Q}_1=\mathbf{M}_{\beta}-\mathbf{D}_{\gamma},\quad \mathbf{Q}_2=\mathbf{M}_{\delta_1}-\mathbf{D}_{\alpha}.
\]

The following theorem is a fundamental result about the equilibria of the generic SIPS model.

\begin{thm}
	Consider model (4). The following claims hold.
	\begin{enumerate}
		\item [(a)] If
		$s(\mathbf{Q}_1)>0$, then there is a unique infected equilibrium, denoted $\mathbf{E}_i=(I_1^*,\cdots,I_N^*,0,\cdots,0)^T$. Moreover, $0 < I_i^* < 1$, $1 \leq i \leq N$. Let $\mathbf{I}^* = (I_1^*,\cdots,I_N^*)^T$.
		\item [(b)] If
		$s(\mathbf{Q}_2)>0$, then there is a unique patched equilibrium, denoted $\mathbf{E}_p=(0,\cdots,0,P_1^*,\cdots,P_N^*)^T$. Moreover, $0 < P_i^* < 1$, $1 \leq i \leq N$. Let $\mathbf{P}^* = (P_1^*,\cdots,P_N^*)^T$.
		\item [(c)] Suppose $\mathbf{g}=\mathbf{h}$. If $s(\mathbf{Q}_2)>0$ and
		\[ s(\mathbf{Q}_1-(\mathbf{Q}_1+\mathbf{D}_{\gamma}) diag\mathbf{P}^* -diag\mathbf{g}(\mathbf{P}^*))>0, 
		\]
		then there is a unique mixed equilibrium, denoted $\mathbf{E}_m=(I_1^{**},\cdots,I_N^{**},P_1^{**},\cdots,P_N^{**})^T$. Moreover, $P_i^{**} = P_i^{*}$, $0 < I_i^{**} < 1 - P_i^{*}$, $1\leq i\leq N$.
	\end{enumerate}
\end{thm}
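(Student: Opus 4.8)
The plan is to reduce each existence-and-uniqueness claim to a single-species positive-fixed-point problem and attack it with the available tools: Brouwer's theorem (Lemma 11) together with the Perron eigenvector of an irreducible Metzler matrix (Lemma 5) for existence, and the concavity hypothesis (C$_5$) for uniqueness.

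For part (a) I would first set $\mathbf{P}=\mathbf{0}$ in model (4). By the nullity condition (C$_2$), $g_i(\mathbf{0})=h_i(\mathbf{0})=0$, so the $P$-equations hold identically and the $I$-equations collapse to $(1-I_i)f_i(\mathbf{I})=\gamma_i I_i$, $1\le i\le N$. Solving each relation for $I_i$ recasts the system as a fixed point of the map $\Phi$ with $\Phi_i(\mathbf{I})=f_i(\mathbf{I})/(f_i(\mathbf{I})+\gamma_i)$, which sends $[0,1]^N$ into itself, fixes $\mathbf{0}$, is componentwise increasing by (C$_4$), and is concave by (C$_5$) (the scalar map $s\mapsto s/(s+\gamma_i)$ being increasing and concave, and $f_i$ concave). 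To produce a nontrivial fixed point I would use $s(\mathbf{Q}_1)>0$: since $\mathbf{Q}_1$ is irreducible Metzler, Lemma 5 furnishes a positive eigenvector $\mathbf{v}$ with $\mathbf{Q}_1\mathbf{v}=s(\mathbf{Q}_1)\mathbf{v}$, i.e. $(\frac{\partial\mathbf{f}(\mathbf{0})}{\partial\mathbf{x}}\mathbf{v})_i=(s(\mathbf{Q}_1)+\gamma_i)v_i$. A first-order expansion then gives $\Phi_i(\varepsilon\mathbf{v})=\varepsilon v_i(1+s(\mathbf{Q}_1)/\gamma_i)+o(\varepsilon)>\varepsilon v_i$ for all $i$ and all small $\varepsilon>0$. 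By monotonicity $\Phi$ maps the box $\{\mathbf{I}:\varepsilon\mathbf{v}\le\mathbf{I}\le\mathbf{1}\}$ into itself, so Brouwer yields a fixed point $\mathbf{I}^*\ge\varepsilon\mathbf{v}>\mathbf{0}$ with $I_i^*=\Phi_i(\mathbf{I}^*)<1$. Part (b) is verbatim the same with $(\mathbf{g},\mathbf{D}_\alpha,\mathbf{Q}_2)$ replacing $(\mathbf{f},\mathbf{D}_\gamma,\mathbf{Q}_1)$ after setting $\mathbf{I}=\mathbf{0}$.

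For uniqueness I would run the standard sublinearity argument. Suppose $\mathbf{I}^*$ and $\mathbf{J}^*$ are positive fixed points of $\Phi$ and put $\kappa=\max_i J_i^*/I_i^*$; after possibly swapping the two we may take $\kappa>1$, so that $\mathbf{J}^*\le\kappa\mathbf{I}^*$ with equality in some coordinate $m$. Monotonicity gives $J_m^*=\Phi_m(\mathbf{J}^*)\le\Phi_m(\kappa\mathbf{I}^*)$, while the strict concavity of $t\mapsto\Phi_m(t\mathbf{I}^*)$ on the positive ray (which holds because $s\mapsto s/(s+\gamma_m)$ is strictly concave and $f_m$ is strictly increasing along $\mathbf{I}^*>\mathbf{0}$ by (C$_1$), (C$_4$) and strong connectivity of $G_v$) gives $\Phi_m(\kappa\mathbf{I}^*)<\kappa\Phi_m(\mathbf{I}^*)=\kappa I_m^*=J_m^*$, a contradiction. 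Hence the positive fixed point is unique.

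Part (c) is where the hypothesis $\mathbf{g}=\mathbf{h}$ earns its keep. With $h_i=g_i$ the two $\mathbf{I}$-dependent terms in the $P$-equation combine, $(1-I_i-P_i)g_i(\mathbf{P})+I_ig_i(\mathbf{P})=(1-P_i)g_i(\mathbf{P})$, so at any equilibrium $\mathbf{P}$ satisfies $(1-P_i)g_i(\mathbf{P})=\alpha_i P_i$ independently of $\mathbf{I}$ — exactly the patched-equilibrium relation of part (b). An all-or-nothing argument using (C$_1$), (C$_2$), (C$_4$) and the strong connectivity of $G_p$ shows the only nonnegative solutions are $\mathbf{0}$ and the $\mathbf{P}^*$ of part (b), so a mixed equilibrium forces $P_i^{**}=P_i^*$. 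Substituting $\mathbf{P}=\mathbf{P}^*$ into the $I$-equation leaves $(1-P_i^*-I_i)f_i(\mathbf{I})=(\gamma_i+g_i(\mathbf{P}^*))I_i$, which has precisely the form treated in (a)–(b) with the map $\Psi_i(\mathbf{I})=(1-P_i^*)f_i(\mathbf{I})/(f_i(\mathbf{I})+\gamma_i+g_i(\mathbf{P}^*))$ sending $\prod_i[0,1-P_i^*)$ into itself; the same Brouwer-plus-eigenvector and concavity arguments then give a unique $\mathbf{I}^{**}$ with $0<I_i^{**}<1-P_i^*$. The remaining task, and the step I expect to be the true obstacle, is to show that the natural existence threshold $\rho(D\Psi(\mathbf{0}))>1$ — where $D\Psi(\mathbf{0})=diag\big((1-P_i^*)/(\gamma_i+g_i(\mathbf{P}^*))\big)(\mathbf{Q}_1+\mathbf{D}_\gamma)$ — coincides with the stated condition $s\big(\mathbf{Q}_1-(\mathbf{Q}_1+\mathbf{D}_\gamma)diag\mathbf{P}^*-diag\mathbf{g}(\mathbf{P}^*)\big)>0$. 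This is a purely matrix-theoretic translation: using the equivalence $s(\mathbf{B}-\mathbf{D})>0\iff\rho(\mathbf{D}^{-1}\mathbf{B})>1$ for nonnegative $\mathbf{B}$ and positive diagonal $\mathbf{D}$ (obtainable from Lemmas 3–5) together with $\rho(\mathbf{X}\mathbf{Y})=\rho(\mathbf{Y}\mathbf{X})$, one checks that the row-scaled matrix $diag(1-P_i^*)(\mathbf{Q}_1+\mathbf{D}_\gamma)-\mathbf{D}_\gamma-diag\mathbf{g}(\mathbf{P}^*)$ arising from $\Psi$ and the column-scaled matrix in the statement have spectral abscissae of the same sign, so the two thresholds agree.
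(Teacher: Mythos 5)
Your proposal is correct and follows essentially the same route as the paper's proof: the same fixed-point map $T_i(\mathbf{x})=f_i(\mathbf{x})/(\gamma_i+f_i(\mathbf{x}))$, the same existence argument via Brouwer's theorem on the box $\prod_i[\varepsilon v_i,1]$ built from the positive Perron eigenvector of $\mathbf{Q}_1$, the same sublinearity-from-concavity uniqueness argument, and the same reduction of part (c) to parts (a)--(b) using $\mathbf{g}=\mathbf{h}$ to decouple the $\mathbf{P}$-equation. The one step you single out as the ``true obstacle'' in (c) --- reconciling the row-scaled threshold matrix $\mathbf{S}\mathbf{F}-\mathbf{D}$ coming from $D\Psi(\mathbf{0})$ (where $\mathbf{F}=\mathbf{Q}_1+\mathbf{D}_\gamma$, $\mathbf{S}=\mathbf{E}_N-diag\mathbf{P}^*$, $\mathbf{D}=\mathbf{D}_\gamma+diag\,\mathbf{g}(\mathbf{P}^*)$) with the column-scaled matrix $\mathbf{F}\mathbf{S}-\mathbf{D}$ in the statement, a point the paper silently glosses over with ``analogous to Claim (a)'' --- is in fact immediate: since diagonal matrices commute, $\mathbf{S}^{-1}(\mathbf{S}\mathbf{F}-\mathbf{D})\mathbf{S}=\mathbf{F}\mathbf{S}-\mathbf{D}$, so the two matrices are similar and have identical spectra, which settles the equivalence more directly than your $\rho(\mathbf{X}\mathbf{Y})=\rho(\mathbf{Y}\mathbf{X})$ sign-comparison argument.
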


%The proof of the theorem is left to Appendix B. 
\begin{IEEEproof}
	(a) Suppose model (4) admits an infected equilibrium
	$\mathbf{E}=(I_1,\cdots,I_N,0,\cdots, 0)^T$. We show that $0 < I_i < 1$, $1\leq i \leq N$. It follows from the model that 
	\[
	I_i=\frac{f_i(I_1,\cdots,I_N)}{\gamma_i+f_i(I_1,\cdots,I_N)}< 1, \quad 1 \leq i \leq N.
	\]
	On the contrary, suppose that some $I_k = 0$. It follows from model (4) that $f_k(I_1,\cdots,I_N)=0$. As $G_v$ is strongly connected, we get that some $\beta_{kl} >0$, implying that $I_l = 0$. Repeating this argument, we finally get that $I_i = 0$, $1\leq i \leq N$, contradicting the assumption that  $\mathbf{E}$ is an infected equilibrium. Hence, $I_i > 0$, $1\leq i \leq N$.
	
	Define a continuous mapping $\mathbf{T}=(T_1,\cdots,T_N)^T: (0,1]^N \rightarrow (0,1]^N$ by 
	\[
	T_i(\mathbf{x})=\frac{f_i(\mathbf{x})}{\gamma_i+f_i(\mathbf{x})}, \quad \mathbf{x}=(x_1,\cdots,x_N)^T\in(0,1]^N
	\].
	It suffices to show that $\mathbf{T}$ has a unique fixed point.
	Let $\mathbf{P}(t) \equiv \mathbf{0}$ and rewrite model (4) as
	$\frac{d\mathbf{I}(t)}{dt}=\mathbf{Q}_1\mathbf{I}(t)+\mathbf{G}(\mathbf{I}(t))$,
	where $\mathbf{G}(\mathbf{I}(t))=o(\|\mathbf{I}(t)\|)$. By Lemma 5, $\mathbf{Q}_1$ has a positive eigenvector $\mathbf{v}=(v_1,\cdots,v_N)^T$ belonging to eigenvalue $s(\mathbf{Q}_1)$. As $s(Q_1) > 0$, we have $\mathbf{Q}_1\mathbf{v}=s(\mathbf{Q}_1)\mathbf{v} > \mathbf{0}$. Hence, there is a small $\varepsilon > 0$ such that
	$\mathbf{Q}_1 \cdot (\varepsilon \mathbf{v})+\mathbf{G}(\varepsilon \mathbf{v})=\varepsilon s(\mathbf{Q}_1)\mathbf{v}+\mathbf{G}(\varepsilon \mathbf{v})\geq \mathbf{0}$,
	which is equivalent to $\mathbf{T}(\varepsilon \mathbf{v})\geq \varepsilon \mathbf{v}$. On the other hand, it is easily verified that $\mathbf{T}$ is monotonically increasing, i.e., $\mathbf{u} \geq \mathbf{w}$ implies $\mathbf{T}(\mathbf{u}) \geq \mathbf{T}(\mathbf{w})$.  Define a compact convex set as $
	K=\prod_{i=1}^N[\varepsilon v_i,1]$. Then $\mathbf{T}|_{K}$
	maps $K$ into $K$. It follows from Lemma 11 that $\mathbf{T}$ has a fixed point $\mathbf{I}^*=(I_1^*,\cdots,I_N^*)^T$ in $K$. 
	
	Suppose $\mathbf{T}$ has a fixed point $\mathbf{I}^{**}=(I_1^{**},\cdots,I_N^{**})^T$ other than $\mathbf{I}^{*}$. Let $\theta=\max_{1 \leq i \leq N}\frac{I_i^*}{I_i^{**}}$, $i_0=\arg\max_{1 \leq i \leq N}\frac{I_i^*}{I_i^{**}}$. Without loss of generality, assume $\theta>1$. It follows that 
	\[
	\begin{split}
	I_{i_0}^{*}=& T_{i_0}(\mathbf{I}^*)\leq T_{i_0}(\theta\mathbf{I}^{**}) \\
	=&\frac{f_{i_0}(\theta\mathbf{I}^{**})}{\gamma_{i_0}+f_{i_0}(\theta\mathbf{I}^{**})}< \frac{f_{i_0}(\theta \mathbf{I}^{**})}{\gamma_{i_0}+f_{i_0}(\mathbf{I}^{**})} \\
	\leq &\frac{\theta f_{i_0}(\mathbf{I}^{**})}{\gamma_{i_0}+f_{i_0}(\mathbf{I}^{**})}=\theta T_{i_0}(\mathbf{I}^{**})=\theta I_{i_0}^{**},
	\end{split}
	\]
	where $f_{i_0}(\theta\mathbf{I}^{**}) \leq \theta f_{i_0}(\mathbf{I}^{**})$ follows from the concavity of $f_{i_0}$. This contradicts the assumption that $I_{i_0}^{*}=\theta I_{i_0}^{**}$. Hence, $\mathbf{I}^{*}$ is the unique fixed point of $\mathbf{T}$. The proof is complete.

	(b) The argument is analogous to that for Claim (a) and hence is
	omitted.
	
	(c) Suppose model (4) admits a mixed equilibrium
	$\mathbf{E}=(I_1,\cdots,I_N,P_1,\cdots, P_N)^T$. By an argument analogous to that for Claim (b), we get that $P_i = P_i^*$, $1 \leq i \leq N$. It follows from the first $N$ equations of model (4) that
	\[
	I_i=\frac{ (1-P_i^*)f_i(I_1,\cdots,I_N)}{\gamma_i+f_i(I_1,\cdots,I_N)+h_i(\mathbf{P}^*)},
	\quad\quad 1\leq i\leq N.
	\]
	The subsequent argument is analogous to that for Claim (a).
\end{IEEEproof}

This theorem manifests that the existence and locations of equilibria of the generic SIPS model are dependent in a complex way upon the model parameters. 

For later use, let
\[
\begin{split}
I^* &= \frac{1}{N}\sum_{i=1}^N I_i^*, \quad P^* = \frac{1}{N}\sum_{i=1}^N P_i^*, \\
I^{**} &= \frac{1}{N}\sum_{i=1}^N I_i^{**}, \quad P^{**} = \frac{1}{N}\sum_{i=1}^N P_i^{**}.
\end{split}
\]

\begin{rk}
	As a decentralized patch distribution protocol is commonly designed so that multiple copies of a patch are forwarded from a patched node to some or all of its unpatched neighbors without further differentiating whether such a neighbor is susceptible or infected, in many real scenarios the condition of $\mathbf{g}=\mathbf{h}$ in claim (c) of Theorem 1 can be regarded as being true.
\end{rk}

\subsection{Attractivity analysis}

If an equilibrium of a differential dynamical system is a global attractor, then every orbit of the system would approach the equilibrium. %Therefore, investigating the global attractivity of an equilibrium of a dynamical system helps us understand the development tendency of all orbits of the system. 
Now, let us examine the global attractivity of the equilibria of the generic SIPS model. First, we have the following criterion for the global attractivity of the susceptible equilibrium of the generic SIPS model.

\begin{thm}
	Consider model (4). Suppose $s(\mathbf{Q}_1)\leq 0$ and $s(\mathbf{Q}_2)\leq 0$. Then the susceptible equilibrium $\mathbf{E}_s$ is the attractor for $\Omega$. Hence, $I(t) \rightarrow 0$ and $P(t) \rightarrow 0$ as $t \rightarrow \infty$.
\end{thm}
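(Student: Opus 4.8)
The plan is to establish the two convergences $\mathbf{I}(t)\to\mathbf{0}$ and $\mathbf{P}(t)\to\mathbf{0}$ in succession, exploiting that the $\mathbf{I}$-equation is bounded from above by a system that does not involve $\mathbf{P}$, while the $\mathbf{P}$-equation becomes asymptotically autonomous once $\mathbf{I}$ has been driven to zero. Averaging then gives $I(t)\to0$ and $P(t)\to0$, so that $\mathbf{E}_s$ attracts all of the positively invariant set $\Omega$.

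\emph{Stage 1 ($\mathbf{I}\to\mathbf{0}$).} In $\Omega$ every $\mathbf{P}$-dependent contribution to $\dot{\mathbf{I}}$ is a loss term, so dropping $-P_if_i$ and $-I_ih_i(\mathbf{P})$ gives the componentwise inequality $\dot{\mathbf{I}}\le (\mathbf{E}_N-diag\mathbf{I})\mathbf{f}(\mathbf{I})-\mathbf{D}_{\gamma}\mathbf{I}=:\mathbf{F}(\mathbf{I})$. The field $\mathbf{F}$ is cooperative on $[0,1]^N$, since for $j\neq i$ one has $\partial F_i/\partial x_j=(1-x_i)\,\partial f_i/\partial x_j\ge 0$ by (C$_4$); hence the Chaplygin comparison (Lemma 9) yields $\mathbf{I}(t)\le\mathbf{y}(t)$, where $\mathbf{y}$ solves the autonomous virus-only system $\dot{\mathbf{y}}=\mathbf{F}(\mathbf{y})$, $\mathbf{y}(0)=\mathbf{I}(0)$, which leaves $[0,1]^N$ invariant. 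It therefore suffices to show the origin is a global attractor for $\dot{\mathbf{y}}=\mathbf{F}(\mathbf{y})$. Because $\mathbf{Q}_1$ is irreducible Metzler with $s(\mathbf{Q}_1)\le 0$, Lemmas 6--8 furnish a positive definite diagonal $\mathbf{D}_1=diag(d_i)$ with $\mathbf{Q}_1^T\mathbf{D}_1+\mathbf{D}_1\mathbf{Q}_1=:-\mathbf{W}_1\preceq 0$ (negative definite when $s(\mathbf{Q}_1)<0$). Concavity (C$_5$) gives $\mathbf{f}(\mathbf{y})\le\frac{\partial\mathbf{f}(\mathbf{0})}{\partial\mathbf{x}}\mathbf{y}$, so $\mathbf{F}(\mathbf{y})=\mathbf{Q}_1\mathbf{y}-\tilde{\mathbf{r}}(\mathbf{y})$ with $\tilde r_i(\mathbf{y})\ge y_if_i(\mathbf{y})\ge 0$, and for $V(\mathbf{y})=\mathbf{y}^T\mathbf{D}_1\mathbf{y}$ one computes $\dot V=2\mathbf{y}^T\mathbf{D}_1\mathbf{F}(\mathbf{y})=-\mathbf{y}^T\mathbf{W}_1\mathbf{y}-2\sum_i d_iy_i\tilde r_i(\mathbf{y})\le 0$.

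If $s(\mathbf{Q}_1)<0$ then $\mathbf{W}_1\succ0$ and $\dot V\le -cV$, giving exponential convergence. If $s(\mathbf{Q}_1)=0$ I invoke LaSalle's principle: on $\{\dot V=0\}$ one needs both $\mathbf{y}^T\mathbf{W}_1\mathbf{y}=0$ and $y_i\tilde r_i(\mathbf{y})=0$ for all $i$. Using $\ker\mathbf{W}_1=\mathrm{span}\{\mathbf{v}\}$, where $\mathbf{v}>0$ is the (simple, by Lemma 5) Perron eigenvector of $\mathbf{Q}_1$, the first condition forces $\mathbf{y}=c\mathbf{v}$; for $c>0$ strong connectivity of $G_v$ gives $f_i(c\mathbf{v})>0$, whence $y_i\tilde r_i\ge c^2v_i^2f_i(c\mathbf{v})>0$, a contradiction. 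Thus the largest invariant subset of $\{\dot V=0\}$ is $\{\mathbf{0}\}$, so $\mathbf{y}(t)\to\mathbf{0}$ and hence $\mathbf{I}(t)\to\mathbf{0}$.

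\emph{Stage 2 ($\mathbf{P}\to\mathbf{0}$).} Rewrite the second block as $\dot P_i=(1-P_i)g_i(\mathbf{P})-\alpha_iP_i+I_i(t)\big(h_i(\mathbf{P})-g_i(\mathbf{P})\big)$. On the compact set $\Omega$ the $g_i,h_i$ are bounded, so the perturbation is dominated by $I_i(t)\cdot C\to 0$ uniformly in $\mathbf{P}$, making the system asymptotically autonomous with limit system the patch-only system $\dot{\mathbf{P}}=(\mathbf{E}_N-diag\mathbf{P})\mathbf{g}(\mathbf{P})-\mathbf{D}_{\alpha}\mathbf{P}$. Applying Stage 1 verbatim to $(\mathbf{g},\mathbf{D}_{\alpha},\mathbf{Q}_2)$ with $s(\mathbf{Q}_2)\le 0$ shows the origin is a global attractor for this limit system; since all solutions remain in the bounded set $\Omega$, the Strauss--Yorke theorem (Lemma 10) transfers global attractivity to the actual $\mathbf{P}$-system, so $\mathbf{P}(t)\to\mathbf{0}$. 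The \textbf{main obstacle} is the critical case $s(\mathbf{Q}_1)=0$ (resp. $s(\mathbf{Q}_2)=0$), where the linearization is only marginally stable and a purely linear comparison does not decay. The argument succeeds only because (i) the saturation factor $(1-y_i)$ is retained, supplying the strictly negative higher-order term $-y_if_i(\mathbf{y})$, and (ii) the kernel identity $\ker\mathbf{W}_1=\mathrm{span}\{\mathbf{v}\}$ confines the LaSalle set to the Perron ray on which that term bites; confirming that the $\mathbf{D}_1$ produced by Lemma 8 has exactly this one-dimensional kernel is the delicate point that must be checked.
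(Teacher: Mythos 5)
Your proposal is correct and takes essentially the same route as the paper's proof: Chaplygin comparison (Lemma 9) to reduce the virus block to the virus-only system, a diagonal quadratic Lyapunov function supplied by Lemmas 7/8 with LaSalle and a kernel argument in the critical case $s(\mathbf{Q}_1)=0$, and then Strauss--Yorke (Lemma 10) to transfer attractivity of the patch-only limit system to the actual $\mathbf{P}$-dynamics. The delicate point you flag does check out: since $\mathbf{Q}_1\mathbf{v}=\mathbf{0}$ for the Perron vector $\mathbf{v}>0$ of $\mathbf{Q}_1$, one gets $\mathbf{v}^T\mathbf{W}_1\mathbf{v}=0$ and hence $\mathbf{W}_1\mathbf{v}=\mathbf{0}$ by positive semi-definiteness, while Lemma 5 applied to the irreducible Metzler matrix $-\mathbf{W}_1=\mathbf{Q}_1^T\mathbf{D}_1+\mathbf{D}_1\mathbf{Q}_1$ makes $0$ a simple eigenvalue, so $\ker\mathbf{W}_1=\mathrm{span}\{\mathbf{v}\}$ as you claim --- the paper reaches the same conclusion by invoking Lemma 5 for the symmetrized matrix's own positive eigenvector and then Lemma 2 (Rayleigh) to rule out vectors with a zero component.
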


\begin{IEEEproof}
	Let $(\mathbf{I}(t)^T, \mathbf{P}(t)^T)^T$ be a solution to model (4). It follows from the first $N$ equations of model (4) that
	$\frac{d\mathbf{I}(t)}{dt}\leq (\mathbf{E}_N-diag\mathbf{I}(t)) \mathbf{f} (\mathbf{I}(t))-\mathbf{D}_{\gamma} \mathbf{I}(t)$. Consider the comparison system
	\begin{equation}
		\frac{d\mathbf{u}(t)}{dt}=(\mathbf{E}_N-diag\mathbf{u}(t))\mathbf{f}(\mathbf{u}(t))-\mathbf{D}_{\gamma} \mathbf{u}(t)
	\end{equation}
	with $\mathbf{u}(0) = \mathbf{I}(0)$. This system admits the trivial equilibrium $\mathbf{0}$. By Lemma 9, we have $\mathbf{u}(t) \geq \mathbf{I}(t) \geq 0, t \geq 0$. We proceed by distinguishing two possibilities. 
	
	\emph{Case 1}: $s(\mathbf{Q}_1)<0$. By Lemma 7, there is a positive definite diagonal matrix $\mathbf{P}_1$ such that $\mathbf{Q}_1^T\mathbf{P}_1+\mathbf{P}_1\mathbf{Q}_1$ is negative definite. Let $\mathbf{u}=(u_1,\cdots,u_N)^T$, and define a
	positive definite function as
	$V_1(\mathbf{u})=\mathbf{u}^T \mathbf{P}_1 \mathbf{u}$. By calculations, we get
	\[
	\begin{split}
	\frac{dV_1(\mathbf{u}(t))}{dt}\mid_{(7)} &= 2\mathbf{u}(t)^T \mathbf{P}_1 \frac{d\mathbf{u}(t)}{dt}\\
	&\leq 2\mathbf{u}(t)^T \mathbf{P}_1\left[\mathbf{f}(\mathbf{u}(t)) - \mathbf{D}_{\gamma}\mathbf{u}(t) \right] \\
	&\leq 2\mathbf{u}(t)^T \mathbf{P}_1\mathbf{Q}_1 \mathbf{u}(t)\\
	&= \mathbf{u}(t)^T[\mathbf{Q}_1^T\mathbf{P}_1+\mathbf{P}_1\mathbf{Q}_1]\mathbf{u}(t) \leq 0.
	\end{split}
	\]
	
	\noindent where the second inequality follows from the concavity of $\mathbf{f}(\mathbf{x}) - \mathbf{D}_{\gamma}\mathbf{x}$. Furthermore, 
	$\frac{dV_1(\mathbf{u}(t))}{dt}\mid_{(7)}=0$ if and
	only if $\mathbf{u}(t)=\mathbf{0}$. According to the
	LaSalle Invariance Principle (Corollary 4.1 in \cite{Khalil2002}), the trivial equilibrium $\mathbf{0}$ of system (8) is asymptotically stable for $[0,1]^N$.
	
	\emph{Case 2}: $s(\mathbf{Q}_1)=0$. By Lemma 8, there is a positive definite diagonal matrix $\mathbf{P}_2$ such that $\mathbf{Q}_1^T\mathbf{P}_2+\mathbf{P}_2\mathbf{Q}_1$ is negative semi-definite. Define a
	positive definite function as $
	V_2(\mathbf{u})=\mathbf{u}^T \mathbf{P}_2 \mathbf{u}$. Similarly, we have
	\[
	\begin{split}
	\frac{dV_2(\mathbf{u}(t))}{dt}\mid_{(7)}
	\leq& \mathbf{u}(t)^T[\mathbf{Q}_1^T\mathbf{P}_2+\mathbf{P}_2\mathbf{Q}_1]\mathbf{u}(t) \\
	& -2\mathbf{u}(t)^T \mathbf{P}_2\cdot diag\mathbf{u}(t)\cdot \mathbf{f}(\mathbf{u}(t)) \leq 0.
	\end{split}
	\]

	If $\mathbf{Q}_1^T\mathbf{P}_2+\mathbf{P}_2\mathbf{Q}_1$ is negative definite, the subsequent argument is analogous to that for Case 1. Now, assume $\mathbf{Q}_1^T\mathbf{P}_2+\mathbf{P}_2\mathbf{Q}_1$ is not negative definite, which implies $s(\mathbf{Q}_1^T\mathbf{P}_2+\mathbf{P}_2\mathbf{Q}_1)=0$. As $\mathbf{Q}_1^T\mathbf{P}_2+\mathbf{P}_2\mathbf{Q}_1$ is Metzler and irreducible, it follows from Lemma 5 that (a) $0$ is a simple eigenvalue of $\mathbf{Q}_1^T\mathbf{P}_2+\mathbf{P}_2\mathbf{Q}_1$, and (b) up to scalar multiple, $\mathbf{Q}_1^T\mathbf{P}_2+\mathbf{P}_2\mathbf{Q}_1$ has a positive eigenvector belonging to eigenvalue 0. 
	
	Obviously, 
	$\frac{dV_2(\mathbf{u}(t))}{dt}\mid_{(7)}=0$ if $\mathbf{u}(t)=\mathbf{0}$. On the contrary, suppose $\frac{dV_2(\mathbf{u}(t))}{dt}\mid_{(7)}=0$ for some $\mathbf{u}(t)\geq \mathbf{0}$. If $\mathbf{u}(t) > \mathbf{0}$, then $\mathbf{f}(\mathbf{u}(t))>\mathbf{0}$, implying $\frac{dV_2(\mathbf{u}(t))}{dt}\mid_{(7)}<0$, a contradiction. If $\mathbf{u}(t)$ has a zero component, then $\mathbf{u}(t)$ is not an eigenvector of $\mathbf{Q}_1^T\mathbf{P}_2+\mathbf{P}_2\mathbf{Q}_1$ belonging to eigenvalue 0. It follows from Lemma 2 that  $\mathbf{u}(t)^T[\mathbf{Q}_1^T\mathbf{P}_2+\mathbf{P}_2\mathbf{Q}_1]\mathbf{u}(t)<0$, implying $\frac{dV_2(\mathbf{u}(t))}{dt}\mid_{(7)}<0$, again a contradiction. Hence, $\mathbf{u}(t)=\mathbf{0}$ if $\frac{dV_2(\mathbf{u}(t))}{dt}\mid_{(7)}=0$. It follows from the
	LaSalle Invariance Principle that the trivial equilibrium $\mathbf{0}$ of system (7) is asymptotically stable with respect to $[0,1]^N$.
	
	Combining Cases 1 and 2, we get $\mathbf{u}(t) \rightarrow \mathbf{0}$. According to Lemma 9, we get $\mathbf{I}(t)\leq \mathbf{u}(t)$, which implies $\mathbf{I}(t) \rightarrow \mathbf{0}$. Consider the following limit system of model (4).
	\begin{equation}
		\frac{d\mathbf{w}(t)}{dt}=(\mathbf{E}_N-diag\mathbf{w}(t))\mathbf{g}(\mathbf{w}(t))-\mathbf{D}_{\alpha} \mathbf{w}(t)
	\end{equation}
	
	\noindent with $\mathbf{w}(0) = \mathbf{P}(0)$. As with the comparison system (7), we have $\mathbf{w}(t) \rightarrow \mathbf{0}$ when $t \rightarrow \infty$. It follows from Lemma 10 that $\mathbf{P}(t)\rightarrow \mathbf{0}$ when $t \rightarrow \infty$. The proof is complete.
\end{IEEEproof}%The proof of the theorem is left to Appendix C. 

This theorem has the following useful corollaries.

\begin{cor}
	The susceptible equilibrium $\mathbf{E}_s$ of model (4) is the attractor for $\Omega$ if one of the following conditions is satisfied.
	
	\begin{enumerate}
		
		\item[(a)] $\rho(\mathbf{Q}_1\mathbf{D}_{\gamma}^{-1}+\mathbf{E}_N) < 1$, $\rho(\mathbf{Q}_2\mathbf{D}_{\alpha}^{-1}+\mathbf{E}_N) < 1$.
		
		\item[(b)] $\rho(\mathbf{M}_{\beta}\mathbf{D}_{\gamma}^{-1}) < 1$, $\rho(\mathbf{M}_{\delta_1}\mathbf{D}_{\alpha}^{-1}) < 1$.
		
		\item[(c)] $\sum_{i=1}^{N}\beta_{ij} < \gamma_j$, $\sum_{i=1}^{N}\delta_{ij}^{(1)} < \alpha_j$, $j = 1, 2 \cdots, N$.
		
		\item[(d)] $\sum_{j=1}^{N}\frac{\beta_{ij}}{\gamma_j} < 1$, $\sum_{j=1}^{N}\frac{\delta_{ij}^{(1)}}{\alpha_j} < 1$, $i = 1, 2, \cdots, N$.
		
	\end{enumerate}
\end{cor}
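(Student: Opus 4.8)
The plan is to reduce all four conditions to the hypothesis of Theorem 2, namely $s(\mathbf{Q}_1) \leq 0$ and $s(\mathbf{Q}_2) \leq 0$, after which the conclusion is immediate. Since $\mathbf{Q}_1$ and $\mathbf{Q}_2$ play symmetric roles, I would argue only for $\mathbf{Q}_1$ and note that the identical reasoning applies to $\mathbf{Q}_2$ with $\beta, \gamma, \mathbf{M}_{\beta}, \mathbf{D}_{\gamma}$ replaced by $\delta^{(1)}, \alpha, \mathbf{M}_{\delta_1}, \mathbf{D}_{\alpha}$.

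The technical core is a sign equivalence. Write $\mathbf{Q}_1 = \mathbf{F} - \mathbf{D}_{\gamma}$ with $\mathbf{F} = \partial\mathbf{f}(\mathbf{0})/\partial\mathbf{x} \geq \mathbf{0}$ (nonnegative by (C$_4$), irreducible because $G_v$ is strongly connected). I claim $\rho(\mathbf{F}\mathbf{D}_{\gamma}^{-1}) < 1 \Rightarrow s(\mathbf{Q}_1) < 0$. To prove it, let $\mathbf{v} > \mathbf{0}$ be a Perron eigenvector of the nonnegative irreducible matrix $\mathbf{F}\mathbf{D}_{\gamma}^{-1}$ belonging to $\rho := \rho(\mathbf{F}\mathbf{D}_{\gamma}^{-1})$, whose existence is guaranteed by the Perron--Frobenius theorem (together with Lemmas 3 and 4). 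Setting $\mathbf{u} = \mathbf{D}_{\gamma}^{-1}\mathbf{v} > \mathbf{0}$ converts $\mathbf{F}\mathbf{D}_{\gamma}^{-1}\mathbf{v} = \rho\mathbf{v}$ into $\mathbf{F}\mathbf{u} = \rho\mathbf{D}_{\gamma}\mathbf{u}$, hence $\mathbf{Q}_1\mathbf{u} = (\rho - 1)\mathbf{D}_{\gamma}\mathbf{u} < \mathbf{0}$. Since $\mathbf{Q}_1^T$ is again an irreducible Metzler matrix, Lemma 5 furnishes a positive left eigenvector $\mathbf{w} > \mathbf{0}$ with $\mathbf{w}^T\mathbf{Q}_1 = s(\mathbf{Q}_1)\mathbf{w}^T$. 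Pairing the two relations gives $s(\mathbf{Q}_1)(\mathbf{w}^T\mathbf{u}) = \mathbf{w}^T\mathbf{Q}_1\mathbf{u} = (\rho - 1)\mathbf{w}^T\mathbf{D}_{\gamma}\mathbf{u} < 0$, and since $\mathbf{w}^T\mathbf{u} > 0$ we conclude $s(\mathbf{Q}_1) < 0$.

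With this tool each condition is dispatched quickly. For (a), note that $\mathbf{Q}_1\mathbf{D}_{\gamma}^{-1} + \mathbf{E}_N = \mathbf{F}\mathbf{D}_{\gamma}^{-1}$, so the hypothesis is precisely $\rho(\mathbf{F}\mathbf{D}_{\gamma}^{-1}) < 1$ and the claim applies. Condition (b) is just (a) specialized to the linear SIPS model, where $\mathbf{F} = \mathbf{M}_{\beta}$. For (c), the column-sum bound $\sum_i\beta_{ij} < \gamma_j$ is exactly $\mathbf{1}^T\mathbf{Q}_1 < \mathbf{0}^T$; pairing with the positive right eigenvector $\mathbf{x}$ of $\mathbf{Q}_1$ from Lemma 5 yields $s(\mathbf{Q}_1)(\mathbf{1}^T\mathbf{x}) = \mathbf{1}^T\mathbf{Q}_1\mathbf{x} < 0$, so $s(\mathbf{Q}_1) < 0$ directly. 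For (d), the row-sum bound $\sum_j\beta_{ij}/\gamma_j < 1$ reads $\mathbf{M}_{\beta}\mathbf{D}_{\gamma}^{-1}\mathbf{1} < \mathbf{1}$; pairing with the positive left Perron eigenvector of $\mathbf{M}_{\beta}\mathbf{D}_{\gamma}^{-1}$ forces $\rho(\mathbf{M}_{\beta}\mathbf{D}_{\gamma}^{-1}) < 1$, which reduces (d) to (b). In every case $s(\mathbf{Q}_1) \leq 0$, and symmetrically $s(\mathbf{Q}_2) \leq 0$, so Theorem 2 completes the proof.

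The main obstacle I anticipate is the sign equivalence itself: because $\mathbf{D}_{\gamma}$ is a general positive diagonal matrix rather than a scalar, a Perron eigenvector of $\mathbf{F}\mathbf{D}_{\gamma}^{-1}$ is not an eigenvector of $\mathbf{Q}_1$, so the spectral information cannot be transferred naively. The rescaling $\mathbf{u} = \mathbf{D}_{\gamma}^{-1}\mathbf{v}$, which turns the eigen-relation into the one-sided inequality $\mathbf{Q}_1\mathbf{u} < \mathbf{0}$, combined with the left-eigenvector pairing supplied by Lemma 5, is exactly what makes the sign transfer rigorous; the remaining reductions (c)$\to$direct and (d)$\to$(b) are routine applications of the same pairing idea.
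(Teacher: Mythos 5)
Your central sign-transfer lemma and part (a) are correct, and they follow a genuinely different route from the paper. The paper converts $\rho(\mathbf{Q}_1\mathbf{D}_{\gamma}^{-1}+\mathbf{E}_N)<1$ into $s(\mathbf{Q}_1\mathbf{D}_{\gamma}^{-1})<0$ via Lemmas 4--5, and then transfers the sign to $\mathbf{Q}_1$ by invoking diagonal stability of Hurwitz Metzler matrices (Lemma 7) together with the similarity $\mathbf{D}_{\gamma}^{1/2}\mathbf{Q}_1\mathbf{D}_{\gamma}^{-1/2}$; your rescaling $\mathbf{u}=\mathbf{D}_{\gamma}^{-1}\mathbf{v}$, which yields the one-sided relation $\mathbf{Q}_1\mathbf{u}=(\rho-1)\mathbf{D}_{\gamma}\mathbf{u}<\mathbf{0}$, paired against the positive left eigenvector of $\mathbf{Q}_1$ from Lemma 5, achieves the same conclusion using only Perron--Frobenius-type facts and avoids Lemmas 6--8 entirely. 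That is a cleaner and more elementary argument for this step.

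However, there is a genuine gap in your treatment of (b), (c), (d). The corollary asserts these conditions for the \emph{generic} model (4), yet conditions (b)--(d) are phrased in terms of $\mathbf{M}_{\beta}$ and $\mathbf{M}_{\delta_1}$, while $\mathbf{Q}_1$ and $\mathbf{Q}_2$ are built from the Jacobians $\partial\mathbf{f}(\mathbf{0})/\partial\mathbf{x}$ and $\partial\mathbf{g}(\mathbf{0})/\partial\mathbf{x}$. Your dispatch of (b) as ``(a) specialized to the linear SIPS model, where $\mathbf{F}=\mathbf{M}_{\beta}$,'' and your claim in (c) that $\sum_i\beta_{ij}<\gamma_j$ is \emph{exactly} $\mathbf{1}^T\mathbf{Q}_1<\mathbf{0}^T$, both presuppose $\partial\mathbf{f}(\mathbf{0})/\partial\mathbf{x}=\mathbf{M}_{\beta}$, which holds only in the linear case; nothing in your write-up prevents $\partial f_i(\mathbf{0})/\partial x_j$ from exceeding $\beta_{ij}$ for a general $\mathbf{f}$ satisfying (C$_1$)--(C$_5$). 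The paper closes precisely this hole: it derives the entrywise bound $\partial f_i(\mathbf{0})/\partial x_j\leq\beta_{ij}$ from the concavity condition (C$_5$), i.e., $\mathbf{Q}_1+\mathbf{D}_{\gamma}\leq\mathbf{M}_{\beta}$, and then applies Lemma 3 to get $\rho(\mathbf{Q}_1\mathbf{D}_{\gamma}^{-1}+\mathbf{E}_N)\leq\rho(\mathbf{M}_{\beta}\mathbf{D}_{\gamma}^{-1})<1$, reducing (b) to (a) \emph{for the generic model}; (c) and (d) then follow from $\rho(\mathbf{M})\leq\|\mathbf{M}\|_1$ and $\rho(\mathbf{M})\leq\|\mathbf{M}\|_{\infty}$. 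Once this comparison step is inserted, your pairing arguments for (c) and (d) (run with inequalities, e.g.\ $\mathbf{1}^T\mathbf{Q}_1\leq\mathbf{1}^T(\mathbf{M}_{\beta}-\mathbf{D}_{\gamma})<\mathbf{0}^T$) are valid alternatives to the norm bounds; without it, your proof establishes the corollary only for the special case $\mathbf{f}(\mathbf{x})=\mathbf{M}_{\beta}\mathbf{x}$, $\mathbf{g}(\mathbf{x})=\mathbf{M}_{\delta_1}\mathbf{x}$ rather than for model (4).
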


\begin{IEEEproof}
	(a) We first show $s(\mathbf{Q}_1) < 0$. As $\mathbf{Q}_1\mathbf{D}_{\gamma}^{-1}$ is Metzler and irreducible, it follows from Lemma 5 that $\mathbf{Q}_1\mathbf{D}_{\gamma}^{-1}$ has a unique positive eigenvector $\mathbf{x}$ belonging to eigenvalue $s(\mathbf{Q}_1\mathbf{D}_{\gamma}^{-1})$. So, 
	$(\mathbf{Q}_1\mathbf{D}_{\gamma}^{-1}+\mathbf{E}_N)\mathbf{x}=[s(\mathbf{Q}_1\mathbf{D}_{\gamma}^{-1})+1]\mathbf{x}$. 
	That is, $\mathbf{x}$ is an eigenvector of $\mathbf{Q}_1\mathbf{D}_{\gamma}^{-1}+\mathbf{E}_N$ belonging to eigenvalue $s(\mathbf{Q}_1\mathbf{D}_{\gamma}^{-1})+1$. It follows from Lemma 4 that 
	$s(\mathbf{Q}_1\mathbf{D}_{\gamma}^{-1}) = \rho(\mathbf{Q}_1\mathbf{D}_{\gamma}^{-1}+\mathbf{E}_N)-1 < 0$. 
	By Lemma 7, there is a positive definite diagonal matrix $\mathbf{D}$ such that $
	\mathbf{P}=(\mathbf{Q}_1\mathbf{D}_{\gamma}^{-1})^T\mathbf{D}+\mathbf{D}(\mathbf{Q}_1\mathbf{D}_{\gamma}^{-1})$ is negative definite. Direct calculations give 
	$
	\left[\mathbf{D}_{\gamma}^{\frac{1}{2}}\mathbf{Q}_1\mathbf{D}_{\gamma}^{-\frac{1}{2}}\right]^T\mathbf{D}+\mathbf{D}\left[\mathbf{D}_{\gamma}^{\frac{1}{2}}\mathbf{Q}_1\mathbf{D}_{\gamma}^{-\frac{1}{2}}\right] 
	=\mathbf{D}_{\gamma}^{\frac{1}{2}}\mathbf{P}\mathbf{D}_{\gamma}^{\frac{1}{2}}$. As $\mathbf{D}_{\gamma}^{\frac{1}{2}}\mathbf{P}\mathbf{D}_{\gamma}^{\frac{1}{2}}$ is negative definite, $\mathbf{D}_{\gamma}^{\frac{1}{2}}\mathbf{Q}_1\mathbf{D}_{\gamma}^{-\frac{1}{2}}$ is diagonally stable and hence Hurwitz. It follows that $s(\mathbf{Q}_1) = s(\mathbf{D}_{\gamma}^{\frac{1}{2}}\mathbf{Q}_1\mathbf{D}_{\gamma}^{-\frac{1}{2}}) < 0$.
	
	Similarly, we have $s(\mathbf{Q}_2) < 0$. The declared result follows from Theorem 2.
	
	(b) By the concavity of $f_i(\mathbf{x})$, we have
	$\frac{\partial f_i(\mathbf{0})}{\partial x_j}\leq \beta_{ij}$. That is, $\mathbf{Q}_1 + \mathbf{D}_{\gamma} \leq \mathbf{M}_{\beta}$. Hence, 
	$\rho(\mathbf{Q}_1\mathbf{D}_{\gamma}^{-1}+\mathbf{E}_N) \leq \rho(\mathbf{M}_{\beta}\mathbf{D}_{\gamma}^{-1}) < 1$.
	Similarly, we have $\rho(\mathbf{Q}_2\mathbf{D}_{\alpha}^{-1}+\mathbf{E}_N) < 1$. The claim follows from Claim (a) of this corollary.
	
	(c) The claim follows from Claim (b) of this corollary and $\rho(\mathbf{M}) \leq ||\mathbf{M}||_1$.
	
	(d) The claim follows from Claim (b) of this corollary and $\rho(\mathbf{M}) \leq ||\mathbf{M}||_{\infty}$.
\end{IEEEproof}%The proof of this corollary is left to Appendix D. 

Next, define a function as
\[
\mathbf{g}^*(\mathbf{x})=\left(\max\{g_1(\mathbf{x}),h_1(\mathbf{x})\},\cdots,\max\{g_N(\mathbf{x}),h_N(\mathbf{x})\}\right)^T,
\]
and define a pair of Metzler matrices as
\begin{equation}
	\mathbf{Q}_3=\frac{\partial \mathbf{\mathbf{h}(\mathbf{0})}}{\partial \mathbf{x}}-\mathbf{D}_{\alpha}, \quad \mathbf{Q}_4=\frac{\partial \mathbf{\mathbf{g}^*(\mathbf{0})}}{\partial \mathbf{x}}-\mathbf{D}_{\alpha}
\end{equation}
where $\frac{\partial \mathbf{\mathbf{h}(\mathbf{0})}}{\partial \mathbf{x}} = \left(\frac{\partial h_i(\mathbf{0})}{\partial x_j}\right)_{N \times N}$ and $\frac{\partial \mathbf{\mathbf{g}^*(\mathbf{0})}}{\partial \mathbf{x}} = \left(\frac{\partial g_i^*(\mathbf{0})}{\partial x_j}\right)_{N \times N}$ stand for the Jacobian matrix of $\mathbf{h}$ and $\mathbf{g}^*$ evaluated at the origin, respectively. As $G_p$ is strongly connected, $\mathbf{Q}_3$ and $\mathbf{Q}_4$ are irreducible.

For the linear SIPS model, we have

\[
\mathbf{Q}_3=\mathbf{M}_{\delta_2}-\mathbf{D}_{\alpha}, \quad \mathbf{Q}_4=\left(\max\left\{\delta^{(1)}_{ij}, \delta^{(2)}_{ij}\right\}\right)_{N \times N}-\mathbf{D}_{\alpha}.
\]

The following theorem offers a criterion for the global attractivity of the infected equilibrium of the generic SIPS model, if any.

\begin{thm}
	Consider model (4). Suppose $s(\mathbf{Q}_1)> 0$ and $s(\mathbf{Q}_4)\leq 0$. Then the infected equilibrium $\mathbf{E}_i$ is an attractor for $\{(\mathbf{I},\mathbf{P})\in\Omega:\mathbf{I}\neq \mathbf{0}\}$. Hence, if $I(0) \neq 0$, then $I(t) \rightarrow I^*$ and $P(t) \rightarrow 0$ as $t \rightarrow \infty$.
\end{thm}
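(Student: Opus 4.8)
The plan is to decouple the long-run behaviour of the two sub-populations: first show that the patches are wiped out, $\mathbf{P}(t)\to\mathbf{0}$, and then show that, once the patches vanish, the virus population settles at the unique infected equilibrium $\mathbf{I}^*$ furnished by Theorem 1(a) (which exists precisely because $s(\mathbf{Q}_1)>0$). The entire argument rests on monotone comparison. A useful structural observation is that the face $\{\mathbf{P}=\mathbf{0}\}$ is positively invariant: by (C$_2$), $g_i(\mathbf{0})=h_i(\mathbf{0})=0$, so $\dot P_i=0$ there, and on this face model (4) reduces exactly to the standalone system $\dot{\mathbf{I}}=(\mathbf{E}_N-diag\mathbf{I})\mathbf{f}(\mathbf{I})-\mathbf{D}_\gamma\mathbf{I}$, which is the relevant limiting dynamics for the virus.

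For the first step I would dominate the $\mathbf{P}$-equations by a comparison system built from $\mathbf{g}^*$. Since $g_i,h_i\ge 0$, $0\le 1-I_i-P_i$, $0\le I_i$ and $(1-I_i-P_i)+I_i=1-P_i$, one obtains the pointwise estimate $(1-I_i-P_i)g_i(\mathbf{P})+I_ih_i(\mathbf{P})\le(1-P_i)g_i^*(\mathbf{P})$, hence $\dot P_i\le(1-P_i)g_i^*(\mathbf{P})-\alpha_iP_i$ along every trajectory. This is exactly where the matrix $\mathbf{Q}_4$, rather than $\mathbf{Q}_2$, must enter: the infected mass $\mathbf{I}$, which does not vanish here, is absorbed into the $\max\{g_i,h_i\}$ envelope. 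The right-hand side is cooperative (the factor $1-P_i\ge0$ multiplies the increasing $g_i^*$), so Lemma 9 gives $\mathbf{P}(t)\le\mathbf{w}(t)$, where $\dot{\mathbf{w}}=(\mathbf{E}_N-diag\mathbf{w})\mathbf{g}^*(\mathbf{w})-\mathbf{D}_\alpha\mathbf{w}$ with $\mathbf{w}(0)=\mathbf{P}(0)$. I would then show $\mathbf{w}(t)\to\mathbf{0}$ by repeating verbatim the Lyapunov/LaSalle computation of Theorem 2 (the case $s(\mathbf{Q}_4)<0$ via Lemma 7, the case $s(\mathbf{Q}_4)=0$ via Lemma 8). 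The one point needing care is that $g_i^*=\max\{g_i,h_i\}$ is not concave, so the tangent bound used in Theorem 2 is not automatic; however, since each of $g_i,h_i$ is concave with value $0$ at the origin, for $\mathbf{w}\ge\mathbf{0}$ one has $g_i(\mathbf{w})\le\sum_j\partial_jg_i(\mathbf{0})w_j$ and $h_i(\mathbf{w})\le\sum_j\partial_jh_i(\mathbf{0})w_j$, whence $g_i^*(\mathbf{w})\le\sum_j\max\{\partial_jg_i(\mathbf{0}),\partial_jh_i(\mathbf{0})\}w_j=(\mathbf{Q}_4+\mathbf{D}_\alpha)_{i\cdot}\,\mathbf{w}$. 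This entrywise-max bound replaces concavity and feeds the same quadratic form, giving $\mathbf{P}(t)\to\mathbf{0}$.

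For the second step I would first establish the standalone fact that, when $s(\mathbf{Q}_1)>0$, the system $\dot{\mathbf{I}}=(\mathbf{E}_N-diag\mathbf{I})\mathbf{f}(\mathbf{I})-\mathbf{D}_\gamma\mathbf{I}$ attracts every orbit with $\mathbf{I}(0)\ne\mathbf{0}$ to $\mathbf{I}^*$, by a monotone squeeze: using the positive eigenvector $\mathbf{v}$ of $\mathbf{Q}_1$ (Lemma 5), for small $\varepsilon>0$ the point $\varepsilon\mathbf{v}$ is a strict subsolution (the computation giving $\mathbf{T}(\varepsilon\mathbf{v})\ge\varepsilon\mathbf{v}$ in Theorem 1(a)), so its orbit increases to a positive equilibrium, necessarily $\mathbf{I}^*$ by uniqueness, while $\mathbf{1}$ is a strict supersolution whose orbit decreases to an equilibrium $\ge\mathbf{I}^*$, again $\mathbf{I}^*$; strong connectivity of $G_v$ makes any orbit with $\mathbf{I}(0)\ne\mathbf{0}$ strictly positive in finite time, after which it is trapped between these two and squeezed to $\mathbf{I}^*$. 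I would then transfer this to the coupled system. The upper estimate $\dot I_i\le(1-I_i)f_i(\mathbf{I})-\gamma_iI_i$ with Lemma 9 gives $\mathbf{I}(t)\le\mathbf{u}(t)\to\mathbf{I}^*$, so $\limsup_t\mathbf{I}(t)\le\mathbf{I}^*$ and the orbit is eventually bounded away from the face $I_i=1$. For the lower estimate, fix small $\eta>0$; since $\mathbf{P}(t)\to\mathbf{0}$ there is $T$ with $P_i(t)\le\eta$ and $h_i(\mathbf{P}(t))\le\eta$ for $t\ge T$, giving $\dot I_i\ge(1-\eta-I_i)f_i(\mathbf{I})-(\gamma_i+\eta)I_i$. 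Comparing, inside the invariant box $[0,1-\eta]^N$ the orbit has entered, with the corresponding autonomous system whose unique positive equilibrium $\mathbf{z}^*(\eta)$ tends to $\mathbf{I}^*$ as $\eta\to0$, yields $\liminf_t\mathbf{I}(t)\ge\mathbf{z}^*(\eta)$, and letting $\eta\to0$ gives $\liminf_t\mathbf{I}(t)\ge\mathbf{I}^*$. Hence $\mathbf{I}(t)\to\mathbf{I}^*$, and averaging gives $I(t)\to I^*$, $P(t)\to0$.

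I expect the lower estimate to be the main obstacle, since it is exactly where one must rule out that the small but nonzero patch coupling drives the infection to extinction. The convenient tool Lemma 10 does not apply directly, because the origin is not a global attractor of the limit system $\dot{\mathbf{I}}=(\mathbf{E}_N-diag\mathbf{I})\mathbf{f}(\mathbf{I})-\mathbf{D}_\gamma\mathbf{I}$: that system also fixes $\mathbf{0}$, and indeed $\mathbf{0}$ is unstable when $s(\mathbf{Q}_1)>0$, so no asymptotically-autonomous transfer to a single global attractor is available. The $\eta$-perturbed comparison together with the continuity $\mathbf{z}^*(\eta)\to\mathbf{I}^*$ is what replaces it, and verifying that this family of positive equilibria is well defined and continuous (using uniqueness from Theorem 1(a) and $s(\mathbf{Q}_1^\eta)\to s(\mathbf{Q}_1)>0$) is the delicate bookkeeping. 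The nonconcavity of $\mathbf{g}^*$ noted above is a secondary wrinkle, already resolved by the entrywise-max Jacobian estimate.
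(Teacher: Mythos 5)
Your proposal is correct, but its second half takes a genuinely different route from the paper's. The first step coincides: both you and the paper dominate the patch equations by the envelope system $\dot{\mathbf{w}}=(\mathbf{E}_N-diag\mathbf{w})\mathbf{g}^*(\mathbf{w})-\mathbf{D}_\alpha\mathbf{w}$ and rerun the Theorem 2 Lyapunov/LaSalle argument with $\mathbf{Q}_4$; your entrywise-max bound $g_i^*(\mathbf{w})\le\sum_j\max\{\partial_j g_i(\mathbf{0}),\partial_j h_i(\mathbf{0})\}w_j$ is a worthwhile supplement, since the paper's ``similar to the argument for Theorem 2'' silently needs a substitute for concavity ($\mathbf{g}^*$, a maximum of concave functions, need not be concave, nor smooth at $\mathbf{0}$). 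In the second step the paper invokes Lemma 10 (Strauss--Yorke) to pass to the limit system (7) and then proves attractivity of $\mathbf{I}^*$ there with the ratio Lyapunov function $V_3(\mathbf{u})=\max\{Z(\mathbf{u})-1,0\}+\max\{1-z(\mathbf{u}),0\}$, $Z=\max_k u_k/I_k^*$, $z=\min_k u_k/I_k^*$, via Dini-derivative estimates and LaSalle; you avoid Lemma 10 entirely, proving convergence of the standalone system by a monotone sub/supersolution squeeze and transferring it to the coupled system through the $\eta$-perturbed cooperative comparison with $\mathbf{z}^*(\eta)\to\mathbf{I}^*$. Your criticism of Lemma 10 is well taken: the limit system (7) also fixes $\mathbf{0}$, so $\mathbf{I}^*$ is not a global attractor on $[0,1]^N$, and the paper's one-line reduction tacitly assumes a persistence property --- that the vanishing patch coupling cannot drive $\mathbf{I}(t)$ to $\mathbf{0}$ --- which is exactly what your lower $\eta$-comparison supplies. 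The price of your route is the bookkeeping you acknowledge: existence/uniqueness of $\mathbf{z}^*(\eta)$ for small $\eta$ (Theorem 1(a)'s argument applies since the perturbed linearization still has positive spectral abscissa) and $\liminf_{\eta\to 0}\mathbf{z}^*(\eta)\ge\mathbf{I}^*$, obtainable by fixing one subsolution $\varepsilon\mathbf{v}$ that remains a subsolution for all small $\eta$, so that $\mathbf{z}^*(\eta)\ge\varepsilon\mathbf{v}$ and every limit point is a nonzero equilibrium of the unperturbed system, hence $\mathbf{I}^*$ by uniqueness. One smaller caveat: your squeeze compares solutions with ordered rather than equal initial data, which is Kamke's comparison theorem for cooperative systems rather than Lemma 9 as literally stated (Lemma 9 assumes $\mathbf{x}(0)=\mathbf{y}(0)$); this is standard but should be cited. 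In sum, your approach buys a self-contained argument that closes the persistence gap in the paper's appeal to Lemma 10, while the paper's approach buys brevity, since the $V_3$ function settles attractivity of $\mathbf{I}^*$ in one stroke once the asymptotically autonomous reduction is granted.
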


\begin{IEEEproof}
	Let $(\mathbf{I}(t)^T, \mathbf{P}(t)^T)^T$ be a solution
	to model (4) with $\mathbf{I}(0)\neq \mathbf{0}$. It follows from the last $N$ equations of model (4) that 
	$\frac{d\mathbf{P}(t)}{dt}\leq(\mathbf{E}_N-diag(\mathbf{P}(t)))\mathbf{g}^*(\mathbf{P}(t))-\mathbf{D}_{\alpha}\mathbf{P}(t)$.
	Similar to the argument for Theorem 2, we get $\mathbf{P}(t)\rightarrow\mathbf{0}$. Consider the limit system (7)
	%\begin{equation}
	%\frac{d\mathbf{u}(t)}{dt}=(\mathbf{E}_N-diag\mathbf{u}(t))\mathbf{f}(\mathbf{u}(t))-\mathbf{D}_{\alpha} \mathbf{u}(t)
	%\end{equation}
	with $\mathbf{u}(0) = \mathbf{I}(0)$. Theorem 1 confirms that the system admits a unique nonzero equilibrium $\mathbf{I}^*=(I_1^*,\cdots,I_N^*)$. By Lemma 10, it suffices to show that $\mathbf{I}^*$ is asymptotically stable for $(0,1]^N$. Given a solution $\mathbf{u}(t) = (u_1(t), \cdots, u_N(t))^T$ to system (7) with $\mathbf{u}(0) > \mathbf{0}$. First, let us show the following claim.
	
	\emph{Claim 1:} $\mathbf{u}(t) > \mathbf{0}$ for $t > 0$.
	
	\emph{Proof of Claim 1:} On the contrary, suppose there is $t_0 > 0$ such that (a) $\mathbf{u}(t) > \mathbf{0}$, $0 < t < t_0$, and (b) $u_i(t_0)=0$ for some $i$. According to the smoothness of $\mathbf{u}(t)$, we get $\frac{du_{i}(t_0)}{dt}=0$, implying $f_{i}(\mathbf{u}(t_0))=0$. As $G_v$ is strongly connected, there is $j$ such that $\beta_{ij}>0$, which implies $u_{j}(t_0)=0$. Working inductively, we conclude that $\mathbf{u}(t_0)=0$. This contradicts the uniqueness of solution to system (7) with given initial conditions. Claim 1 is proven.
	
	For $t > 0$, let $
	Z(\mathbf{u}(t))=\max_{1 \leq k \leq N} \frac{u_k(t)}{I_k^*}$, $
	z(\mathbf{u}(t))=\min_{1 \leq k \leq N} \frac{u_k(t)}{I_k^*}$.
	Define a function $V_3$ as
	\[
	V_3(\mathbf{u}(t))=\max\{Z(\mathbf{u}(t))-1,0\}+\max\{1-z(\mathbf{u}(t)),0\}.
	\]
	
	\noindent It is easily verified that $V_3$ is positive definite with respect to $\mathbf{I}^*$, i.e., (a) $V_3(\mathbf{u}(t))\geq 0$, and (b) $V_3(\mathbf{u}(t))=0$ if and only if $\mathbf{u}(t)=\mathbf{I}^{*}$. Next , let us show that $D^+V_3(\mathbf{u}(t)) \leq 0$, where $D^+$ stands for the upper right Dini derivative. To this end, we need to show the following two claims for $t > 0$.
	
	\emph{Claim 2:} $D^+Z(\mathbf{u}(t))\leq0$ if $Z(\mathbf{u}(t))\geq1$.
	Moreover, $D^+Z(\mathbf{u}(t))<0$ if $Z(\mathbf{u}(t))>1$.
	
	\emph{Claim 3:} $D_+z(\mathbf{u}(t))\geq0$ if $z(\mathbf{u}(t))\leq1$.
	Moreover, $D_+z(\mathbf{u}(t))> 0$ if  $z(\mathbf{u}(t))<1$. Here $D_+$ stands for the lower right Dini derivative.
	
	\emph{Proof of Claim 2:} Choose $k_0$ such that  
	$
	Z(\mathbf{u}(t))=\frac{u_{k_0}(t)}{I_{k_0}^*}$, $D^+Z(\mathbf{u}(t))=\frac{u_{k_0}^{'}(t)}{I_{k_0}^*}$. Then, 
	$\frac{I_{k_0}^{*}}{u_{k_0}(t)}u_{k_0}^{'}(t)
	=\left(1-u_{k_0}(t)\right)\frac{I_{k_0}^{*}}{u_{k_0}(t)}f_{k_0}(\mathbf{u}(t))-\gamma_{k_0}I_{k_0}^{*}$. If $f_{k_0}(\mathbf{u}(t))=0$, then
	$\frac{I_{k_0}^{*}}{u_{k_0}(t)}u_{k_0}^{'}(t) <0$, which implies $D^+Z(\mathbf{u}(t))<0$. Now assume $f_{k_0}(\mathbf{u}(t))>0$, then
	\[
	\begin{split}
	\frac{I_{k_0}^{*}}{u_{k_0}(t)}u_{k_0}^{'}(t)\leq&
	(1-I_{k_0}^{*})\frac{I_{k_0}^{*}}{u_{k_0}(t)}f_{k_0}(\mathbf{u}(t))-\gamma_{k_0}I_{k_0}^{*}\\
	\leq &(1-I_{k_0}^{*})f_{k_0}\left(\frac{I_{k_0}^{*}}{u_{k_0}(t)}\mathbf{u}(t)\right)-\gamma_{k_0}I_{k_0}^{*}\\
	\leq&(1-I_{k_0}^{*})f_{k_0}\left(\mathbf{I}^*\right)-\gamma_{k_0}I_{k_0}^{*}=0,
	\end{split}
	\]
	where the second inequality follows from the concavity of $f_{k_0}$, and the third inequality follows from the monotonicity of $f_{k_0}$. This implies $D^+Z(\mathbf{u}(t))\leq0$. Noting that the first inequality is strict if $Z(\mathbf{u}(t))>1$, we get that $D^+Z(\mathbf{u}(t))<0$ if $Z(\mathbf{u}(t))>1$. Claim 2 is proven.
	
	The argument for Claim 3 is analogous to that for Claim 2 and hence is omitted. Next, consider three possibilities.
	
	Case 1: $Z(\mathbf{u}(t)) < 1$. Then $z(\mathbf{u}(t)) < 1$ and $V_3(\mathbf{u}(t)) = 1 - z(\mathbf{u}(t))$. Hence, 
	$D^+V_3(\mathbf{u}(t)) = -D_+z(\mathbf{u}(t)) < 0$.
	
	Case 2: $z(\mathbf{u}(t)) > 1$. Then $Z(\mathbf{u}(t)) > 1$ and $V_3(\mathbf{u}(t)) = Z(\mathbf{u}(t)) - 1$. Hence, 
	$D^+V_3(\mathbf{u}(t)) = D^+Z(\mathbf{u}(t)) < 0$.
	
	Case 3 If $Z(\mathbf{u}(t)) \geq 1$, $z(\mathbf{u}(t)) \leq 1$. Then $V_3(\mathbf{u}(t)) = Z(\mathbf{u}(t)) - z(\mathbf{u}(t))$ and 
	$D^+V_3(\mathbf{u}(t)) = D^+Z(\mathbf{u}(t)) - D_+z(\mathbf{u}(t)) \leq 0$.
	Moreover, the equality holds if and only if $\mathbf{u}(t) = \mathbf{I}^*$.
	
	The declared result follows from the LaSalle Invariance Principle.
\end{IEEEproof}%The proof of the theorem is left to Appendix E. 
This theorem has the following direct corollary.

\begin{cor}
	The infected equilibrium $\mathbf{E}_i$ of model (4) is the attractor for $\{(\mathbf{I},\mathbf{P})\in\Omega:\mathbf{I}\neq \mathbf{0}\}$ if one of the following two sets of conditions is satisfied.
	\begin{enumerate}
		
		\item[(a)] $\mathbf{g}\geq \mathbf{h}$, $s(\mathbf{Q}_1)> 0$, and $s(\mathbf{Q}_2)\leq 0$.
		
		\item[(b)] $\mathbf{g}\leq \mathbf{h}$, $s(\mathbf{Q}_1)> 0$, and $s(\mathbf{Q}_3)\leq 0$.
		
	\end{enumerate}
\end{cor}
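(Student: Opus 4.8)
The plan is to read Corollary 4 as a pure reduction to Theorem 3, whose hypotheses are exactly $s(\mathbf{Q}_1) > 0$ together with $s(\mathbf{Q}_4) \leq 0$. Both parts (a) and (b) already assume $s(\mathbf{Q}_1) > 0$, so the only thing to verify in each case is that the stated sign condition on $\mathbf{Q}_2$ (respectively $\mathbf{Q}_3$) forces $s(\mathbf{Q}_4) \leq 0$. The mechanism is the definition of the auxiliary vector field $\mathbf{g}^*$ in (9), whose $i$-th component is the pointwise maximum $\max\{g_i, h_i\}$: under a global ordering of $\mathbf{g}$ and $\mathbf{h}$ this maximum degenerates to one of the two functions, and then $\mathbf{Q}_4$ collapses onto $\mathbf{Q}_2$ or $\mathbf{Q}_3$.

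First I would dispose of case (a). The hypothesis $\mathbf{g} \geq \mathbf{h}$ means $g_i(\mathbf{x}) \geq h_i(\mathbf{x})$ for every index $i$ and every admissible $\mathbf{x}$, so that $\max\{g_i(\mathbf{x}), h_i(\mathbf{x})\} = g_i(\mathbf{x})$ identically, i.e. $\mathbf{g}^* \equiv \mathbf{g}$ as functions. Differentiating at the origin then gives $\frac{\partial \mathbf{g}^*(\mathbf{0})}{\partial \mathbf{x}} = \frac{\partial \mathbf{g}(\mathbf{0})}{\partial \mathbf{x}}$, whence $\mathbf{Q}_4 = \mathbf{Q}_2$ by the very definitions in (6) and (9). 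In particular $s(\mathbf{Q}_4) = s(\mathbf{Q}_2) \leq 0$, so the two hypotheses of Theorem 3 are met and its conclusion — that $\mathbf{E}_i$ attracts $\{(\mathbf{I},\mathbf{P}) \in \Omega : \mathbf{I} \neq \mathbf{0}\}$ — follows at once.

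Case (b) is the symmetric mirror image: from $\mathbf{g} \leq \mathbf{h}$ one obtains $\max\{g_i, h_i\} = h_i$ identically, so $\mathbf{g}^* \equiv \mathbf{h}$, hence $\mathbf{Q}_4 = \frac{\partial \mathbf{h}(\mathbf{0})}{\partial \mathbf{x}} - \mathbf{D}_{\alpha} = \mathbf{Q}_3$ and $s(\mathbf{Q}_4) = s(\mathbf{Q}_3) \leq 0$; Theorem 3 applies verbatim. I would simply remark that the argument is analogous and omit the repeated details.

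There is no substantive analytical obstacle here — the corollary is essentially bookkeeping on top of Theorem 3 — so the only point deserving care is the identification $\mathbf{g}^* \equiv \mathbf{g}$ (respectively $\mathbf{g}^* \equiv \mathbf{h}$). Because the pointwise maximum $\max\{g_i, h_i\}$ is generally nonsmooth precisely where $g_i = h_i$, one must insist that the ordering $\mathbf{g} \geq \mathbf{h}$ (or $\mathbf{g} \leq \mathbf{h}$) holds on the whole domain rather than merely at a point; granting the global ordering, the maximum simplifies everywhere, $\mathbf{g}^*$ inherits the smoothness of $\mathbf{g}$ or $\mathbf{h}$, and the Jacobians at $\mathbf{0}$ literally coincide. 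With that observation in hand, each part reduces to a one-line check of the hypotheses of Theorem 3.
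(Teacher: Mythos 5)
Your reduction is exactly the intended argument: the paper presents this statement as a direct corollary of Theorem 3, and under the global ordering $\mathbf{g}\geq\mathbf{h}$ (resp.\ $\mathbf{g}\leq\mathbf{h}$) one indeed has $\mathbf{g}^*\equiv\mathbf{g}$ (resp.\ $\mathbf{g}^*\equiv\mathbf{h}$), hence $\mathbf{Q}_4=\mathbf{Q}_2$ (resp.\ $\mathbf{Q}_4=\mathbf{Q}_3$) and Theorem 3 applies verbatim. Your additional remark that the potential nonsmoothness of the pointwise maximum is harmless once the ordering holds on the whole domain is correct and completes the bookkeeping.
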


Next, we have the following criterion for the global attractivity of the patched equilibrium of the generic SIPS model. 

\begin{thm}
	Consider model (4). Suppose $s(\mathbf{Q}_1)\leq 0$ and $s(\mathbf{Q}_2)> 0$. Then the patched equilibrium $\mathbf{E}_p$ is the attractor for $\{(\mathbf{I},\mathbf{P})\in\Omega:\mathbf{P}\neq \mathbf{0}\}$. Hence, if $P(0) \neq 0$, then $I(t) \rightarrow 0$ and $P(t) \rightarrow P^*$ as $t \rightarrow \infty$.
\end{thm}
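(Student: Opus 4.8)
The statement is the mirror image of Theorems 2 and 3: the hypothesis $s(\mathbf{Q}_1)\leq 0$ should drive the infection to extinction exactly as in Theorem 2, while $s(\mathbf{Q}_2)>0$ should render the patched equilibrium $\mathbf{P}^*$ --- guaranteed by claim (b) of Theorem 1 --- the global attractor of the limiting patch dynamics, exactly as $\mathbf{I}^*$ was in Theorem 3. So the plan is to run these two arguments in sequence and then splice them together with the Strauss--Yorke theorem (Lemma 10).

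First I would dispose of the infection. Taking the first $N$ equations of model (4) and discarding the two nonpositive coupling terms $-I_i(t)h_i(\mathbf{P}(t))$ and $-P_i(t)f_i(\mathbf{I}(t))$ gives $\frac{d\mathbf{I}(t)}{dt}\leq(\mathbf{E}_N-diag\mathbf{I}(t))\mathbf{f}(\mathbf{I}(t))-\mathbf{D}_{\gamma}\mathbf{I}(t)$, whose right-hand side is precisely the comparison system (7). Lemma 9 then yields $\mathbf{0}\leq\mathbf{I}(t)\leq\mathbf{u}(t)$, and since $s(\mathbf{Q}_1)\leq 0$ the Lyapunov analysis of Theorem 2 --- splitting into $s(\mathbf{Q}_1)<0$ (via Lemma 7) and $s(\mathbf{Q}_1)=0$ (via Lemmas 8 and 5) --- forces $\mathbf{u}(t)\to\mathbf{0}$ and hence $\mathbf{I}(t)\to\mathbf{0}$.

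Next I would analyze the patch component through the limit system (8), $\frac{d\mathbf{w}(t)}{dt}=(\mathbf{E}_N-diag\mathbf{w}(t))\mathbf{g}(\mathbf{w}(t))-\mathbf{D}_{\alpha}\mathbf{w}(t)$. Rewriting the last $N$ equations of (4) as $\frac{dP_i(t)}{dt}=(1-P_i(t))g_i(\mathbf{P}(t))-\alpha_iP_i(t)+I_i(t)\bigl[h_i(\mathbf{P}(t))-g_i(\mathbf{P}(t))\bigr]$ displays (8) as its limit system with perturbation $I_i(t)[h_i(\mathbf{P}(t))-g_i(\mathbf{P}(t))]$; since $\mathbf{I}(t)\to\mathbf{0}$ and $g_i,h_i$ are continuous on the compact set $\Omega$, the perturbation tends to $\mathbf{0}$. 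Because $s(\mathbf{Q}_2)>0$, claim (b) of Theorem 1 supplies the unique nonzero equilibrium $\mathbf{P}^*$ of (8) with $0<P_i^*<1$. The crux is then to show that $\mathbf{P}^*$ is a global attractor of (8) on $(0,1]^N$, and I would copy the Claim 1--Claim 3 machinery of Theorem 3 verbatim with $\mathbf{g}$ in place of $\mathbf{f}$: a positivity argument (strong connectivity of $G_p$ together with (C$_1$), (C$_2$), (C$_4$)) gives $\mathbf{w}(t)>\mathbf{0}$ for $t>0$, and with $Z(\mathbf{w})=\max_k w_k/P_k^*$, $z(\mathbf{w})=\min_k w_k/P_k^*$, $V(\mathbf{w})=\max\{Z-1,0\}+\max\{1-z,0\}$, the concavity (C$_5$) and monotonicity (C$_4$) of $g_{k_0}$ yield $D^+Z\leq 0$ for $Z\geq 1$ and $D_+z\geq 0$ for $z\leq 1$ (both strict off the level $1$), so that $D^+V\leq 0$ with equality only at $\mathbf{w}=\mathbf{P}^*$. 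LaSalle's principle then delivers global attractivity on $(0,1]^N$, and Lemma 10 --- after shifting $\mathbf{P}^*$ to the origin and using that every trajectory stays in the bounded invariant set $\Omega$ --- transfers this to the perturbed system, giving $\mathbf{P}(t)\to\mathbf{P}^*$. Averaging the components finally gives $I(t)\to 0$ and $P(t)\to P^*$.

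The part I expect to be delicate is the Strauss--Yorke step rather than any single estimate. Because the limit system (8) also carries the origin as an equilibrium (unstable, as $s(\mathbf{Q}_2)>0$), $\mathbf{P}^*$ is a global attractor only on $(0,1]^N$ and not on all of $\mathbb{R}^N$; hence I must first use the hypothesis $\mathbf{P}\neq\mathbf{0}$, the positivity propagation, and the repelling nature of the origin to confine the perturbed trajectory to the basin of $\mathbf{P}^*$ before the limiting-equation comparison of Lemma 10 can legitimately be invoked.
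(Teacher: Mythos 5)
Your proposal is correct and takes essentially the same route as the paper: the paper in fact omits the proof of this theorem entirely, stating only that it is analogous to that of Theorem 3, and your argument is precisely that mirror-image construction (virus extinction via the comparison system and Theorem 2's Lyapunov analysis, then the limit system (8), Theorem 1(b), the max/min Lyapunov function with LaSalle, and the Strauss--Yorke lemma). Your closing observation---that Lemma 10 formally requires a global attractor for the limit system, so the perturbed trajectory must first be confined to the basin of $\mathbf{P}^*$ away from the repelling origin---is a genuine subtlety that the paper glosses over even in its written proof of Theorem 3, so flagging it strengthens rather than weakens your argument.
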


The argument for the theorem is analogous to that for Theorem 3 and hence is omitted. 

Finally, the following theorem gives a criterion for the global attractivity of the mixed equilibrium of the generic SIPS model.

\begin{thm}
	Consider model (4) with $\mathbf{g}=\mathbf{h}$. Suppose $s(\mathbf{Q}_2)>0$ and $
	s\left(\mathbf{Q}_1- (\mathbf{Q}_1+\mathbf{D}_{\gamma})diag\mathbf{P}^*-diag\mathbf{g}(\mathbf{P}^*)\right)>0$. 
	Then the mixed equilibrium $\mathbf{E}_m$ is the attractor for $\{(\mathbf{I},\mathbf{P})\in\Omega:\mathbf{I}\neq \mathbf{0},\mathbf{P}\neq \mathbf{0}\}$. Hence, if $I(0) \neq 0$ and $P(0) \neq 0$, then $I(t) \rightarrow I^{**}$ and $P(t) \rightarrow P^{**}$ as $t \rightarrow \infty$.
\end{thm}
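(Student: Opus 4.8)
The key observation is that the hypothesis $\mathbf{g}=\mathbf{h}$ \emph{decouples} the patch dynamics from the virus dynamics. Substituting $\mathbf{h}=\mathbf{g}$ into the $\mathbf{P}$-block of (5), the two terms $diag\mathbf{I}(t)\,\mathbf{g}(\mathbf{P}(t))$ of opposite sign cancel, leaving the autonomous subsystem
\[
\frac{d\mathbf{P}(t)}{dt}=(\mathbf{E}_N-diag\mathbf{P}(t))\mathbf{g}(\mathbf{P}(t))-\mathbf{D}_{\alpha}\mathbf{P}(t),
\]
which is exactly the limit system (8). Since $s(\mathbf{Q}_2)>0$ and $\mathbf{P}(0)\neq\mathbf{0}$, the positivity-plus-Lyapunov argument carried out for system (7) in the proof of Theorem 3 applies verbatim here (with $\mathbf{f},\mathbf{D}_{\gamma},\mathbf{Q}_1,\mathbf{I}^*$ replaced by $\mathbf{g},\mathbf{D}_{\alpha},\mathbf{Q}_2,\mathbf{P}^*$), and yields $\mathbf{P}(t)\to\mathbf{P}^*$ as $t\to\infty$, where $\mathbf{P}^*$ is the unique positive equilibrium furnished by Theorem 1(b). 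In particular this gives $\mathbf{P}^{**}=\mathbf{P}^*$.

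With $\mathbf{P}(t)\to\mathbf{P}^*$ in hand, I would treat the $\mathbf{I}$-block of (5) as an asymptotically autonomous system. Writing $\mathbf{P}(t)=\mathbf{P}^*+(\mathbf{P}(t)-\mathbf{P}^*)$ with $\mathbf{P}(t)-\mathbf{P}^*\to\mathbf{0}$, the $\mathbf{I}$-equation is a time-dependent perturbation, vanishing as $t\to\infty$, of the autonomous limit system
\[
\frac{d\mathbf{I}(t)}{dt}=(\mathbf{E}_N-diag\mathbf{I}(t)-diag\mathbf{P}^*)\mathbf{f}(\mathbf{I}(t))-diag\mathbf{I}(t)\,\mathbf{h}(\mathbf{P}^*)-\mathbf{D}_{\gamma}\mathbf{I}(t)=:\mathbf{F}(\mathbf{I}(t)).
\]
Under the present hypotheses Theorem 1(c) already furnishes the mixed equilibrium $\mathbf{E}_m$, whose $\mathbf{P}$-component equals $\mathbf{P}^*$ and whose $\mathbf{I}$-component $\mathbf{I}^{**}$ (with $0<I_i^{**}<1-P_i^*$) is precisely the unique interior equilibrium of $\mathbf{F}$. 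Because $\Omega$ is positively invariant, every solution of the full $\mathbf{I}$-equation is bounded on $[0,\infty)$, so by Lemma 10 (Strauss--Yorke) it suffices to prove that $\mathbf{I}^{**}$ is a global attractor of $\dot{\mathbf{I}}=\mathbf{F}(\mathbf{I})$ on $\{\mathbf{I}(0)\neq\mathbf{0}\}$.

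The remaining and most substantial step is this global attractivity, which I would obtain by transcribing the Lyapunov construction of Theorem 3. Strong connectivity of $G_v$ together with $\mathbf{f}(\mathbf{0})=\mathbf{0}$ (condition (C$_2$)) keeps any solution with $\mathbf{I}(0)>\mathbf{0}$ strictly positive for $t>0$, the analogue of Claim 1. With
\[
Z(\mathbf{I})=\max_{1\leq k\leq N}\frac{I_k}{I_k^{**}},\qquad z(\mathbf{I})=\min_{1\leq k\leq N}\frac{I_k}{I_k^{**}},\qquad V(\mathbf{I})=\max\{Z-1,0\}+\max\{1-z,0\},
\]
I would prove $D^+Z\leq0$ when $Z\geq1$ (strict when $Z>1$) and $D_+z\geq0$ when $z\leq1$ (strict when $z<1$), so that $D^+V\leq0$ with equality only at $\mathbf{I}^{**}$, and LaSalle yields convergence. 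At a maximizing index $k_0$ the key computation reads
\[
\frac{I_{k_0}^{**}}{I_{k_0}}\dot{I}_{k_0}=(1-I_{k_0}-P_{k_0}^*)\frac{I_{k_0}^{**}}{I_{k_0}}f_{k_0}(\mathbf{I})-I_{k_0}^{**}h_{k_0}(\mathbf{P}^*)-\gamma_{k_0}I_{k_0}^{**};
\]
when $Z\geq1$ the concavity (C$_5$) and monotonicity (C$_4$) of $f_{k_0}$ give $\frac{1}{Z}f_{k_0}(\mathbf{I})\leq f_{k_0}(\frac{1}{Z}\mathbf{I})\leq f_{k_0}(\mathbf{I}^{**})$ (since $\frac{1}{Z}\mathbf{I}\leq\mathbf{I}^{**}$), while $1-I_{k_0}-P_{k_0}^*\leq1-I_{k_0}^{**}-P_{k_0}^*$, so the right-hand side collapses to the equilibrium identity $F_{k_0}(\mathbf{I}^{**})=0$ and is therefore $\leq0$. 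The estimate for $z$ is symmetric.

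Combining $\mathbf{I}(t)\to\mathbf{I}^{**}$ (via Lemma 10) with $\mathbf{P}(t)\to\mathbf{P}^*=\mathbf{P}^{**}$ shows that $\mathbf{E}_m$ attracts every orbit issuing from $\{(\mathbf{I},\mathbf{P})\in\Omega:\mathbf{I}\neq\mathbf{0},\,\mathbf{P}\neq\mathbf{0}\}$, whence $I(t)\to I^{**}$ and $P(t)\to P^{**}$. I expect the genuinely delicate point to be the descent estimate above: in contrast to Theorem 3, the limit field $\mathbf{F}$ carries the extra saturation factor $1-P_{k_0}^*$ and the damping term $-I_{k_0}h_{k_0}(\mathbf{P}^*)$, so one must verify that the concavity bound still lands \emph{exactly} on $F_{k_0}(\mathbf{I}^{**})=0$ and that enough strictness survives for the LaSalle step. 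Everything else rides on the first paragraph: the identity $\mathbf{g}=\mathbf{h}$ is precisely what renders the $\mathbf{P}$-equation autonomous, enabling both the limit-system reduction and the conclusion $\mathbf{P}^{**}=\mathbf{P}^*$.
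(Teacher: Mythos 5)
Your proposal is correct and follows essentially the same route as the paper's proof: exploiting $\mathbf{g}=\mathbf{h}$ to make the $\mathbf{P}$-equations autonomous, invoking the Theorem 3/4 argument to get $\mathbf{P}(t)\rightarrow\mathbf{P}^*$, passing to the same limit system for $\mathbf{I}$ whose equilibrium $\mathbf{I}^{**}$ is supplied by Theorem 1(c), and concluding via the Strauss--Yorke lemma together with the $Z$, $z$, $V$ Lyapunov--LaSalle construction. In fact, you spell out the descent estimate at the maximizing index more explicitly than the paper, which simply declares this step ``analogous to that of system (7).''
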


\begin{IEEEproof}
	Let $(\mathbf{I}(t)^T, \mathbf{P}(t)^T)^T$ be a solution
	to model (4) with $\mathbf{I}(0)\neq\mathbf{0}$ and $\mathbf{P}(0)\neq\mathbf{0}$. As $\mathbf{g}=\mathbf{h}$, the last $N$ equations of model (4) reduce to 
	$\frac{d\mathbf{P}(t)}{dt}=(\mathbf{E}_N-diag(\mathbf{P}(t))g(\mathbf{P}(t))-\mathbf{D}_{\alpha}\mathbf{P}(t)$.
	\noindent Similar to the argument for Theorem 4, we get $\mathbf{P}(t)\rightarrow\mathbf{P}^*$. Consider the following limit system of model (4). 
	\[
	\begin{split}
	\frac{d\mathbf{z}(t)}{dt}=&(\mathbf{E}_N-diag(\mathbf{P}^*)-diag(\mathbf{z}(t)))\mathbf{f}(\mathbf{z}(t))\\ \nonumber
	&-(diag\mathbf{h}(\mathbf{P}^*)+\mathbf{D}_{\gamma})\mathbf{z}(t), \mathbf{z}(0) = \mathbf{I}(0).
	\end{split}
	\]
	By Theorem 1, $\mathbf{I}^{**}$ is an equilibrium of the system. The subsequent treatment is analogous to that of system (7). 
\end{IEEEproof}

The ultimate goal to achieve by studying the dynamics of the generic SIPS model is to help us work out effective strategies of wiping out the digital viruses in the population under consideration. By combining Theorem 2 with Theorem 4, we find that the viruses in the population would die out if $s(\mathbf{Q}_1) \leq 0$. Hence, the key to the virus suppression is to reduce $s(\mathbf{Q}_1)$. The next theorem reveals the relationship between $s(\mathbf{Q}_1)$ and the spectral radius of another matrix.

\begin{thm}
	Consider model (4) and let $\overline{\gamma} = \max_{1 \leq i \leq N}\gamma_i$. Then 
	$
	s(\mathbf{Q}_1)=\rho(\mathbf{Q}_1+\overline{\gamma}\mathbf{E}_N)-\overline{\gamma}.
	$
\end{thm}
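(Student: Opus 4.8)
The plan is to reduce the statement to two elementary spectral facts and one structural observation: that a scalar shift $\mathbf{A}\mapsto\mathbf{A}+c\mathbf{E}_N$ rigidly translates the whole spectrum, that for a (nonnegative, irreducible) Metzler matrix the maximum real part of an eigenvalue coincides with the spectral radius, and that $\mathbf{Q}_1+\overline{\gamma}\mathbf{E}_N$ is in fact nonnegative. The bridge is the last point, so I would prove it first.

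First I would check that $\mathbf{Q}_1+\overline{\gamma}\mathbf{E}_N$ is a nonnegative matrix. Since $\mathbf{Q}_1=\frac{\partial \mathbf{f}(\mathbf{0})}{\partial \mathbf{x}}-\mathbf{D}_{\gamma}$ and the Jacobian $\frac{\partial \mathbf{f}(\mathbf{0})}{\partial \mathbf{x}}$ has nonnegative entries by the monotonicity condition (C$_4$), the off-diagonal entries of $\mathbf{Q}_1$ are already nonnegative, i.e. $\mathbf{Q}_1$ is Metzler, while its $i$-th diagonal entry equals $\frac{\partial f_i(\mathbf{0})}{\partial x_i}-\gamma_i\geq -\gamma_i\geq -\overline{\gamma}$. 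Adding $\overline{\gamma}$ to each diagonal entry therefore renders every entry nonnegative. Because adding a diagonal matrix does not disturb the off-diagonal zero pattern, $\mathbf{Q}_1+\overline{\gamma}\mathbf{E}_N$ inherits the irreducibility of $\mathbf{Q}_1$.

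Next I would establish $s(\mathbf{Q}_1+\overline{\gamma}\mathbf{E}_N)=\rho(\mathbf{Q}_1+\overline{\gamma}\mathbf{E}_N)$. Being nonnegative, $\mathbf{Q}_1+\overline{\gamma}\mathbf{E}_N$ is in particular Metzler, and it is irreducible, so Lemma 5 furnishes a positive eigenvector $\mathbf{x}$ belonging to $s(\mathbf{Q}_1+\overline{\gamma}\mathbf{E}_N)$. Since the matrix is nonnegative and possesses the positive eigenvector $\mathbf{x}$, Lemma 4 guarantees that $\rho(\mathbf{Q}_1+\overline{\gamma}\mathbf{E}_N)$ is an eigenvalue to which $\mathbf{x}$ also belongs. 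A nonzero vector belongs to exactly one eigenvalue, so the two eigenvalues must coincide, giving the desired equality $s=\rho$ for the shifted matrix.

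Finally I would invoke the spectral-shift identity: every eigenvalue of $\mathbf{Q}_1+\overline{\gamma}\mathbf{E}_N$ has the form $\lambda+\overline{\gamma}$ with $\lambda$ an eigenvalue of $\mathbf{Q}_1$, whence $s(\mathbf{Q}_1+\overline{\gamma}\mathbf{E}_N)=s(\mathbf{Q}_1)+\overline{\gamma}$. Chaining the three steps yields $s(\mathbf{Q}_1)+\overline{\gamma}=s(\mathbf{Q}_1+\overline{\gamma}\mathbf{E}_N)=\rho(\mathbf{Q}_1+\overline{\gamma}\mathbf{E}_N)$, which rearranges to the claim. This argument carries no serious obstacle; the only point demanding genuine care is the $s=\rho$ step, since the identity $s(\mathbf{A})=\rho(\mathbf{A}+c\mathbf{E}_N)-c$ is false for arbitrary $\mathbf{A}$ and is legitimate here only because the choice $c=\overline{\gamma}$ makes the shifted matrix nonnegative, so that Perron--Frobenius (via Lemmas 4 and 5) applies.
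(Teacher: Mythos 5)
Your proposal is correct and takes essentially the same approach as the paper: both arguments rest on Lemma 5 (a positive eigenvector for an irreducible Metzler matrix) and Lemma 4 (a nonnegative matrix with a positive eigenvector has that eigenvector belonging to its spectral radius), combined with the fact that adding $\overline{\gamma}\mathbf{E}_N$ rigidly shifts the spectrum. The only cosmetic difference is that the paper applies Lemma 5 to $\mathbf{Q}_1$ itself and then shifts the eigenvector equation, whereas you apply it to $\mathbf{Q}_1+\overline{\gamma}\mathbf{E}_N$ and shift back; your explicit check that the shifted matrix is nonnegative (and irreducible) makes rigorous a point the paper leaves implicit when it invokes Lemma 4.
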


\begin{IEEEproof}
	It follows from Lemma 5 that $\mathbf{Q}_1$ has a positive eigenvector $\mathbf{x}$ belonging to the eigenvalue $s(\mathbf{Q}_1)$. Thus,
	$(\mathbf{Q}_1+\overline{\gamma} \mathbf{E}_N)\mathbf{x}=[s(\mathbf{Q}_1)+\overline{\gamma}]\mathbf{x}$. 
	That is, $\mathbf{x}$ is an eigenvector of $\mathbf{Q}_1+\overline{\gamma}\mathbf{E}_N$ belonging to the eigenvalue $s(\mathbf{Q}_1)+\overline{\gamma}$. The declared equation follows from Lemma 4. 
\end{IEEEproof}

This theorem manifests that reducing $s(\mathbf{Q}_1)$ is equivalent to reducing $\rho(\mathbf{Q}_1+\overline{\gamma}\mathbf{E}_N)$. By Lemma 3, this can be achieved by diminishing some entries of $\mathbf{Q}_1$. For instance, deleting a number of key edges from $G_v$ helps suppress computer viruses.

By Theorem 4, enhancing $s(\mathbf{Q}_2)$ benefits the prevention of electronic infections. Similarly, we have the following result.

\begin{thm}
	Consider model (4) and let $\overline{\alpha} = \max_{1 \leq i \leq N}\alpha_i$. Then 
	$
	s(\mathbf{Q}_2)=\rho(\mathbf{Q}_2+\overline{\alpha}\mathbf{E}_N)-\overline{\alpha}.
	$
\end{thm}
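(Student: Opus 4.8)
The plan is to mirror the proof of Theorem 6 verbatim, since $\mathbf{Q}_2$ stands in exactly the same relation to the patch dynamics as $\mathbf{Q}_1$ does to the virus dynamics. Recall from the discussion preceding Eq.~(6) that, because $G_p$ is strongly connected, $\mathbf{Q}_2$ is an irreducible Metzler matrix. Hence Lemma 5 applies and supplies a positive eigenvector $\mathbf{x}$ of $\mathbf{Q}_2$ belonging to the eigenvalue $s(\mathbf{Q}_2)$.

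First I would perform the scalar shift: adding $\overline{\alpha}\mathbf{E}_N$ to $\mathbf{Q}_2$ and acting on $\mathbf{x}$ gives
\[
(\mathbf{Q}_2+\overline{\alpha}\mathbf{E}_N)\mathbf{x}=[s(\mathbf{Q}_2)+\overline{\alpha}]\mathbf{x},
\]
so $\mathbf{x}$ is a positive eigenvector of the shifted matrix $\mathbf{Q}_2+\overline{\alpha}\mathbf{E}_N$ belonging to eigenvalue $s(\mathbf{Q}_2)+\overline{\alpha}$.

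The one step deserving explicit justification — and the reason the definition $\overline{\alpha}=\max_{1\le i\le N}\alpha_i$ is used rather than an arbitrary $\alpha_i$ — is that $\mathbf{Q}_2+\overline{\alpha}\mathbf{E}_N$ is a \emph{nonnegative} matrix, which is precisely the hypothesis required to invoke Lemma 4. The off-diagonal entries are already nonnegative since $\mathbf{Q}_2$ is Metzler, and each diagonal entry becomes $\frac{\partial g_i(\mathbf{0})}{\partial x_i}-\alpha_i+\overline{\alpha}\geq\frac{\partial g_i(\mathbf{0})}{\partial x_i}\geq 0$, using $\overline{\alpha}\geq\alpha_i$ together with the monotonicity condition (C$_4$).

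Finally I would apply Lemma 4 to $\mathbf{Q}_2+\overline{\alpha}\mathbf{E}_N$: being nonnegative and possessing the positive eigenvector $\mathbf{x}$, it has $\mathbf{x}$ belonging to its spectral radius $\rho(\mathbf{Q}_2+\overline{\alpha}\mathbf{E}_N)$. Since the eigenvalue associated with a fixed eigenvector is unique, this forces $s(\mathbf{Q}_2)+\overline{\alpha}=\rho(\mathbf{Q}_2+\overline{\alpha}\mathbf{E}_N)$, i.e. $s(\mathbf{Q}_2)=\rho(\mathbf{Q}_2+\overline{\alpha}\mathbf{E}_N)-\overline{\alpha}$, as claimed. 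I do not anticipate a genuine obstacle here: the argument is a direct transcription of Theorem 6 with $(\mathbf{Q}_1,\overline{\gamma})$ replaced by $(\mathbf{Q}_2,\overline{\alpha})$, and the only place warranting attention is the nonnegativity check above.
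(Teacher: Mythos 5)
Your proposal is correct and follows exactly the route the paper intends: Theorem 7 is stated in the paper without proof as the "similar" analogue of Theorem 6, whose proof is precisely your argument with $(\mathbf{Q}_1,\overline{\gamma})$ replaced by $(\mathbf{Q}_2,\overline{\alpha})$ — Lemma 5 for the positive eigenvector, the scalar shift, then Lemma 4. Your explicit check that $\mathbf{Q}_2+\overline{\alpha}\mathbf{E}_N$ is nonnegative (needed to invoke Lemma 4) is a detail the paper leaves implicit even in Theorem 6, and it is verified correctly.
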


This theorem manifests that enhancing $s(\mathbf{Q}_2)$ is equivalent to enhancing $\rho(\mathbf{Q}_2+\overline{\alpha}\mathbf{E}_N)$. By Lemma 3, this can be achieved by enlarging some entries of $\mathbf{Q}_2$. For instance, adding a number of new edges in $G_p$ helps protect against digital infections.

\section{Accuracy of the linear SIPS model}

As was mentioned in Section 2, the exact SIPS model accurately captures the average interacting dynamics between viruses and patches. A question arises naturally: under what conditions does the linear SIPS model satisfactorily capture this dynamics? This section is intended to empirically give an answer to the question.

%\subsection{Numerical solution of the exact SIPS model}

For the comparison purpose, we need to numerically solve a number of specific $3^N$-dimensional exact SIPS models, because their closed-form solutions are far beyond our reach. Based on the standard Gillespie algorithm for numerically solving continuous-time Markov chain models \cite{Gillespie1977}, we develop a numerical algorithm for solving the exact SIPS model, which is formulated as the following EXACT-SIPS algorithm. In our experiments, the number of sample paths in the algorithm is set to be $M = 10^{4}$.

\begin{algorithm}[!t]
	\caption{EXACT-SIPS}
	\label{alg1}
	\hspace*{0.02in} {\bf Input} \\
	$\mathbf{M}_{\beta} = \left(\beta_{ij}\right)_{N \times N}$;
	$\mathbf{M}_{\delta_1} = \left(\delta^{(1)}_{ij}\right)_{N \times N}$;
	$\mathbf{M}_{\delta_2} = \left(\delta^{(2)}_{ij}\right)_{N \times N}$; \\
	$\mathbf{D}_{\gamma} = diag\left(\gamma_i\right)$;
	$\mathbf{D}_{\alpha} = diag\left(\alpha_i\right)$; \\
	$M$: number of sample paths; \\
	$\mathbf{x}_0$: common initial state of all sample paths; \\
	$T$: common terminating time of all sample paths; \\
	\hspace*{0.02in} {\bf Output} \\
	$\overline{\mathbf{x}}(t), 0 \leq t \leq T$: averaged sample paths; \\ $\overline{I}(t), 0 \leq t \leq T$: averaged fraction of infected nodes over all sample paths; \\
	$\overline{P}(t), 0 \leq t \leq T$: averaged fraction of patched nodes over all sample paths.
	%\begin{multicols}{2}
	\begin{algorithmic}[1]
		
		\FOR{$m := 1$ to $M$}
		\STATE $t := 0; \quad \mathbf{x}_m(0) := \mathbf{x}_0$;
		\WHILE{$t\leq T$}
		\STATE // $\lambda_i$ is the total rate of state transitions of node $i$;
		\STATE calculate $\lambda_i$, $1 \leq i \leq N$; 
		\STATE // $\lambda_{tot}$ is the total rate of state transitions of all nodes;
		\STATE calculate $\lambda_{tot} = \sum_{i=1}^N \lambda_i$;
		\STATE // $\Delta$ is the time interval from the current state transition to the next state transition, which follows the exponential distribution with mean $\frac{1}{\lambda_{tot}}$;
		\STATE determine $\Delta$;
		\IF{$t + \Delta > T$} 
		\STATE $\mathbf{x}_m(s) := \mathbf{x}_m(t), t < s \leq T$;          
		\ELSE
		\STATE // $j$ is the node whose state transits at time $t + \Delta$, which follows the discrete distribution $\Pr\{\text{node } i \}=\frac{\lambda_i}{\lambda_{tot}}$, $1 \leq i \leq N$;
		\STATE choose $j$;
		\STATE // $k$ is the state to which node $j$ transits at time $t + \Delta$, which follows the discrete distribution $\Pr\{\text{state } l\}=\frac{\text{rate at which node } j \text{ transits to state } l}{\lambda_j}$;
		\STATE choose $k$;
		\STATE $\mathbf{x}_m(s) := \mathbf{x}_m(t), t < s < t + \Delta$;
		\STATE $x_{mj}(t+\Delta) := k$;
		\STATE $x_{mi}(t+\Delta) := x_i(t), 1 \leq i \leq N, i \neq j$;
		\STATE $t := t+ \Delta$;
		\ENDIF
		\ENDWHILE{}
		\ENDFOR
		\STATE $\overline{\mathbf{x}}(t) := \frac{\sum_{m=1}^M \mathbf{x}_m(t)}{M}, 0 \leq t \leq T$;
		\STATE $\overline{I}(t) : = \frac{\sum_{m=1}^M \sum_{i=1}^N 1_{\{x_{mi}(t) = 1\}}}{MN}, 0 \leq t \leq T$;
		\STATE $\overline{P}(t) : = \frac{\sum_{m=1}^M \sum_{i=1}^N 1_{\{x_{mi}(t) = 2\}}}{MN}, 0 \leq t \leq T$.
	\end{algorithmic}
	%\begin{multicols}
\end{algorithm}

As the initial state in the following experiments, a randomly chosen node is set as infected, a randomly chosen node is set as patched, and set all the remaining nodes as susceptible. 

%\subsection{Experiments on a scale-free network}

Scale-free networks are a large class of networks having widespread applications \cite{Barabasi1999}. Take a randomly generated scale-free network with 100 nodes as the virus-spreading network as well as the patch-distributing network. By taking 1025 random combinations of the parameters, we get 1025 pairs of linear and exact SIPS models, among which there are 36 pairs that satisfy the condition in Theorem 2, 39 pairs that satisfy the condition in Theorem 3, 156 pairs that satisfy the condition in Theorem 4, and 794 pairs that satisfy neither of them. After observations, we find that for each of the four collections of pairs, the way that the dynamics of a nonlinear SIPS model deviates from that of the corresponding exact SIPS model is similar. Figs. 4-7 give the comparison results of two pairs for each collection, respectively.

\begin{figure}[!t]
	\centering
	\includegraphics[width=3in]{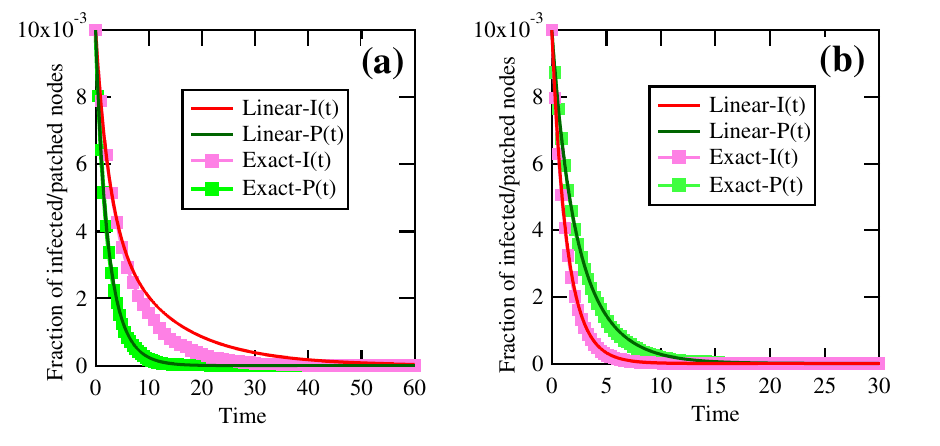}
	\caption{Comparison results for two pairs in the first collection.}
\end{figure}

\begin{figure}[!t]
	\centering
	\includegraphics[width=3in]{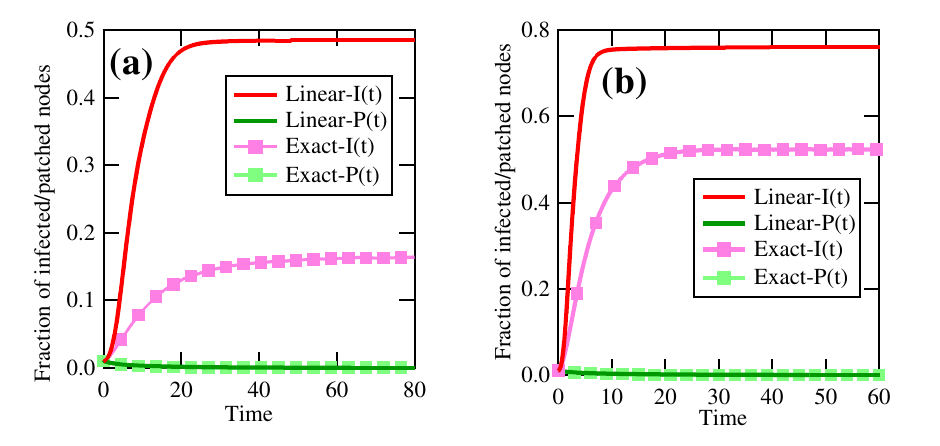}
	\caption{Comparison results for two pairs in the second collection.}
\end{figure}

\begin{figure}[!t]
	\centering
	\includegraphics[width=3in]{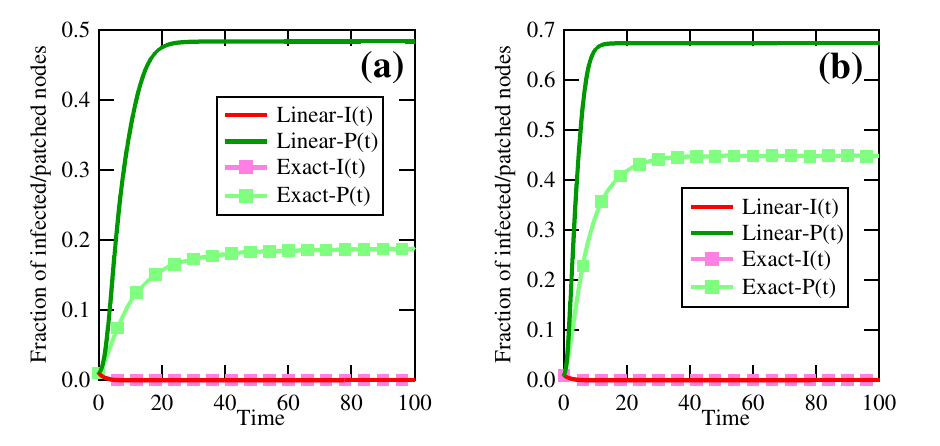}
	\caption{Comparison results for two pairs in the third collection.}
\end{figure}

\begin{figure}[!t]
	\centering
	\includegraphics[width=3in]{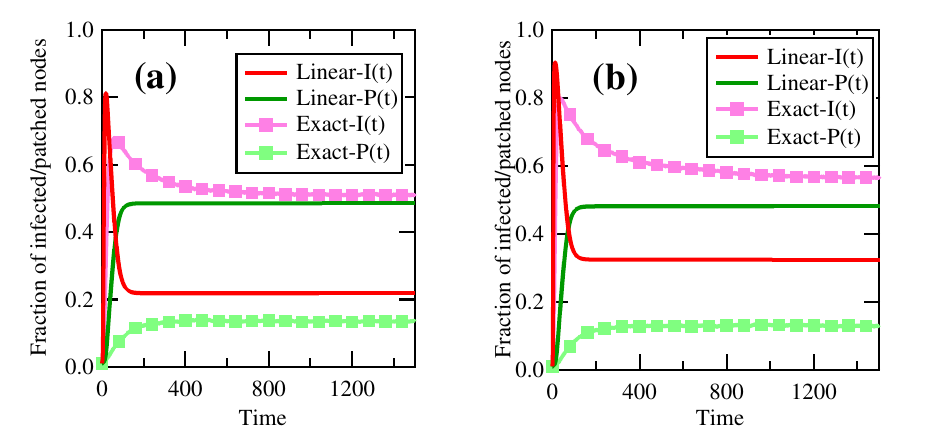}
	\caption{Comparison results for two pairs in the fourth collection.}
\end{figure}

%\subsection{Experiments on a small-world network}

Small-world networks are another large class of networks having widespread applications \cite{Watts1998}. Take a randomly generated small-word network with 100 nodes as the virus-spreading network and the patch-distributing network. By taking 1025 random combinations of the parameters, we get 1025 pairs of linear and exact SIPS models, among which there are 64 pairs that satisfy the condition in Theorem 2, 72 pairs that satisfy the condition in Theorem 3, 192 pairs that satisfy the condition in Theorem 4, and 697 pairs that satisfy neither of them. After observations, we find that for each of the four collections of pairs, the way that the dynamics of a nonlinear SIPS model deviates from that of the corresponding exact SIPS model is similar. Figs. 8-11 give the comparison results of two pairs for each collection, respectively.

\begin{figure}[!t]
	\centering
	\includegraphics[width=3in]{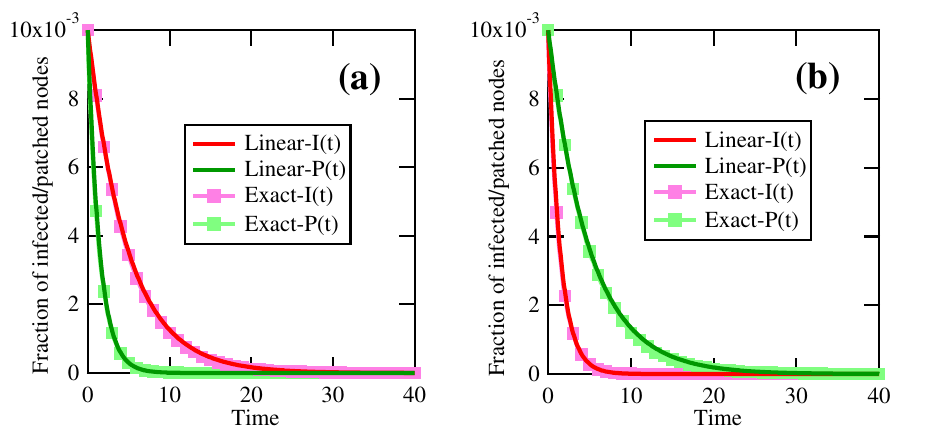}
	\caption{Comparison results for two pairs in the first collection.}
\end{figure}

\begin{figure}[!t]
	\centering
	\includegraphics[width=3in]{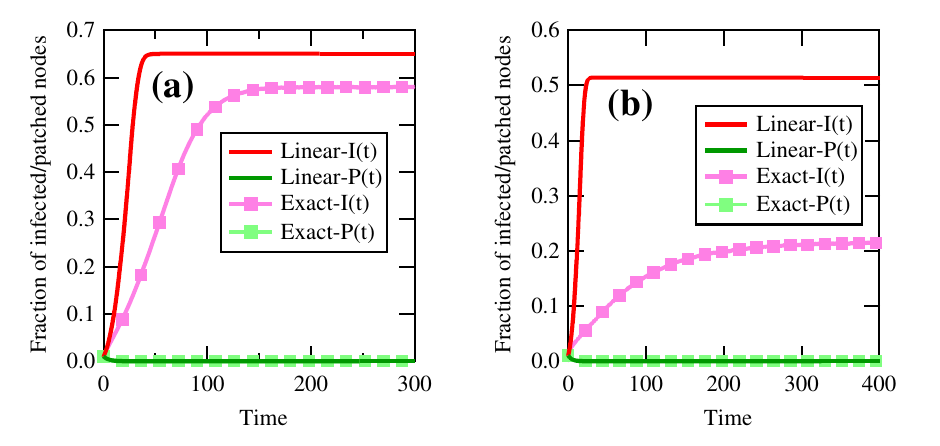}
	\caption{Comparison results for two pairs in the second collection.}
\end{figure}

\begin{figure}[!t]
	\centering
	\includegraphics[width=3in]{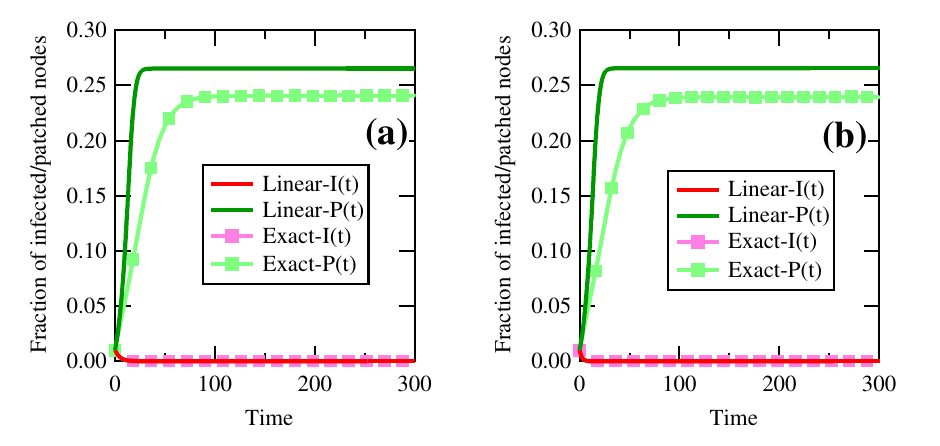}
	\caption{Comparison results for two pairs in the third collection.}
\end{figure}

\begin{figure}[!t]
	\centering
	\includegraphics[width=3in]{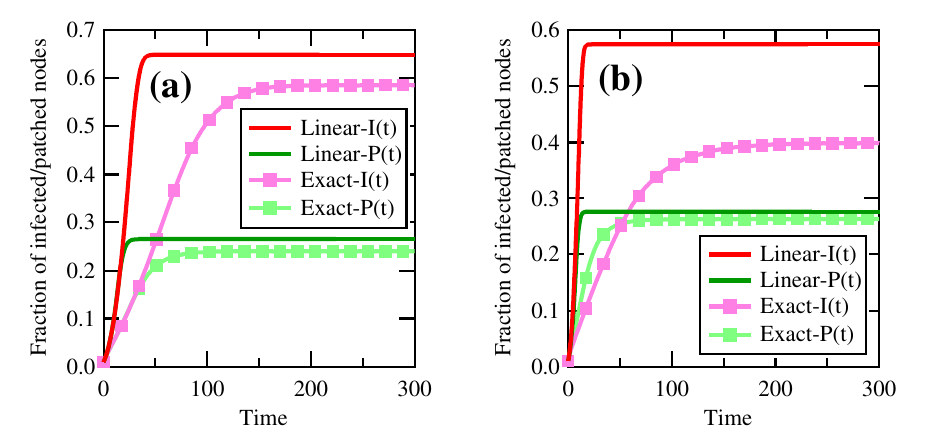}
	\caption{Comparison results for two pairs in the fourth collection.}
\end{figure}

%\subsection{Brief summary}

The following conclusions can be drawn from the previous examples.

\begin{enumerate}
	\item[(a)] If $s(\mathbf{Q}_1) \leq 0$, then the linear SIPS model accurately captures the average extinction process of the viruses.
	\item[(b)] If $s(\mathbf{Q}_1) \leq 0$ and $s(\mathbf{Q}_2) \leq 0$, or if $s(\mathbf{Q}_1) > 0$ and $s(\mathbf{Q}_4) \leq 0$, then the linear SIPS model accurately captures the average extinction process of the patches.
	\item[(c)] If $s(\mathbf{Q}_1) > 0$ and $s(\mathbf{Q}_4) \leq 0$, or if $\mathbf{g} = \mathbf{h}$, $s(\mathbf{Q}_2) > 0$, and $ 
	s\left(\mathbf{Q}_1- (\mathbf{Q}_1+\mathbf{D}_{\gamma})diag\mathbf{P}^*-diag\mathbf{g}(\mathbf{P}^*)\right)>0, $
	then the linear SIPS model is not capable of accurately capturing the average evolution process of the viruses.
	\item[(d)] If $s(\mathbf{Q}_1) \leq 0$ and $s(\mathbf{Q}_2) > 0$, or if $\mathbf{g} = \mathbf{h}$, $s(\mathbf{Q}_2) > 0$, and $ 
	s\left(\mathbf{Q}_1- (\mathbf{Q}_1+\mathbf{D}_{\gamma})diag\mathbf{P}^*-diag\mathbf{g}(\mathbf{P}^*)\right)>0, $ 
	then the linear SIPS model is not capable of accurately capturing the average evolution process of the patches.
\end{enumerate}

In the case where the linear SIPS model works well, it can be employed to quickly predict the average evolution process of the viruses or/and patches in a population. 

In the case where the linear SIPS model doesn't work well, we have to resort to a generic SIPS model with nonlinear infecting/patching rates to achieve the goal of accurate prediction. In this case, it is critical to develop a methodology of choosing proper infecting/patching rates.

\section{Concluding remarks}

To accurately evaluate the performance of the decentralized patch distribution mechanism, a new virus-patch interacting model, which is known as the generic SIPS model, has been proposed. This model subsumes the linear SIPS model. Under the generic SIPS model, a set of criteria for the final extinction or/and long-term survival of viruses or/and patches have been presented. Some conditions for the linear SIPS model to accurately capture the average dynamics of the virus-patch interaction have been empirically found. We believe this work takes an important step towards the accurate assessment of the decentralized patch distribution mechanism.

Towards this direction, there are a number of key problems that are worth study. The obtained results should be applied to some specific networks such as the scale-free networks and the small-world networks. The combined influence of the virus-propagating network and the patch-distributing network on the performance of the decentralized patch distribution scheme should be investigated. In the case where the linear SIPS model doesn't work well, it is critical to develop a methodology of choosing proper infecting/patching rates so that the resulting generic SIPS model accurately predicts the average dynamics of the virus-patch interaction. The network bandwidth consumed by the decentralized patch distribution mechanism should be taken into consideration. Finally, the theory developed in this work may be applied to other areas such as the positive/negative information interactions \cite{WenS2015, Haghighi2016}.

% use section* for acknowledgment
\section*{Acknowledgment}

This work is supported by National Natural Science Foundation of China (Grant No. 61572006).

% Can use something like this to put references on a page
% by themselves when using endfloat and the captionsoff option.
\ifCLASSOPTIONcaptionsoff
  \newpage
\fi

\end{document}